\newcommand{\paper}[1]{#1}
\newcommand{\techrep}[1]{\ignore{#1}}
\newtheorem{theorem}{Theorem}
\newtheorem*{theorem*}{Theorem}
\newtheorem{lemma}{Lemma}
\newtheorem{corollary}{Corollary}
\newtheorem{definition}{Definition}
\newtheorem{proposition}{Proposition}
\newtheorem{remark}{Remark}
\long\def\ignore#1{\relax}
\newcommand{\pierre}[1]{\red{#1}}
\newcommand{\fnsz}{\footnotesize}
\newcommand{\ie}{\textit{i.e.}\ }
\newcommand{\eg}{\textit{e.g.},\ }
\newcommand{\wrt}{w.r.t.\ }
\newcommand{\cf}{\textit{cf.}\ }
\newcommand{\resp}{resp.\ }
\newcommand{\leqs}{\leqslant}
\newcommand{\geqs}{\geqslant}
\newcommand{\fublainv}[1]{}
\newcommand{\fubla}[1]{\pierre{blahblahblah}}
\newcommand{\furef}[1]{\red{ref}}
\newcommand{\fucite}[1]{\red{cite}}
\renewcommand{\l}{\ell}
\colorlet{RED}{red} %%% essai
\newcommand{\ct}{\mathord{\cdot}}
\newcommand{\set}[1]{\{ #1 \}}
\newcommand{\mult}[1]{ [ #1 ]}
\newcommand{\ovl}[1]{\overline{#1}}
\newcommand{\eset}{\emptyset}
\newcommand{\emul}{\mult{\,}}
\newcommand{\tv}{o}
\newcommand{\inter}{\wedge}
\newcommand{\rstr}[1]{|_{#1}}
\newcommand{\omin}{\ominus}
\newcommand{\phd}{\phantom{.}}
\newcommand{\subeq}{\subseteq}
\newcommand{\bbA}{\mathbb{A}}
\newcommand{\bbAa}{\bbA_{a}}
\newcommand{\bbB}{\mathbb{B}}
\newcommand{\bbN}{\mathbb{N}}
\newcommand{\ttB}{\mathtt{B}}
\newcommand{\ttC}{\mathtt{C}}
\newcommand{\ttS}{\mathtt{S}}
\newcommand{\ttT}{\mathtt{T}}
\newcommand{\tti}{\mathtt{i}}
\newcommand{\ttj}{\mathtt{j}}
\newcommand{\ttw}{\mathtt{w}}
\newcommand{\scrD}{\mathscr{D}}
\newcommand{\scrR}{\mathscr{R}}
\newcommand{\scrRo}{\scrR_0}
\newcommand{\calC}{\mathcal{C}}
\newcommand{\calD}{\mathcal{D}}
\newcommand{\calT}{\mathcal{T}}
\newcommand{\secu}{^{\prime \prime}}
\newcommand{\ovla}{\ovl{a}}
\newcommand{\argN}{\bbN\setminus\set{0,\,1}} %% a eliminer et remplacer par \Nmzo
\newcommand{\Nmzo}{\bbN\setminus\set{0,1}}
\newcommand{\ttdom}{\mathtt{dom}}
\newcommand{\ttcodom}{\mathtt{codom}}
\newcommand{\tttr}{\mathtt{tr}}
\newcommand{\ttpos}{\mathtt{pos}}
\newcommand{\hlam}{h_{\lam}}
\newcommand{\ttax}{\mathtt{ax}}
\newcommand{\ttabs}{\mathtt{abs}}
\newcommand{\ttapp}{\mathtt{app}}
\newcommand{\ttAx}{\mathtt{Ax}}
\newcommand{\ttPol}{\mathtt{Pol}}
\newcommand{\ttTg}{\mathtt{Tg}}
\newcommand{\scrDw}{ \scrD_{\ttw}}
\newcommand{\scrRw}{ \scrR_{\ttw}}
\newcommand{\sysDOm}{\calD \Om} %% System DOmega de Krivine
\newcommand{\iI}{{i \in I}}
\newcommand{\kK}{{k \in K}}
\newcommand{\jJi}{{j\in J(i)}}
\newcommand{\lam}{\lambda}
\newcommand{\Lam}{\Lambda}
\newcommand{\rew}{\rightarrow}
\newcommand{\rewsh}{\mathord{\rightarrow}}
\newcommand{\bred}{\rightarrow_{\!\beta}}
\newcommand{\beq}{\equiv_{\!\beta}}
\newcommand{\lx}{\lam x}
\newcommand{\ly}{\lam y}
\newcommand{\lz}{\lam z}
\newcommand{\lxy}{\lx y}
\newcommand{\rsx}{r\subx{s}}
\newcommand{\est}{(\,)}
\newcommand{\al}{\alpha}
\newcommand{\Gam}{\Gamma}
\newcommand{\Gami}{\Gam_i}
\newcommand{\Del}{\Delta}
\newcommand{\Deli}{\Del_i}
\newcommand{\sig}{\sigma}
\newcommand{\sigi}{\sig_i}
\newcommand{\Sig}{\Sigma}
\newcommand{\epsi}{\varepsilon}
\newcommand{\fom}{f^{\omega}}
\newcommand{\erewa}{\emul\rew \tv}
\newcommand{\om}{\omega}
\newcommand{\Om}{\Omega}
\newcommand{\arob}{\symbol{64}}
\newcommand{\TermV}{\mathscr{V}}
\newcommand{\TypeV}{\mathscr{O}}
\newcommand{\ju}[2]{#1 \vdash #2}
\newcommand{\mtv}{\mult{\tv}}
\newcommand{\msig}{\mult{\sig}}
\newcommand{\mtau}{\mult{\tau}}
\newcommand{\msigi}{\mult{\sigi}_{\iI}}
\newcommand{\sSk}{(S_k)_{\kK}}
\newcommand{\sSksh}{(\!S_k\!)_{\!k\!\in\!K}} % supprimer de these !!!!
\newcommand{\Rom}{R_{\tv}}
\newcommand{\rhoo}{\rho_o}
\newcommand{\mrhooom}{\mult{\rhoo}_\om}
\newcommand{\Delf}{\Del_f}
\newcommand{\cu}{\mathtt{Y}}
\newcommand{\cuf}{\cu_f}
\newcommand{\cul}{\cu_{\lam}}
\newcommand{\zhnfo}{x\,t_1\ldots t_q}
\newcommand{\hnfo}{\lx_1\ldots x_p.\zhnfo}
\newcommand{\Pex}{P_{\mathtt{ex}}}
\newcommand{\Piex}{\Pi_{\mathtt{ex}}}
\newcommand{\Sex}{S_{\mathtt{ex}}}
\newcommand{\loplus}{\!\,^{\oplus}}
\newcommand{\roplus}{\!\!\,^{\oplus}}
\newcommand{\lomin}{\!\,^{\omin}}
\newcommand{\romin}{\!\!\!\,^{\omin}\,}
\newcommand{\loast}{\!\,^{\oast}}
\newcommand{\Trl}[1]{\mathtt{Tr}_{\lam}(#1)}
\newcommand{\dom}[1]{\ttdom(#1)}
\newcommand{\codom}[1]{\ttcodom(#1)}
\newcommand{\Ax}{\ttAx}
\newcommand{\Hd}{\ttTg}
\newcommand{\Pol}[1]{\ttPol(#1)}
\newcommand{\Polp}[1]{\ttPol'(#1)}
\newcommand{\pos}[1]{\ttpos(#1)}
\newcommand{\posp}[1]{\ttpos'(#1)}
\newcommand{\AxP}{\Ax^P}
\newcommand{\AxPa}{\AxP_a}
\newcommand{\trP}[1]{\tttr^P(#1)}  %% idem
\DeclareMathOperator{\Rt}{\mathtt{Rt}}
\newcommand{\Res}{\mathtt{Res}}
\newcommand{\Lamuuu}{\Lam^{111}}
\newcommand{\Lamzzu}{\Lam^{001}}
\newcommand{\subb}[2]{[#1/#2]}
\newcommand{\subx}[1]{\subb{#1}{x}}
\newcommand{\Lves}[1]{\mathtt{Lves}(#1)}
\newcommand{\LB}{\Lves{B}}  
\newcommand{\rewa}{\!\rightarrow_{\! \mathtt{asc}}\!} % ascendance
\newcommand{\rewp}{\!\rightarrow_{\! \mathtt{pi}\!}}
\newcommand{\rewto}{\!\rightarrow_{ \mathtt{t1}\!}}  
\newcommand{\rewtt}{\!\rightarrow_{ \mathtt{t2}\!}}
\newcommand{\rewr}{\!\rightarrow_{\!\abs}\!}  
\newcommand{\rewdown}{\!\rightarrow_{\! \mathtt{down}\!}}
\newcommand{\rewbullet}{\!\rightarrow_{\!\bullet}\!}
\newcommand{\wer}{\leftarrow}
\newcommand{\awer}{\,\!_{\asc}\!\!\!\wer}
\newcommand{\pwer}{\,\!_{\mathtt{pi}}\!\!\!\wer}
\newcommand{\iden}{\equiv}
\newcommand{\idena}{\iden_{\asc}}
\newcommand{\denot}[1]{{[} \! {[} #1 {]}  \! {]} }
\newcommand{\cdeg}{\mathtt{cdeg}}
\newcommand{\p}{\mathtt{p}}
\newcommand{\pepsi}{\p_{\epsi}}
\newcommand{\bbot}{\p_{\bot}}
\newcommand{\ax}{\ttax}
\newcommand{\abs}{\ttabs}
\newcommand{\app}{\ttapp}
\newcommand{\axw}{\ttax_{\mathtt{w}}}
\DeclareMathOperator{\QRes}{\mathtt{QRes}}
\newcommand{\red}[1]{\textcolor{red}{#1}}
\newcommand{\blue}[1]{\textcolor{blue}{#1}}
\newcommand{\green}[1]{\textcolor{green}{#1}}
\newcommand{\purple}[1]{\textcolor{purple}{#1}}
\definecolor{greenone}{RGB}{0,200,140}
\definecolor{darkred}{RGB}{220,00,00}
\newcommand{\rewb}[1]{\stackrel{ #1}{\rightarrow}}
\newcommand{\code}[1]{\lfloor #1 \rfloor}
\newcommand{\codep}[1]{\code{#1}'}
\newcommand{\trck}[1]{\,\red{[#1]}}
\newcommand{\posPr}[1]{~ \red{\langle\!#1 \!\rangle}}
\newcommand{\rs}{\mathtt{rs}}
\newcommand{\rstra}{\rstr{a}}
\newcommand{\tra}{t\rstra}
\newcommand{\tri}{\rhd}
\newcommand{\asc}{\mathtt{asc}}
\newcommand{\Asc}[1]{\mathtt{Asc}(#1)}
\newcommand{\Ascp}[1]{\mathtt{Asc}'(#1)}
\newcommand{\juRax}[2]{
\ju{#1:\mult{#2}}{#1:#2}
  }
\newcommand{\axxRo}[1]{\juRax{x}{#1}}
\newcommand{\juaxtt}{\axxRo{\tau}}
\newcommand{\juGtt}{\ju{\Gam}{t:\tau}}
\newcommand{\juxkT}{\ju{x:(k\cdot T)}{x:T}}
\newcommand{\juCtt}{\ju{C}{t:T}}
\newcommand{\juCtpt}{\ju{C}{t':T}}
\newcommand{\ttsupp}{\mathtt{supp}}
\newcommand{\ttbisupp}{\mathtt{bisupp}}
\newcommand{\supp}[1]{\ttsupp(#1)}
\newcommand{\bisupp}[1]{\ttbisupp(#1)}
\newcommand{\ttThr}{\mathtt{Thr}}
\newcommand{\Thr}[1]{\ttThr(#1)}
\newcommand{\ttthr}{\mathtt{thr}}
\newcommand{\thr}[1]{\ttthr(#1)}    % idem 
\newcommand{\thrp}[1]{\ttthr'(#1)}  %idem
\newcommand{\thbot}{\theta_{\bot}}
\newcommand{\thepsi}{\theta_{\epsi}}
\newcommand{\bbBm}{\bbB_{\min}}
\newcommand{\rewc}[1]{\stackrel{#1}{\rew}}    % consommation
\newcommand{\rrewc}[1]{\stackrel{#1}{\tilde{\rew}}} % fleche <- pr referents
\newcommand{\rrew}{\tilde{\rightarrow}} % sans argument
\newcommand{\werr}{\tilde{\wer}}
\newcommand{\rrewt}{\rrew_{\mathtt{t}}}  
\newcommand{\rrewto}{\rrew_{\mathtt{t1}}}  
\newcommand{\rrewtt}{\rrew_{\mathtt{t2}}}
\newcommand{\rrewr}{\rrew_{\abs}}
\newcommand{\rrewdown}{\rrew_{\!\mathtt{down}}}
\newcommand{\rrewb}{\tilde{\rew}_{\bullet}}
\newcommand{\IM}{\mathcal{I}}
\newcommand{\IMi}{\IM_i}
\newcommand{\Gamu}{\Gam_u}
\newcommand{\Gamt}{\Gam_t}
\newenvironment{proofsketch}
        {\bgroup \noindent \textit{Proof sketch.}}{\hfill \qedsymbol \egroup\medskip}
\newcommand{\trans}[3]{ %% translation de l'argument #3
  \begin{scope}[xshift=#1cm,yshift=#2cm]
     #3
  \end{scope}
}
\newcommand{\transh}[2]{ %% translation horizontale de l'argument #2
  \begin{scope}[xshift=#1cm]
     #2
  \end{scope}
}
\newcommand{\drawlabnode}[3]{  % cercle avec label #3 
     \trans{ #1 }{ #2 }{ 
       \draw (0,0) node {\small #3 };
       \draw (0,0) circle (0.23);
     }
}
\newcommand{\drawarob}[2]{   %% noeud contenant arob
  \trans{#1}{#2}{
       \draw (0,0) node {\small $\arob$};
       \draw (0,0) circle (0.23);
    }
}
\newcommand{\drawtri}[3]{   %% un triangle aux coord données de l'arg #3
  \trans{#1}{#2}{  
   \draw (0,0) --++ (0.55,0.9) --++ (-1.1,0) -- cycle ;
    \draw (0.07,0.6) node{\small#3} ; 
  }
}
\newcommand{\drawsmalltriin}[3]{   %% un triangle aux coord données de l'arg #3 et le label à l'intérieur
  \trans{#1}{#2}{  
    \draw (0,0) --++ (0.35,0.65) --++ (-0.7,0) --++ (0.35,-0.65) ;
    \draw (0,0.38) node{\fnsz #3} ; 
  }
}
\newcommand{\blockunary}[3]{ % block suivit d'une arète
  \drawlabnode{#1}{#2}{#3}
  \trans{#1}{#2}{
  \draw (0,0.23) --++ (0,0.44);
}
  }
\newcommand{\blocka}[2]{  % \arob/ prêt à être collé à un noeud
  \trans{#1}{#2}{  
    \drawarob{0}{0}
    \draw (-0.12,0.18) -- (-0.53,0.72); 
    \draw (0.12,0.18) -- (0.53,0.72);
  }
}
\newcommand{\blockappl}[2]{ % \arob
  \trans{#1}{#2}{  
    \drawarob{0}{0}
    \draw (-0.12,0.18) -- (-0.53,0.72); 
 } 
}
\newcommand{\inputtypeleft}[3]{
  \trans{#1}{#2}{
 \draw (-0.9,0.7) node{\red{#3}}; 
 \red{\draw [>=stealth, ->] (-0.7,0.5)--(-0.25,0.25) ; } }
}
\newcommand{\inputtypeleftn}[3]{ %%% neutral (sans couleur)
  \trans{#1}{#2}{
     \draw (-0.9,0.7) node{#3}; 
     \draw [>=stealth, ->] (-0.7,0.5)--(-0.25,0.25) ; } 
}
\newcommand{\inputtyperightn}[3]{  %%% neutral (sans couleur)
  \trans{#1}{#2}{
 \draw (0.45,0.7) node{#3}; 
 \draw [>=stealth,->] (0.25,0.55)--(0.15,0.25) ; }
}
\newcommand{\outputtypeleft}[3]{
  \trans{#1}{#2}{
  \draw (0,-0.35) node [left] {#3} ; 
}}
\newcommand{\outputtyperight}[3]{
  \trans{#1}{#2}{
  \draw (-0.2,-0.25) node [right] {#3} ; 
}}
\begin{document}
\title{All the $\lam$-Terms are Meaningful for the Infinitary Relational Model}
\author{\IEEEauthorblockN{Pierre Vial}
\IEEEauthorblockA{
INRIA (LS2N CNRS)\\
Email: pierre.vial@inria.fr}
}
\maketitle

\renewcommand{\em}{\it}

\begin{abstract}
Infinite types and formulas are known to have really curious and unsound behaviors. For instance, they allow to type $\Om$, the auto-autoapplication and they thus do not ensure any form of normalization/productivity. Moreover, in most infinitary frameworks, it is not difficult to define a type $R$ that can be assigned to \textit{every} $\lam$-term. However, these observations do not say much about what coinductive (\ie infinitary) type grammars are able to provide: it is for instance very difficult to know what types (besides $R$) can be assigned to a term in this setting. We begin with a discussion on the expressivity of different forms of infinite types.
%We see that the non-idempotent intersection types, whose \textit{resource-awareness} makes them the good framework to study infinitary typing. %
Using the resource-awareness of sequential intersection types (system $\ttS$) and tracking, we then prove that infinite types are able to characterize the \textit{order} (arity) of every $\lam$-terms and that, in the infinitary extension of the relational model, every term has a ``meaning'' \ie a non-empty denotation. From the technical point of view, we must deal with the total lack of productivity guarantee for typable terms: we do so by importing methods inspired by first order model theory.
\ignore{ % ASUPPRIMER

Certain type assignment systems are known to ensure or characterize normalization. The grammar of the types they feature is usually {\it inductive}. It is easy to see that, when types are {\it coinductively} generated, we obtain unsound type systems (meaning here that they are able to type some {\it mute} terms). Even more, for most of those systems, it is not difficult to find an argument proving that {\it every} term is typable ({\it complete unsoundness}). However, this argument does not hold for {\it relevant} intersection type systems (ITS), that are more restrictive because they forbid {\it weakening}. Thus, the question remains: are relevant ITS featuring coinductive types -- despite being unsound -- still able to characterize some bigger class of terms? We show that it is actually not the case: every term is typable in a standard coinductive ITS called $\scrD$. Moreover, we prove that semantical information can be extracted from the typing derivations of $\scrD$, as the {\it order} of the typed terms. Our work also implicitly provides a new {\it non sensible} relational model for the pure $\lam$-calculus.}
%The main theorem of this paper also yields a new {\it non sensible} relational model for pure $\l$-calculus.
%% to characterize some behaviour ?

\end{abstract}

\newtheorem{goal}{Goal}
% blah sur dénotations termes muets
%\pierre{Doume: generalites tres haut niveau, par exemple, a quoi sert la normalizabilite, pq on veut des systemes de types sains}\\
%%%%s'assurer a la fin que les idees importantes de chaque ss apparaissent clairement, a haut niveau et bien separees
%This article is about infinite semantics, infinite types and infinite formulas. We prove that the order (arity) of any $\lam$-term can be captured by infinite types and moreover, that, every term is typable when considering a non-idempotent and relevant intersection type system with a coinductive grammar.
%Let us discuss from a high-level perspective the easiness -- or hardness -- of finding ``infinitary'' denotations to $\lam$-terms and their interests. }
% complexité des temres muets?? Demander à Antonio

\section{Introduction (Infinite types)}

\subsection{Some semantical aspects of infinite types}
\label{ss:intro-sem-aspects-infty-types}
It is well-known that the mere fact of allowing infinite formulas gives birth to unsound/contradictory proof systems. For instance, let $A$ be \textit{any} formula. We then define the infinite formula $F_A:=(((\ldots) \rew A)\rew A)\rew A$ as \ie $F_A=F_A\rew A$ (the letter ``F'' stands for ``fixpoint'')
. Using $F_A$, we may  write the proof of $A$ given by Fig.\;\ref{fig:infinite-formulas-unsound}.
\begin{figure}[b]
\vspace{-0.8cm}
\[\infer{
\infer{
\infer{
\infer{\phd}{\ju{F_A}{F_A\ \ie F_A\rew A}}\\
\infer{\phd}{\ju{F_A}{F_A}}}{
  \ju{F_A}{A}}}{
\ju{}{F_A\rew A\ \ie F_A}
}\\
\infer{
\infer{
\infer{\phd}{\ju{F_A}{F_A}}\\
\infer{\phd}{\ju{F_A}{F_A}}}{
  \ju{F_A}{A}}}{
\ju{}{F_A}
}}{
  \ju{}{A}}\]
  \caption{The unsoundness of infinite formulas}
  \label{fig:infinite-formulas-unsound}
  \end{figure}
%As announced, just with the infinite formula $F_A$, we can prove any given formula $A$. % \ie naively, infinite formulas give rise to a contradictory proof system. %% repet
%Interestingly enough, the above unsound derivation corresponds to a derivation typing the $\lam$-term $\Om$, which is the auto-application applied to itself (see $\autoapp(\autoapp)$ p.\;\pageref{ex:autoapp-autoapp-informal}), that keeps on looping without producing anything.

Now, by the Curry-Howard correspondence identifying proofs with programs -- here, $\lam$-terms --, this does not come as a surprise that all types $A$ are inhabited when one allows infinite types. Indeed, the proof above is a simple typing of $\Om:=\Del$, with $\Del=\lx.x\,x$: see Fig.\;\ref{fig:infinite-simple-types-unsound}.

Thus, every type $A$ is inhabited by $\Om$. But given a $\lam$-term $t$, what types $A$ does $t$ inhabit? A first observation is that every $\lam$-term can be easily typed when infinite types are allowed: let us just define $R$ (standing for ``reflexive'') by $R=R\rew R$. Thus, $R=(R\rew R)\rew( R\rew R)=\ldots$ %={\scriptstyle ((R\rew R)\rew (R\rew R)) \rew ((R\rew R) \rew (R\rew R)) } =(\ldots \rew \ldots) \rew (\ldots \rew \ldots )$.
Then, it is very easy to inductively type every term with $R$. In the inductive steps below, $\Gam$ denotes a context that assigns $R$ to every variable of its domain: \label{inpage:comp-unsound-inf-sts}
\[\infer{\phd}{\ju{\Gam;x:R}{x:R} } \hspace{1cm} \infer{\ju{\Gam;x:R}{t:R}}{\ju{\Gam}{\lx.t:R \rew R\ \ (=R)} }\]
\[\infer{\ju{\Gam}{t:R\  (=R\rew R)} \hspace{0.7cm} \ju{\Gam}{u:R}}{\ju{\Gam}{t\,u:R}}\]
Thus, every $\lam$-term inhabits the type $R$, but note that this does not answer the former question: what types does a term $t$ inhabit? As we will see, this question is extremely complex and we will chiefly focus on one aspect of this problem, namely, the typing constraints caused by the \textbf{order} of the $\lam$-terms. Intuitively, the order of $\lam$-term $t$ is its arity \ie it is the supremal $n$ such that $t\bred \lx_1\ldots x_n.u$ (for some term $u$): the order of $t$ is the number of abstractions that one can ouput from $t$. For instance, $\Om$ is of order 0 (it is a \textbf{zero term}), the HNF $\lx_1x_2.x\,u_1\,u_2\,u_3$ (with $u_1,u_2,u_3$ terms) is of order 3 and the term $\cul:= (\lx.\ly.xx)\lx.\ly.xx$ (satisfying $\cul\bred \ly.\cul$ and thus, $\cul \bred^n \underbrace{\ly.\ldots.y}_{n}.\cul$) is of infinite order.

The \textbf{order of a type} is the number of its top-level arrows \eg if $\tv_1,\tv_2$ are \textit{type atoms} (or \textit{type variables}), $\tv_1\rew \tv_1$, $\tv_1$, $\tv_1\rew \tv_2\rew \tv_1$ and $(\tv_1 \rew \tv_2)\rew \tv_1$ are of respective orders  1, 0, 2, 1. Via Curry-Howard, the abstraction constructor $\lx$ corresponds to an implication introduction (\textit{modus tollens}). Thus, in most type systems, a typed term of the form $\lx_1\ldots \lx_n.u$ is typed with an arrow of order $\geqslant n$. For instance, if $\lx.\ly.u$ is typed, then it is so with a type of the form $A\rew B\rew C$.

\begin{figure}[b]
\vspace{-0.8cm}
\[\infer{
\infer{
\infer{
\infer{\phd}{\ju{x\!:\!F_A\!}{\!x\!:\!F_A}}\hspace{-0.15cm}
\infer{\phd}{\ju{x\!:\!F_A\!}{\!x\!:\!F_A}}}{
  \ju{x:F_A}{x\,x:A}}}{
\ju{}{\lx.x\,x:F_A\rew A}
}
\infer{
\infer{
\infer{\phd}{\ju{x\!:\!F_A\!}{\!x\!:\!F_A}}\hspace{-0.15cm}
\infer{\phd}{\ju{x\!:\!F_A\!}{\!x\!:\!F_A}}}{
  \ju{x:F_A}{x\,x:A}}}{
\ju{}{\lx.x\,x:F_A}
}}{
  \ju{}{\Om:A}}\]
  \caption{Typing $\Om$ using Fig.\;\ref{fig:infinite-formulas-unsound}.}
  \label{fig:infinite-simple-types-unsound}
 % \vspace{-1cm}
  \end{figure}

Now, if a type system satisfies \textbf{subject reduction} (as every type system should), meaning that typing is stable under reduction, the above observation entails that, if a typable term is of order $n$, then it is typable only with types of order $\geqslant n$ (the order of a term is a lower bound for the order of its possible types). Equivalently, if $t$ is typed with $B$, then the order of $B$ \textit{statically} (\ie without reduction) gives an upper bound to the order of $t$. A finite type has a finite order (whereas the finite \textit{term} $\cul$ has an infinite order), but an infinite type may have an infinite order \eg $R$ defined by $R=R\rew R$ above. This shows that the typing of any term $t$ with $R$ is trivial and does not bring any information, since the order of a term is $\leqs \infty$\ldots However, the facts that, by allowing infinite types, (1) one can type every term with $R=R\rew R$  and (2) one can type $\Om$ with any type $A$, do not mean that finding the types that can be assigned to a given term $t$ is an easy problem in this setting:  actually, this turns out to be  very difficult (see \S\;\ref{ss:intro-difficulty-comp-unsound}).
%that any term is typable in any possible way easily with infinite types. We are actually going to see \furef{inclure furef si discussion ds l'intro} that it is a very difficult problem.
%\fubla{en remettre une couche sur le fait que easily typed avec R, ne veut pas dire easily typed en général}

More interestingly, intersection type systems (i.t.s.), introduced by Coppo-Dezani\ \cite{CoppoD80}, generally satisfy \textbf{subject expansion}, meaning that typing is stable under anti-reduction. Those systems are designed to ensure equivalences of the form ``$t$ is typable iff $t$ is normalizing'' and also provide \textit{semantical} proofs  of \fublainv{en remettre une couche sur combien c intéressant} non-type-theoretic properties such as ``$t$ is weakly normalizing iff the leftmost-outermost reduction strategy terminates on $t$''\;\cite{Krivine93}. From subject expansion and the typing of \textbf{normal forms (NF)} \ie terminal states, i.t.s. are actually able to \textit{capture} the order of \textit{some} $\lam$-terms. For instance, if an i.t.s.  characterizes \textbf{head normalization}, then, every HN term $t$ of order $n$ is typable with a type whose order is also \textit{equal} to $n$ (and not only bounded below by $n$).

Let us informally understand why the order of the typable terms is usually captured by i.t.s. For instance,  i.t.s. characterizing HN usually allow arrow types to have an empty source\footnote{In this article, we put aside \textit{non-strict} types systems  featuring $\underline{\Om}$-rules allowing every term $t$ to be typed with a special type constant $\underline{\Om}$ (\eg system $\sysDOm$~\cite{Krivine93,CoppoD80}), in which types are less constrained by the order of terms.
}
(that we generically denote by $\eset$), meaning that the underlying functions do not look at their argument \ie if $t:\eset \rew B$, then $t\,u$ is typable with $B$ for \textit{any} term $u$. Then, with such arrow types, it is indeed very easy to type any \textbf{head normal form (HNF)} while capturing their order. It is done on Fig.\;\ref{fig:typing-hnf-its}: one just assigns $\eset \rew \ldots \ldots \eset \rew \tv$ (order $q$) to the head variable $x$, so that $\zhnfo$ is typed with $\tv$ and the HNF $\hnfo$, whose order is $p$, is typed with an arrow type of order $p$.  

Then, by subject expansion, one concludes that every HN term of order $n$ is typable with a type of order $n$. The same kind of argument can be adapted to i.t.s. characterizing weak or strong normalization, which also usually capture the order of their typable terms.

\begin{figure}[t]
\begin{tikzpicture}

\inputtypeleft{-1.69}{2.34}{$\underbrace{\eset \rew \ldots \eset}_q \rew \tv$ }

\node[draw,circle,minimum width=3ex] (X) at (-1.69,2.34) {};
\node at (X) {$x$};

\coordinate (T1) at (-0.39,2.34) ;
\drawsmalltriin{-0.39}{2.34}{$t_1$};

\node[draw,circle,minimum width=3ex] (A1) at (-1.04,1.44) {};
\node at (A1) {$\arob$};
\draw (A1) -- (T1);

\coordinate (Tq) at (0.65,0.9) ;
\drawsmalltriin{0.65}{0.9}{$t_q$};

\node[draw,circle,minimum width=3ex] (Aq) at (0,0) {};
\node at (Aq) {$\arob$};
\draw (X) -- (A1);
\draw [dotted] (A1) -- (Aq);
\draw (Aq) -- (Tq);

\red{\draw [>=stealth, -> ] (0.25,-0.15)-- (0.5,-0.5) ; 
       \draw (0.6,-0.6) node{$\tv$} ;
    }

\node[draw,circle,minimum width=3ex] (Lp) at (0,-0.9) {};
\node at (Lp) {\fnsz $\lam \!x_{\!p}$};
\draw (Aq) -- (Lp);

\node[draw,circle,minimum width=3ex] (L1) at (0,-2) {};
\node at (L1) {\fnsz $\lam \!x_{\!1}$};
\draw [dotted](L1) -- (Lp);

\red{\draw [>=stealth, -> ] (0.25,-2.15)-- (0.5,-2.5) ; 
       \draw (0.4,-2.9) node[right] {$\underbrace{\ldots \rew \ldots \rew \ldots}_{p}\rew \tv$} ;
    }

%\draw [dotted] (Lp) --++ (0,-0.9);
\end{tikzpicture}
\caption{Typing a Head Normal Form in an I.T.S.}
\label{fig:typing-hnf-its}
\vspace{-0.7cm}
\end{figure}

%%% blah, les i.t.s. caractérisent souvent l'ordre des termes HN\footnote{attention, pas qd \bot rule}

\subsection{In Search for Infinite Denotations}
\label{ss:intro-search-inf-denot}
Independently from normalization properties, another important facet of intersection type systems is that they also provide \textbf{denotational models} for the $\lam$-calculus \ie they associate to each $\lam$-term a denotation\footnote{
Models usally interpret the $\lam$-calculus in a categorical way. This aspect is only marginal in this paper. 
} $\denot{t}$, meaning that $\denot{t}$ is invariant under $\beta$-conversion (\ie $t_1\beq t_2$ implies $\denot{t_1}=\denot{t_2}$). 
For instance, the typing judgments of  Gardner-de Carvalho's system $\scrRo$ (that we will shortly discuss in \S\;\ref{ss:system-R-LICS18}) correspond to the points of the relational model~\cite{BucciarelliEM07} in that, for any term $t$, $\set{\tri_{\scrRo} \juGtt\;|\; \Gam,\tau}=\denot{t}_{\mathtt{rel}}$, where the left-hand side corresponds to the set of derivable judgments of system $\scrRo$ typing $t$ and the right-hand side, the denotation of $t$ in the relational model. Thus, infinite types, which are the concern of this article, are a mean to study the infinitary extension of the relational model. 

The relational model only gives a (non-empty) denotation to HN terms. This reflects the fact that non-HN (equivalently, non-\textbf{solvable}) terms have an infinitary behavior %(although \textit{weak} head normalization encompasses a bigger set of normalizing terms)
%\footnote{By lack of space, we must ignore \textit{weak} head normalization~\fucite{}}
\wrt head reduction: it is thus natural to seek whether such terms have an infinitary semantics, since infinitary models bring information on asymptotic aspects of terms \eg in the recent work of Grellois-Melli\`es~\cite{GrelloisM15,GrelloisM15a}. The first main contribution of this article is to prove that every term has a non-empty interpretation in the infinitary relational model. 
In the perspective of Illative Systems, introduced by Curry, to find more and more ``meanings'' to $\lam$-terms (see \eg~\cite{CoppoD80} or 5.4. in \cite{CardoneH09}), one may thus consider that for the infinitary relational model, every $\lam$-term is meaningful.

%In particuler, infinite types, which are the interest of this article, are a mean to study the infinitary extension of the relational model: we will actually prove that every term has a non-empty interpretation in this extension. This is the first main contribution of this article.
% and all the $\lam$-terms (not only \eg the HN ones) can be discriminated according to their order. 

An interesting aspects of models is that they allow us to statically discriminate between two terms, meaning that if $\denot{t_1}\neq\denot{t_2}$ then $t_1\not\beq t_2$ \ie two terms that do not have the same denotation do not represent two different states of a same program. For instance, the i.t.s.  that are  able to assign a type of order $n$ to any (\eg HN) term of order $n$ (but not to a term of order $n+1$) can be regarded as \textbf{order-discriminating} for HN terms. This holds for system $\scrRo$. The infinitary extension of system $\scrRo$, that we denote $\scrR$, is order-discriminating for all $\lam$-terms (not just the HN ones). This result, which extends a feature of system $\scrRo$ concerning HN terms to the whole $\lam$-calculus is the second main contribution of the paper (Theorem\;\ref{th:order}).

\subsection{Stable positions  (and the Difficulty of Infinitary Typing)}
\label{ss:intro-difficulty-comp-unsound}
Now, a few words on the difficulty of describing the infinitary typing that can be assigned to a given term. Let us have another look at Fig.\;\ref{fig:typing-hnf-its}. The fact that we can describe the expressivity of a given i.t.s. usually comes from the fact that it is very easy to describe the typings of ``partial'' normal forms (\eg HNF or $\beta$-NF). Intuitively, the nodes of a partial normal forms corresponding to normalized parts 
 cannot be affected by reduction \eg the nodes labelled with $\lx_i$, $\arob$ and $x$ in Fig.\;\ref{fig:typing-hnf-its}  (the ``spine'' of the HNF). Such nodes can be called \textbf{stable}. In contrast, some terms do not ever give rise to stable position and they are known as \textbf{mute terms} in the literature~\cite{BerarducciI93}. Formally, $t$ is mute iff  $t$ is mute if any reduct of $t$ may be reduced to a redex. Thus, mute terms are \textit{persisting (latent)} redexes and they are regarded as the \textit{``most undefined $\lam$-terms"} [ibid].
Note that $\Omega$ is mute: the infinite reduction sequence $\Om \bred \Om \bred \ldots$ only seems to be constant, but, actually, a new redex is \textit{created} at each reduction step.
%Therefore, if a type system is able to type a mute term, we say here that it is \textbf{unsound}.
%%% esspliquer que pas nesté dans un redex

Thus, there is no clear way to capture the order of any typable term, since the example of $\Om$ shows that the case of totally unstabilizable term must be handled when considering infinite types. Note that $\Om$ is just an example of mute terms, that occurs to satisfy the nice fixpoint equation $\Om \bred \Om$ ($\Om$ is indeed an instance of Curry fixpoint) and has a simple parsing tree. This partially explains why $\Om$ was easily typable. In general, this is not the case of mute terms.

\subsection{Infinitary Typing and Klop's Problem}
%\subsection{Typability in Infinitary Relevant Systems}
\label{ss:intro-infty-relev-typing}

%A last word can now be added about $\scrR$-typability (\ie determining the set of terms that can be assigned infinite non-idempotent types): in \cite{VialLICS17}, we characterized a form of infinitary weak normalization 

Let us say a few words about the questions raised by infinitary typing in the non-idempotent % intersection
type framework \ie by the interpretation of terms in the infinitary relational model.
\begin{itemize}
\item One of the fundamental interests with non-idempotent intersection is that, in this setting, a type is a \textbf{resource} that cannot be duplicated or merged/contracted and is possibly consumed under reduction.
\item Moreover, non-idempotent i.t.s. are often \textbf{relevant}, meaning that weakening is not allowed.
\end{itemize}
An i.t.s. that forbids duplication and weakening can be qualified as \textbf{linear}, which is the case of system $\scrR$ that we hinted at in \S\;\ref{ss:intro-search-inf-denot}. As we will see (see \S\;\ref{ss:system-R-LICS18}), relevance disables the argument proving that every term is typable with $\rho$, the non-idempotent counterpart of the type $R$ considered above. However, while trying to characterize a form of infinitary weak normalization, we noticed in~\cite{VialLICS17} that $\Om$ is also typable $\scrR$. We recovered soundness by defining a validity criterion, discarding degenerate typing derivations, which was possible by introducing a \textbf{rigid} variant of system $\scrR$, namely system $\ttS$. System $\ttS$ has many nice features \eg tracking (see \S\;\ref{ss:system-S-LICS18}).

Still, these observations rise the question of the set of typable terms (without validity criterion) in the coinductive relevant and non-idempotent framework: is every term $\scrR$-typable? Or is there a term $t$ that is not $\scrR$-typable? Observe that such a term $t$ would not be linearizable, even in an infinite way (since systems $\scrR$ and $\ttS$ are \textit{linear}).  The existence of non-linearizable terms would be very suprising and must be therefore investigated. Since our main theorem\;\ref{th:comp-unsoundness-R-S} states that system $\scrR$ actually types every term, we actually prove that they do not exist, as expected.

%\techrep{
Note again that the method described at the end of \S\;\ref{ss:intro-sem-aspects-infty-types} does not work for non-normal term: naively, when $x\,u$ occurs in $t$, we would like to assign to $x$ a type of the form $A\rew B$, where $A$ is the type of $u$, and proceed by induction. However, $x$ may be substituted in the course of a reduction sequence, and so, typing constraints on $x$ are not easily readable. Thus, in the finite or productive case, in the purpose of proving that the terms of a given set (we considered the set of HN terms earlier) are typable, we escape this problem by typing normal forms (\eg HNF) and then proceeding by expansion. But, as noted above, in the coinductive case, no form of normalizability is ensured by typability.%}

To sum up, due to full resource-awareness (including relevance), typing in the coinductive systems $\scrR$ and $\ttS$ is intrisically non-trivial. %, and not only the question of capturing the order.
But for the same reason (full resource-awareness), linear intersection type systems consitute the good framework to study the expressive power of infinite types and to capture the order of every $\lam$-term. Thus, besides our first goal\ldots

\begin{goal}
\label{goal:order}
Capturing the order of every $\lam$-term with infinite types.
\end{goal}

\ldots we have now a second one, narrowly related to the first:

\begin{goal}
\label{goal:R-typability}
Proving that every term is $\scrR$-typable.
\end{goal}

%An interesting variant of the problem of infinitary typability is given by the non-idemptent intersection framework.\fubla{}
%An interesting variant of the problem of infinitary typability is  given by Klop's Problem \ie characterizing a form of infinitary weak normalization: as noticed in~\cite{VialLICS17}, we characterized a form of infinitary weak normalization by 

%For system $\scrR$ below, the situation is worse: not only order is difficult to capture, but there is no known argument to ensure that any term is typable and we explain why in \S\;\furef{}.

%In the non-idempotent framework, coinductive typing is intrisically non-trivial. \fubla{}

%\fubla{là, on parle du problème de Klop, de la relevance en particulier. Eventuellement deux mots (mais attention à la répétition) sur la resource-awareness}

\subsection{A Technical Contribution}

We have still not addressed the way we can study unproductive/mute terms, despite the fact that all the known techniques of intersection type theory fail: we propose a solution to overcome this difficulty, inspired by (a simplified form of) first order model theory, that we mix with techniques specific to the $\lam$-calculus. This is our main technical innovation, since it allows us to study \textit{unproductive} reduction. Thus, beyond the relational model, this work proposes the first use of first order model theory to study an infinitary extension of a finitary model of the $\lam$-calculus and to generalize properties coming from the finite model to every $\lam$-term (\eg capturing their order).
 The proof that every term is $\scrR$-typable has three main stages: (1) reducing the problem to a set of \textit{stability relations} (\S\;\ref{s:candidate-bisupp}) (2) describing the possible \textit{interactions} between these relations (3) describing a procedure of partial (but finite) normalization (\S\;\ref{s:normalizing-chains}). The same ingredients allow us to capture the order.
 The italicized words, as well as the tools of the proof and the reasons why they arise, are gradually explained in the paper but we refer to \S\;\ref{ss:candidate-supp-types}, \ref{ss:toward-candidate-bisupp} and the introduction of \S\;\ref{s:nihilating-chains} for some high-level input.

%%% We will develop more on these aspects and vocabulary later, but 
%%% blah, le 

\ignore{
Type Systems\paper{\footnote{An extended version of this article with proofs may be found at https://www.irif.fr/$\sim$pvial/}} assign formulas called \textbf{types} to $\lam$-terms $t$ under some constraints usually translating the rules of Natural Deduction. In \textbf{Simple Type Systems (STS)}, each term variable $x$ is assigned at most one type. In \textbf{Intersection Type Systems (ITS)}~\cite{Bakel95}, a %term 
variable $x$ may be assigned a new type in each new axiom rule typing it. Let us remind that a term $t$ is usually regarded as \textbf{normalizing}, when it can be reduced to a \textbf{normal form (NF)} (\ie a term that does not contain some kind of redexes), meaning that the execution of $t$ terminates. Usually,
in a STS, typability \textit{ensures} some kind of \textbf{normalizability} whereas in an ITS, it \textit{characterizes} normalizability~\cite{Krivine93}. 

There are many different sets of normalizing terms (weak-n, head-n, weak head-n), but none of them contain any \textbf{mute term}~\cite{BerarducciI93}: a mute term is a ``remanent" redex \ie $t$ is mute if any reduct of $t$ may be reduced to a redex.
%A term is mute when it is the source of an (infinite) reduction sequence such that reduction occurs at the \textit{root} of the \textit{term} infinitely many times (equivalently, $t$ is mute if any reduct of $t$ may be reduced to a redex)
An example of mute term is $\Om=\Del \Del$ where $\Del=\lx.x\,x$. Indeed, $\Om \bred \Om \bred \ldots$  (root reduction).
Mute terms are regarded as the \textit{``most undefined $\lam$-terms"} [ibid]. Therefore, if a type system is able to type a mute term, we say here that it is \textbf{unsound}.

%The types defined in a type system are usally generated by an inductive grammar. What happens if we take the rules of this gramma \textit{coinductively} instead ?

In a typing system, an application $t\,u$ is typable when we can type
$u$ (the \textbf{argument}% of the application
) with a type $A$ and the
left hand side $t$ with type $A\rew B$, where $B$ is another type. In
that case, $t\,u$ is typed $B$ (\textit{modus
ponens}). Types are usually generated by an \textbf{inductive} grammar, meaning that they are finite.
%However, some programing language use \textbf{coinductive rules}, \cite{} \pierre{exemple Haskell, coCaml,COQ ???}. \pierre{ajouter blah blah intuitif sur what is coinduction}

Now, what happens if we use a \textit{coinductive} grammar
to generate types (roughly meaning that types may be infinite)? It is not difficult to see why this yields an unsound type system: we can build a type $\Rom$ satisfying $\Rom=\Rom\rew \tv$ (where $\tv$ is a type variable). Thus, $t\,u$ can be typed with $\tv$ when $t$ and $u$ have been typed $\Rom$. We can then type $\Om$ with $\tv$ : if $x$ is assigned $\Rom$, then $xx$ is typed $\tv$, so that $\Del$ is typed $\Rom\rew \tv$ \ie $\Rom$. Thus, we can type $\Om=\Del\Del$ with $\tv$. To avoid that, coinductive/recursive type or proof systems are usually endowed with a validity criterion~\cite{santocanale01brics,DBLP:conf/csl/BaeldeDS16} or a guard condition~\cite{DBLP:conf/lics/Nakano00}.

Actually, coinduction allows us to build a \textbf{reflexive type $R$} \ie $R$ satisfies $R =R \rew R$. With that type, in the coinductive versions of standard STS and irrelevant ITS, we can easily type every term (\textbf{complete unsoundness}).
%This example works for standard STS and ITS.
%We can ask ourselves if we can type any term $t$ in the coinductive version of the standard type systems. As it quickly turns out (Sec.\;\ref{s:type-systems}), we can type $t$ in a standard coinductive STS and in any standard coinductive ITS which is \textbf{irrelevant}.
An ITS is relevant when it forbids weakening: in a relevant ITS, if $\ju{\Gam}{t:A}$ is derivable, then $\Gam$ only assigns types to variables that occur freely in $t$ \eg $\lx. y$ will usually have a type of the form $\set{\,}\rew A$, where $\set{\,}$ is an empty type, because $x$ does not occur free in $y$ and is thus untyped. The question of characterizing the set of typable terms in a relevant coinductive intersection type systems (\textbf{RCITS}) turns out to be far more difficult: typing rules constrain the empty type to occur in unforeseeable places if we do not consider a NF. But here, we already know that typability does not entail normalizability. %Moreover, the possibility to type easily $\Om$ by using coinductive types is related to the fact that $\Om$ satisfies a fixpoint equation. But many terms do not have such nice regular behavior.
Thus, there may be a chance that some very erratic $\lam$-terms could  not be typable in an RCITS. In that case, RCITS would be able to characterize a class of regular $\lam$-terms, bigger than the known ones. On the contrary, we prove in this paper that every term is typable in the standard RCITS $\scrD$ that we will consider.\\

\paragraph*{Types as Denotations}
%Let us discuss another use of relevant typing with coinductive types. As their inductive counterparts, the RCITS presented here will satisfy \textbf{Subject Reduction} and \textbf{Subject Expansion}, meaning that typing is preserved under (anti)reduction. In that case, types may be seen as \textit{invariants of execution} and thus, as suitable \textit{denotations} for $\lam$-terms. So, the fact that every term is typable in $\scrD$ paves the road to extract from typing some semantical information about pure $\lam$-terms (including mute terms). 
Let us discuss another use of typing with coinductive type. When an Intersection Type System both satistfy \textbf{Subject Reduction} and \textbf{Subject Expansion} (meaning that typing is preserved under (anti)reduction), types may be seen as \textit{invariants of execution} and thus, as suitable \textit{denotations} for $\lam$-terms.

As invariants of execution, types may help us to discriminate between $\lam$-terms: if $t_1$ and $t_2$ cannot be typed with the same types, then they are not $\beta$-equivalent. Let us give a simple example with \textbf{zero terms}. A zero term is a $\lam$-term that is \textit{not} $\beta$-equivalent to an abstraction (a term of the form $\lx.u$). For instance, $\Omega$ is a zero term (it is equal to its unique reduct). In the case of finite types, by typing constraints and Subject Reduction, a typable \textit{non} zero term will necessarily be typed with an arrow type. Thus, if a term is typable with a type variable (not an arrow), we can assert that it is a zero term. 

Completely unsound type systems raise the question of whether they can  discriminate \textit{pure} terms according to their \textbf{order} (the order of $t$ is the supremal $n\in \bbN\cup \set{\infty}$ s.t. $t\rew^*_{\beta} \lx_1\ldots \lx_n.t'$).
We hinted above at the fact that the zero term $\Omega$ was typable with a type variable $o$ when using coinductive types (see also Sec.\;\ref{s:typing-examples}). More generally, it is a question of interest to know whether some completely unsound ITS are able to type \textit{any} zero term with a type variable. Such ITS would be thus \textit{order-discriminating}.

\subsection*{Contributions}

The goal of this paper is to prove that $\scrD$ (Sec.\;\ref{s:system-D}) the coinductive version of a standard relevant intersection type system, although seemingly more restrictive than other type systems, is also able to type any $\lam$-term $t$, and that it is also order-discriminating. We present a proof for the set of \textit{finite} $\lam$-terms, but this can be adapted for the infinitary $\lam$-calculus~\cite{DBLP:conf/rta/Czajka14,DBLP:journals/tcs/KennawayKSV97}. %% en prenant les règles de dérivations coinductivement

Naively, when $x\,u$ occurs in $t$, we would like to assign to $x$ a type of the form $A\rew B$, where $A$ is the type of $u$, and proceed by induction. However, $x$ may be substituted in the course of a reduction sequence, and so, typing constraints on $x$ are not easily readable. In the finite case, in the purpose of proving that the terms of a given set (\eg the set of HN terms) are typable, we escape this problem by typing normal forms (\eg HNF) and then proceeding by expansion. But, as noted above, in the coinductive case, no form of normalizability is granted by typability. We must then proceed differently.
We present an original method, that we introduced in~\cite{1610.06399}:

\begin{itemize}
\item  We define a Type System called $\ttS$ (Sec.\;\ref{s:system-S}), in which the edges of types and derivations may be seen as labelled trees whose edges are labelled with numbers called \textbf{tracks}% (tracks are the main feature of $\ttS$)
, and intersection is represented by means of sequences. The complete unsoundness of $\ttS$ entails that of $\scrD$, by collapsing sequences (Proposition\;\ref{prop:collapse-S-D}).
%We define a Type System called $\ttS$ (Sec.\;\ref{s:system-S}), in which the edges of types and derivations (both seen as labelled trees) are labelled with numbers called \textbf{tracks} (tracks are the main feature of $\ttS$), and intersection is represented by means of sequences. The complete unsoundness of $\ttS$ entails that of $\scrD$, by collapsing sequences (Proposition\;\ref{prop:collapse-S-D}).
\item Thanks to tracks, we can characterize (Sec.\;\ref{s:candidate-bisupp}) the possible forms of $\ttS$-derivations (notion of \textbf{bisupport candidate}). Roughly speaking, those forms are sets of positions that must be stable under some relations.
\item Due to relevance, some positions must be empty. We ensure that if a \textit{root} position cannot be reached by a constant representing emptiness under the stability relations above, then every term is typable  (Corollary\;\ref{corol:typability-and-thepsi}).
\item To prove the complete unsoundness of $\ttS$ and thus that of $\scrD$, we must reason about the potential \textit{proofs} of emptiness of the root (such a proof is called a \textbf{chain}). Assuming \textit{ad absurbum} that such a proof exists (Sec.\;\ref{s:nihilating-chains}), we reach a contradiction (see below), allowing us to conclude (Theorem\;\ref{th:main-theorem}).
%Then (Sec.\;\ref{s:nihilating-chains}), we reason \textit{ad absurdum}, assuming that there is a proof of  emptiness of the root (such a proof is called a \textbf{chain}). Such a proof is then shown to be impossible, allowing us to conclude with the complete unsoundness of $\ttS$ and thus of $\scrD$.
\item We explain why this result provides us with a new \textit{non sensible} relational model for pure $\lam$-calculus such that two terms with differents orders will have different denotations (Theorem~\ref{th:order}).
\end{itemize}
The main technical difficulties lie in the fourth point: 
emptiness propagates in a non controllable way through redexes (Sec.\;\ref{s:interaction-normal-chains}). We resort then to a finite reduction strategy (the \textbf{collapsing strategy}, Sec.~\ref{s:collapsing-strategy}) normalizing potential proofs of emptiness.
}
\ignore{
\begin{itemize}
%\item We present the coinductive versions of a standard STS and a standard irrelevant ITS and we show that they can easily type any term.
\item We define a standard RCITS that we call $\scrD$, which is the coinductive version of one of the most standard relevant ITS~\cite{}\pierre{cite}. The goal of this paper is to prove that any term is typable in $\scrD$.
For that, we introduce (Sec.\;\ref{s:system-S}) another RCITS, called $\ttS$. In System $\ttS$, derivations and types are labelled trees whose edges are also labelled. % This allows us to have a more fine-grained control on typing constraints and type creation \eg we can track any part of a type, from its creation in an  axiom rule to an application rule which erases it.
Thanks to those labels, System $\ttS$ features pointers called \textbf{bipositions} and $\ttS$-derivations have a \textbf{bisupport}  (an extension of the notion of support). Derivations of $\ttS$ naturally collapse on derivations of $\scrD$ (Proposition.\;\ref{prop:collapse-S-D})w.
\item  Let $t$ be a term. 
We express (Sec.\;\ref{s:candidate-bisupp}) conditions for a set of bipositions $B$ to be the bisupport of a derivation typing $t$ ($B$ will have to be closed under some suitable \textit{stability relations}). Due to relevance, some bipositions are constrained to be ``empty". We show (Corollary\;\ref{corol:typability-and-thepsi}) that if that a ``root biposition" cannot be reached by emptiness under the stability relations, then $t$ is typable. 
\item We prove that, indeed, the root biposition of a derivation cannot be reached by emptiness (Sec.\;\ref{s:nihilating-chains} and \ref{s:normalizing-chains}). The points above allow us then to conclude (Theorem\;\ref{th:main-theorem}). 
\item We explain why this result provides us with a new \textit{non sensible} relational model for pure $\lam$-calculus such that two terms with differents orders will have different denotations (Theorem~\ref{th:order}).
\end{itemize}
The main technical difficulties lie in the third point: emptiness propagates in a non controllable way through redexes (Sec.\;\ref{s:interaction-normal-chains}). We resort then to a finite reduction strategy (the \textit{collapsing strategy}) on chains of relations (Sec.~\ref{s:collapsing-strategy}).
}

\newcommand{\inferRaxx}[1]{\infer{\phd}{\ju{x:\mult{#1}}{x:#1}}}
\newcommand{\inferRaxxsh}[1]{\infer{\phd}{\ju{x\!:\!\mult{#1}\!}{x\!:\!#1}}}

\section{Infinitary Relevant and Non-Idempotent Intersection}
\label{s:non-idem-its-LICS18}

\subsection{System $\scrR$}
\label{ss:system-R-LICS18}

We now define more formally system $\scrR$, the \textit{coinductive} version of the finite system $\scrRo$, independently introduced by Gardner and de Carvalho~\cite{Gardner,Carvalho07}. See \cite{BucciarelliKV17} for a general presentation of $\scrRo$. \paper{System $\scrR$ is of good help to understand relevant intersection %regarding infinitary typing,
  but, as we shall see in  \S\;\ref{ss:system-S-LICS18} et \ref{ss:candidate-supp-types}, it 
}\techrep{System $\scrR$ is of two uses here:
\begin{itemize}
\item Understanding relevant intersection.
\item Informally presenting some key tools of our main proofs.
\end{itemize}
However, as we shall see in \S\;\ref{ss:system-S-LICS18} and \ref{ss:candidate-supp-types}, system $\scrR$} is unfit to express the techniques yielding Theorems\;\ref{th:comp-unsoundness-R-S} and \;\ref{th:order}, and we refer to it only for heuristic purposes.

The set of $\scrR$-types is coinductively defined by.
\[\sigma,\,\tau :: = \tv\in \TypeV~ |~ \msigi \rew \tau \]
We call $\IM:=\msigi$ a \textbf{multiset type}. The multiset types represent intersection in system $\scrRo$ and the intersection operator $\inter$ is the multiset-theoretic sum: $\inter_\iI \IMi=+_\iI \IMi$ (\ie $\inter_{\iI} \mult{\sig^i_j}_\jJi  := +_{\iI} \mult{\sig^i_j}_{\jJi}$). We assume $I$ to be countable, the empty multiset type is denoted by $\emul$ and $\mult{\sig}_\om$ is the multiset type in which $\sig$ occurs infinitely many times.

A $\scrR$-\textbf{context} (metavariables $\Gam,\Del$) is a \textit{total} function from $\TermV$ to the set of multiset types. The \textbf{domain of $\Gam$} is given by $\set{x\,|\,\Gam(x)\neq \emul}$. The intersection of contexts $+_{\iI} \Gami$ is defined point-wise.
We may write $\Gam;\Del$ instead of $\Gam + \Del$ when $\dom{\Gam}\cap \dom{\Del}=\eset$. Given a multiset type $\msigi$, we write $x:\msigi$ for the context $\Gam$ s.t. $\Gam(x)=\msigi$ and $\Gam(y)=\emul$ for all $y\neq x$.
 A $\scrR$-\textbf{judgment} is a triple $\ju{\Gam}{t:\sig}$ where $\Gam$ is a $\scrR$-context, $t$ a term and $\sig$ a $\scrR$-type.

The set of $\scrR$-derivations is defined \textit{inductively} by\techrep{ the following rules}: 
  $$
  \begin{array}[t]{c}\infer[\ax]{ \phd}{ \ju{x:\mult{\tau} }{x:\tau}}  \hspace{1.4cm} \infer[\abs]{ \ju{ \Gam;x:\msigi}{t:\tau}}{\ju{\Gam}{\lx.t:\msigi\rew \tau}} \\[4ex]
    \infer[\app]{\ju{\Gam}{t:\msigi\rew \tau}\hspace{0.3cm} (\ju{\Deli }{u:\sigi})_{\iI} }{\ju{\Gam+(+_{\iI} \Deli )}{t\,u:\tau}}
    \end{array}
$$

%\pierre{DEJA MIS DS L'INTRO
%One of the fundamental interests with non-idempotent intersection is that, in this setting, a type is a \textbf{resource} that cannot be duplicated or merged/contracted and is possibly consumed under reduction. %This is embodied by the linear nature of subject reduction in system\;$\scrRo$ .
%  This aspect is strengthened by the \textbf{relevance} of system $\scrR$: weakening is not allowed.}
As announced in \S\;\ref{ss:intro-infty-relev-typing}, system $\scrR$ is not only non-idempotent, but also \textit{relevant}. 
For instance, the $K$-term $\lx.y$ can only be assigned types of the form $\emul \rew \tau$. Indeed, these are below the only possible typings of $\lx.y$.%, $\lx.y$ has $\scrR$-derivations of form below.
$$%\begin{array}{l}
 % \infer{
%\infer{\phd}{\juaxtt}\ax
%  }{\ju{}{\lx.x:\mult{\tau}\rew \tau}}\abs
%\end{array}\hspace{0.4cm}
%\begin{array}{l}
  \infer[\abs]{
\infer[\ax]{\phd}{\juaxtt}
  }{\ju{x:\mult{\tau}}{\ly.x:\emul\rew \tau}}
%\end{array}
$$
This comes from the fact that $y$ does not occur in $x$, and thus, by relevance, the constructor $\ly$ cannot produce a type on the left-hand side of the arrow type.

Fig.\;\ref{fig:infinite-simple-types-unsound} can adapted\techrep{, yielding Fig.\;\ref{fig:infinite-R-om-typ}, } and we can type $\Om$ with of $\tau$ for all $\scrR$-types $\tau$, by just defining $\phi_\tau$ by $\phi_\tau =\mult{\phi_\tau}_\om \rew \tau$. However, defining $\rho$ by $\rho=\mult{\rho}_\om \rew \rho$ does not allow to type every term with $\rho$ in system $\scrR$. We explain why now.
\techrep{
\begin{figure*}
\[\infer{
\infer{
\infer{
\infer{\phd}{\ju{x:\mult{\rhoo}\!}{x:\rhoo}}\\
(\infer{\phd}{\ju{x:\mult{\rhoo}}{x:\rhoo}})_{\om}}{
  \ju{x:\mrhooom}{x\,x:\tv}}}{
\ju{}{\lx.x\,x:\mrhooom\rew \tv}
}\\
(\infer{
\infer{
\infer{\phd}{\ju{x:\mult{\rhoo}}{x:\rhoo}}\\
(\infer{\phd}{\ju{x:\mult{\rhoo}}{x:\rhoo}})_{\om}}{
  \ju{x:\mrhooom}{x\,x:\tv}}}{
\ju{}{\lx.x\,x:\rhoo}
})_\om
}{
  \ju{}{\Om:\tv}}\]
  \caption{Typing $\Om$ in $\scrR$.}
  \label{fig:infinite-R-om-typ}
  \end{figure*}
}

Relevance\techrep{ of system $\scrR$} can be disabled, by replacing $\ax$ par $\axw$:
\[
\infer[\axw]{ i_0\in I}{ \ju{\Gam;x:\msigi }{x:\sig_{i_0}}}\]
We call $\scrRw$ the type system thus obtained. Then, the proof on p.\;\pageref{inpage:comp-unsound-inf-sts} can be adapted to $\scrRw$, by considering only contexts $\Gam,\Gamt,\Gamu$ assigning $\mult{\rho}_\om$ to all the variables in their domains:
\[\infer[\ax]{\phd}{\ju{x:\mult{\rho}_\om}{x:\rho}} \hspace{1.6cm} \infer[\abs]{\ju{\Gam;x:\mult{\rho}_\om}{t:\rho}}{\ju{\Gam}{\lx.t:\mult{\rho}_\om \rewsh \rho\ \ (=\rho)} }\]
\[\infer[\app]{\ju{\Gamt}{t:\rho\  (\mathord{=}\mult{\rho}_\om \rewsh \rho)} \hspace{0.2cm} (\ju{\Gamu}{u:\rho})_{\om}}{\ju{\Gam  }{t\,u:\rho}}\hspace{0.4cm} {\small
\begin{array}{l}
\text{with}\\ \Gam\mathord{=}\Gamt+\om \times \Gamu \end{array}}\]
Thus, every term is $\scrRw$-typable. But note that, by relevance, this proof by induction on the structure of $t$ fails for $\scrR$. For instance, if $x$ is not in $t$, $\ju{\Gam}{t:\rho}$ yields $\ju{\Gam}{\lx.t:\emul\rew \rho\ (\neq \rho!)}$ and since the empty multiset type may occur in unpredictable places in a term, finding a $\scrR$-typing of any term $t$ is non-trivial, as announced in \S\;\ref{ss:intro-difficulty-comp-unsound} and \ref{ss:intro-infty-relev-typing}. In some sense $\scrR$-typability is about capturing the way relevance constrains emptiness to occur! But since a variable $x$ can be substituted under reduction as observed in \S\;\ref{ss:intro-infty-relev-typing},  $\emul$ may occur in unpredictable places.

\subsection{Towards Tracking and Sequential Intersection}
\label{ss:system-S-LICS18}

Unfortunately, resource-awareness of system $\scrR$ is not enough to process the proof techniques to be developped here:  we also need \textbf{tracking}. Let us just give an example to show what the impossibility of tracking means:
%showing that multiset intersection makes tracking impossible to understand what we mean:
\[
\infer{
\inferRaxxsh{\mult{\sig,\sig}\rewsh \tau}\hspace{0.2cm}
\inferRaxxsh{\sig} \hspace{0.1cm} \inferRaxxsh{\sig}}{
\ju{x:\mult{\mult{\sig,\sig}\rew \sig,\sig,\sig}}{x\,x:\tau}
}
\]
In the derivation above, in the context $x:\mult{\msig\rew \sig,\red{\sig},\red{\sig}}$ of the conclusion, one cannot know  which particular axiom rule, each red occurrence of $\sig$ comes from: there is no possible notion of \textbf{pointer} with multiset intersections. But as it will turn out in \S\ref{ss:candidate-supp-types}, this is one thing that we absolutely need to capture the key notion of \textbf{support candidate}.

Tracking can be retrieved while keeping most of system $\scrRo$'s nice features (\eg syntax-direction) by considering system $\ttS$, that we introduced in~\cite{VialLICS17}. System $\ttS$ uses \textbf{sequence types} instead of \textbf{multiset types} to represent intersection. For instance, instead of having a cardinal 3 intersection $\mult{\tv,\tv',\tv}$, system $\ttS$ considera cardinal 3 sequence $(2\cdot \tv,5\cdot \tv',8\cdot \tv)$. Sequences come along with a disjoint union operator \eg $(2\cdot \tv,5\cdot \tv',8\cdot \tv)=(2\cdot \tv,5\cdot \tv') \uplus (8\cdot \tv)$: in this equality, the occurrence of $\tv$ in the left-hand side annotated with 2 unambiguously comes from that which is also annotated with 2 in the right-hand side. We call these annotations \textbf{tracks}.

In contrast, $\mult{\tv,\tv',\tv}=\mult{\tv,\tv'}+\mtv$, but there is no way to unambiguously associate to an occurrence of $\tv$ in the left-hand side the one of $\mult{\tv,\tv'}$ or the one of $\mtv$ in the right-hand side. 

Formally, the set of $\ttS$-types is defined coinductively by:
\newcommand{\skSk}{(k\cdot S_k)_{\kK}}
\begin{center}$ 
\begin{array}{cll} 
  T,\,S_k & ::= &  \tv ~ \|~ F \rightarrow T\\
  F & ::=& \skSk\ (K\subeq \Nmzo)
\end{array}$\\[-2ex] 
\end{center}
The empty sequence type is denoted $\est$ and we often write $\sSk$ instead of $\skSk$.
The set of top-level tracks of a sequence type is called its set of \textbf{roots} and we write \eg $\Rt(F)=\set{2,5,8}$ when $F=(2\cdot \tv,5\cdot \tv',8\cdot \tv)$. Note that the disjoint union operator can lead to \textbf{track conflict} \eg, if $F_1=(2\cdot \tv,3\cdot \tv')$ and  $F_2=(3\cdot \tv',8\cdot \tv)$, the union $F_1\uplus F_2$ is not defined,  since $\Rt(F_1)\cap \Rt(F_2)=\set{3}\neq \eset$. 

A $\ttS$-context $C$ (or $D$) is a total function from $\TermV$ to the set of $\ttS$ types. The operator $\uplus$ is extended point-wise.
A $\ttS$-judgment is a triple $\juCtt$, where $C$, $t$ and $T$ are respectively a $\ttS$-context, a term and $T$ a $\ttS$-type. 
A \textbf{sequence judgment} is a sequence of judgments  $(k\cdot (C_k\vdash t:T_k))_{k\in K}$ with $K\subeq \Nmzo$, often just written $(\ju{C_j}{t:T})_{\kK}$. For instance, if $5 \in K$, then the judgment on track 5 is $\ju{C_5}{t:S_5}$.

The set of $\ttS$-derivations is defined inductively by:
\[\infer{\phd}{\juxkT} \ax \hspace{1.4cm}
  \infer{\ju{C;x:\sSk}{t:T}}{\ju{C}{\lx.t:\sSk \rew T}}\abs\]
  \[\infer{\ju{C}{t:\sSk\rew T} \hspace{0.7cm} (\ju{D_k}{u:S_k})_{\kK}  }
        {\ju{C\uplus(\uplus_{\kK} D_k)}{t\,u:T}}\app\]
The $\app$-rule can be applied only if there is no track conflict in the context $C\uplus(\uplus_{\kK} D_k)$. In an $\ax$-rule concluding with $\juxkT$, the track $k$ is called the \textbf{axiom track} of this axiom rule. We refer to \S\;III and IV of \cite{VialLICS17} for additional examples and figures for all what concerns the basics of system $\ttS$, which are presented in this \S\;\ref{s:non-idem-its-LICS18}. As expected:

\begin{proposition}
  \label{s:S-sr-se}
System $\ttS$  enjoys  subject reduction and expansion (as system $\scrR$ does).
  \end{proposition}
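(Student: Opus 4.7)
The plan is to follow the standard two-step pattern for non-idempotent intersection systems: prove subject reduction via a substitution lemma, subject expansion via an anti-substitution (partition) lemma, then lift each to $\beta$-reduction by induction on reduction contexts. Since $\ttS$-derivations are inductively generated (only the types are coinductive), both directions can be carried out by structural induction on derivations, with the main novelty over system $\scrR$ being the need to track-manage sequence indices.

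First I would state and prove the substitution lemma: if $\Pi \triangleright \juCtt$ with $C(x) = \sSk$, and $(\Pi_k \triangleright \ju{D_k}{u:S_k})_{\kK}$ are derivations whose output contexts have roots pairwise disjoint and also disjoint from those of $C \setminus x$, then there is a derivation $\Pi'  \triangleright \ju{(C\setminus x)\uplus (\uplus_\kK D_k)}{t\subux:T}$. The proof is by induction on $\Pi$. The axiom case for $x$ with axiom track $k_0$ returns $\Pi_{k_0}$ directly; the axiom case for any other variable is trivial. For $\abs$, one applies the induction hypothesis after weakening the abstracted variable into the context (using $\alpha$-renaming to keep $x$ free and to avoid clashes between the bound name and the free variables of the $u$-derivations). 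For $\app$, one distributes the disjoint union of the $D_k$'s across the premises according to which occurrences of $x$ live in the head and which in each argument-premise; the disjointness of roots is preserved thanks to the track-conflict side-condition of $\app$. The lemma then gives subject reduction for a head $\beta$-step: if the last two rules of $\Pi \triangleright \ju{C}{\lxrs:T}$ are $\app$ and $\abs$, invert them and apply the substitution lemma. Closure under the compatible reduction (reduction under $\lambda$, left of $@$, right of $@$) is then a routine induction on the position of the reduced redex.

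For subject expansion I would prove the dual partition lemma: given $\Pi' \triangleright \ju{C'}{t\subux:T}$ and a term $t$ with designated occurrences of $x$, one can partition $\Pi'$ into a derivation $\Pi \triangleright \juCtt$ and a family $(\Pi_k \triangleright \ju{D_k}{u:S_k})_{\kK}$ so that $C = (C'\setminus\!(\text{contribs of }u)) \uplus (x : \sSk)$ and the $D_k$'s are exactly the pieces of $C'$ coming from the various typings of the substituted copies of $u$. The proof is by induction on $t$: at each occurrence of $x$, pick a fresh track $k$ not appearing elsewhere, read off $(S_k, D_k, \Pi_k)$ from the sub-derivation of $\Pi'$ typing that instance of $u$, and replace the sub-derivation by an $\ax$-rule on track $k$. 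At non-$x$ nodes the induction hypothesis simply recombines the premises, and the disjointness of the freshly-picked tracks ensures the $\app$ side-condition. Subject expansion for a head $\beta$-step then follows by putting one $\abs$ and one $\app$ on top; the compatible closure is again routine.

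The main obstacle, and essentially the only genuine difference with $\scrR$, is track hygiene. In the substitution lemma one must show that the roots appearing in $C \setminus x$ and in each $D_k$ stay pairwise disjoint as they are propagated through $\app$-nodes, which requires building in enough freshness at the start (or renaming tracks on the fly, since relabelling preserves derivability). In the partition lemma the symmetric issue is that one must actively \emph{invent} fresh tracks for the newly introduced $\ax$-rules on $x$, with the freshness condition strong enough to prevent any two copies of $u$'s derivation from clashing when merged back by $\app$. Once these bookkeeping invariants are isolated as an explicit notion of ``track-coherent family of derivations'', both lemmas and hence the proposition go through exactly as for $\scrR$.
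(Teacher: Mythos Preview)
Your proof is correct and follows the standard recipe for non-idempotent intersection type systems; the track-hygiene invariants you isolate are exactly what is needed, and the disjointness hypothesis in your substitution lemma is automatically met in the head-redex case because the original $\app$-rule already forces $C\uplus(\uplus_{\kK}D_k)$ to be conflict-free.

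The paper does not actually supply a proof here: the proposition is stated and the basics of system~$\ttS$ are deferred to prior work. From the way subject reduction is \emph{used} later (\S\;\ref{ss:residuation-subjugation} and the collapsing strategy), the intended argument is not organized around a substitution lemma but around an explicit biposition-level description: one defines a partial injection $\Res_b$ (and its total extension $\QRes_b$) from the bipositions of $P$ to those of the reduct $P'$, and checks directly that the resulting labelled tree is a well-formed $\ttS$-derivation with the same concluding judgment. Your approach reaches the same conclusion more abstractly and is entirely adequate for the proposition as stated; what the biposition route buys is that determinism of subject reduction and the residuation maps come for free, and those are the real inputs to Lemmas~\ref{l:iden-res} and~\ref{l:ref-red} and to the collapsing strategy. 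If you continued into those sections with only the substitution-lemma proof in hand, you would have to redo the analysis at the biposition level anyway.
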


Note that, if we forget about track annotations, a sequence becomes just a multiset and a $\ttS$-derivation collapses on a $\scrR$-derivation, so that a $\ttS$-typable term is also $\scrR$-typable. We may thus replace Goal\;\ref{goal:R-typability} by Goal\;\ref{goal:S-typability}.

\begin{goal}
\label{goal:S-typability}
Proving that every term is $\ttS$-typable.
\end{goal}

We reduce the problem (\ie proving that a term $t$ is typable in system $\ttS$) into a first order theory, that we call $\calT_t$.
We actually prove that $\calT_t$ indeed captures the problem, by means of a proposition that can be interpreted as a (simplified) completeness theorem (see Corollary\;\ref{corol:typability-and-thepsi}): we show that if $\calT_t$ is coherent, then we are able. Then we prove that $\calT_t$ is coherent for all term $t$. 
Go to the introduction of \S\;\ref{ss:toward-candidate-bisupp} and \ref{s:nihilating-chains} to have a closer descriptions of the proof of the coherence of $\calT_t$ and of its main stages.\ignore{\\

  \noindent \textbf{Choice function for axiom tracks}
  Observe now that, to define a $\ttS$-derivation typing a term $t$ (thus fulfilling Goal\;\ref{goal:S-typability}), one must  one choose an axiom track in every axiom rule so that no conflict arise. In this short article, let us just say that we can escape this problem by resorting to an \textit{arbitrary} injection from $\bbN^*$ to $\Nmzo$, that chooses axiom tracks for us: we say that 
a $\ttS$-derivation $P$ is a $\code{\cdot}$-derivation if $P(a)=(\ju{x:(k\cdot T)}{x:T})$ (\ie $a$ is the position of an $\ax$-rule typing $x$ in $P$), then $k=\code{a}$.  If $t$ is $\code{\cdot}$-typable, then $t$ is in particular $\ttS$-typable, 
%We focus now on the following subproblem: given a $\lam$-term $t$ and an injection $\code{\cdot}:\bbN^* \rew \Nmzo$,  is there a $\code{\cdot}$-derivation typing $t$?
so that we now replace Goal\;\ref{goal:S-typability} with Goal\;\ref{goal:code-S-typability}:

\begin{goal}
\label{goal:code-S-typability}
Given $\code{\cdot}:\bbN^*\rew \Nmzo$, proving that every term is $\code{\cdot}$-typable.
\end{goal}

The function $\code{\cdot}$ is assumed to be an injection to avoid us bothering  about track conflict any further while achieving Goal\;\ref{goal:code-S-typability}.  It is also \wrt the funciton $\code{\cdot}$ that we will capture the occurrences of emptiness (see\;\S\;\ref{ss:toward-candidate-bisupp}). \techrep{\pierre{Let us present next the key notion of \textit{bisupport candidate}, that will help us achieve Goal\;\ref{goal:code-S-typability}.}}}

\subsection{Parsing, Pointing}
\label{ss:parsing-pointing}

In this technical section, we explain how we may point inside a $\ttS$-type or a $\ttS$-derivation, thanks to tracking. We define the support of a $\ttS$-type and a $\ttS$-derivation, and also the key notions of biposition and bisupports. Let $\bbN^*$ denote the set of the finite words on $\bbN^*$, the operator $\cdot$ denotes concatenation, $\epsi$ the empty word and $\leqs$ the prefix order \eg $2\cdot 1\cdot 3\cdot 7\in \bbN^*$, $2\cdot 1\leqs 2\cdot 1\cdot 3\cdot 7$. Moreover, the \textbf{collapse} $\ovl{k}$ of a track $k$ is defined by $\ovl{k}=\min(k,\,2)$. This notation is extended letter-wise on $\bbN^*$  \eg $\ovl{0 \ct 5 \ct 1 \ct 3\ct 2}=0\ct 2 \ct 1\ct 2\ct 2$. The \textbf{support of term} is defined by induction as expected: $\supp{x}=\set{\epsi}$, $\supp{\lx.t}=\set{\epsi}\cup 0\cdot \supp{t}$ and $\supp{t\,u}=\set{\epsi}\cup 1\cdot \supp{t} \cup 2\cdot \supp{u}$. If $a\in \bbN^*$ and $\ovla\in \supp{t}$, we denote by $\tra$ the subterm of $t$ rooted at position $\ovla$ whereas $t(a)$ is the constructor ($\arob$, $x$ or $\lx$) of $t$ at position $a$ \eg $t\rstr{0}=y\,x$ and $t(0\cdot 1)=y$ with $t=\lx.y\,x$.
%If $a\in \bbN^*$ 
\\

Let $\Sex=(2\cdot \tv,7\cdot \tv')\rew \tv\secu$. %and let $\Pex$ be the $\ttS$-derivation\footnote{in order to gain space, we do not write right-hand sides of axiom rules. We indicate the track of argument derivations between brackets \eg $x:(3\cdot \tv)\trck{5}$ means that the argument judgment $\ju{x:(3\cdot \tv)}{x:\tv}$ is on track 5)}:
To gain space, we write $\ju{k}{x:T}$ (with $k\geqs 2$, $x\in \TermV$, $T$ $\ttS$-type) instead of $\ju{x:(k\cdot T)}{x:T}$ in $\ax$-rules.
We also indicate the track of argument derivations between brackets \eg $x:(3\cdot \tv)\trck{5}$ means that the argument judgment $\ju{x:(3\cdot \tv)}{x:\tv}$ is on track 5):
%\vspace{-0.3cm}
\begin{center}
$
\infer[\ax]{
\infer[\app]{
\infer[\ax]{\phd}{\ju{3}{y:\Sex}}\hspace{0.6cm}
\infer[\ax]{\phd}{\ju{3}{x: \tv}\trck{5}}\hspace{0.3cm}
\infer[\ax]{\phd}{\ju{9}{x:\tv'}\trck{6}}}{
\ju{x:(3\cdot \tv,9\cdot \tv'),\; y:(3\cdot \Sex)}{y\,x:\tv\secu}
  }}{\ju{y:(3\cdot \Sex)}{\lx.y\,x: (3\cdot \tv,9\cdot \tv')\rew \tv\secu}}
$
\end{center}
The \textbf{support of a type} (resp. \textbf{a sequence type}), which is a tree of $\bbN^*$ (resp. a forest),  is defined by mutual coinduction: $\supp{\tv}=\set{\epsi},~ \supp{F\rew T}=\set{\epsi}\cup \supp{F}\cup 1\cdot \supp{T}$ and $\supp{(T_k)_{k\in K}}= \cup_{k\in K} k\cdot \supp{T_k}$ \eg $\supp{\Sex}=\set{\epsi,1,2,7}$. We can define the \textbf{support} of a derivation $P\rhd C\vdash t:T$: $\supp{P}=\epsi$ if $P$ is an axiom rule, $\supp{P}=\{\epsi\}\cup 0\cdot \supp{P_0}$ if $t=\lambda x.t_0$ and $P_0$ is the subderivation typing $t_0$, $\supp{P}=\{\epsi\}\cup 1\cdot \supp{P_1} \cup_{\kK} k\cdot \supp{P_k}$ if $t=t_1\,t_2,~ P_1$ is the left subderivation typing $t_1$ and $P_k$ the subderivation typing $t_2$ on track $k$. The $P_k$ ($k\in K$) are called \textbf{argument derivations}. For instance, $\supp{\Pex}=\set{\epsi,0,0\ct 1,0\ct 5,0\ct 6}$, $P(0\cdot 1)=\ju{y:(3\ct \Sex)}{y:\Sex}$ and $P(0\ct 6)=\ju{x:(9\ct \tv')}{x:\tv'}$.

\noindent \textbf{Choice function for axiom tracks}
Meanwhile, note that, to define a $\ttS$-derivation typing a term $t$ (thus fulfilling Goal\;\ref{goal:S-typability}), one must  one choose an axiom track in every axiom rule so that no conflict arise. In this short article, let us just say that we can escape this problem by resorting to an \textit{arbitrary} injection from $\bbN^*$ to $\Nmzo$, that chooses axiom tracks for us: we say that 
a $\ttS$-derivation $P$ is a $\code{\cdot}$-derivation if $P(a)=(\ju{x:(k\cdot T)}{x:T})$ (\ie $a$ is the position of an $\ax$-rule typing $x$ in $P$), then $k=\code{a}$.  If $t$ is $\code{\cdot}$-typable, then $t$ is in particular $\ttS$-typable, 
%We focus now on the following subproblem: given a $\lam$-term $t$ and an injection $\code{\cdot}:\bbN^* \rew \Nmzo$,  is there a $\code{\cdot}$-derivation typing $t$?
so that we now replace Goal\;\ref{goal:S-typability} with Goal\;\ref{goal:code-S-typability}:

\begin{goal}
\label{goal:code-S-typability}
Given $\code{\cdot}:\bbN^*\rew \Nmzo$, proving that every term is $\code{\cdot}$-typable.
\end{goal}

The function $\code{\cdot}$ is assumed to be an injection to avoid us bothering  about track conflict any further while achieving Goal\;\ref{goal:code-S-typability}.  It is also \wrt the funciton $\code{\cdot}$ that we will capture the occurrences of emptiness (see\;\S\;\ref{ss:toward-candidate-bisupp}). \techrep{\pierre{Let us present next the key notion of \textit{bisupport candidate}, that will help us achieve Goal\;\ref{goal:code-S-typability}.}}\\ 

We explain now how to point \textit{inside} types \textit{nested} in $\ttS$-derivation, or to axioms typing a given variable, and formalize the associated pointers. 
If $P$ is a $\ttS$-derivation and $a\in \supp{P}$, then the judgment at position $a$ is denoted $\ju{\ttC^P(a)}{\tra:\ttT^P(a)}$ \eg $\ttC^{\Pex}(0\cdot 6)=x:(9\cdot \tv')$ and $\ttT^{\Pex}(0\cdot 6)=\tv'$. Let $P$ be a $\ttS$-derivation. A \textbf{right position} is a pair of the form $(a,c)$, where $a\in \supp{P}$ and $c\in \supp{\ttT^P(a)}$, we write $\bisupp{P}$ for the \textbf{(right) bisupport} of $P$ \ie its set of (right) bipositions. If $(a,c)\in \bisupp{P}$, then $P(a,c)$ denotes $\ttT^P(a,c)$ \eg $\Pex(0\ct 6,\epsi)=\tv'$ and $\Pex(0\cdot 1,\epsi)=\tv$, $\Pex(0\cdot 1,1)=\tv\secu$ and $\Pex(\epsi,9)=\tv'$. Note that, contrary to \cite{VialLICS17}, we only consider right bipositions. For this article, we think a biposition as type symbol ($\tv \in \TypeV$ or $\rew$) nested in a given $\ttS$-derivation $P$ and we often use this heuristic identification implicitly.

\techrep{Some other notations are useful to handle $\ttS$-derivations: assume}\paper{Assume} that $P$ types $t$. We set $A=\supp{P}$ and $B=\bisupp{P}$.
If $x\in \TermV,~a\in A$, we set $\AxPa(x)=\set{a_0 \in A\,|\,a\leqslant a_0,\,t(a)=x,\nexists a_0',\,a\leqslant a_0'\leqslant a_0,\,t(a'_0)=\lx }$ (occurrences of $x$ in $P$ above $a$, that are not bound \wrt $a$) \eg $\Ax^{\Pex}_\epsi(x)=\eset$ ($x$ is bound at the root), but $\Ax^{\Pex}_0(x)=\set{0\cdot 5,0\cdot 6}$ ($x$ is not bound at position 1). Technically, this notation is crucial to harness relevance 
(see\techrep{ the definition of} polar inversion, \S\;\ref{ss:asc-pi-c-u}) but the important thing to remember is that, thanks to tracking, in system $\ttS$, one can unambiguously designate the axiom rules typing the variable of a $\lx$, \eg in a redex.
\techrep{ 
If $P(a)=\ju{x:k\cdot T}{x:T}$, then we set $\trP{a}=k$ \ie $\trP{a}$ is axiom track  at  position $a$ \eg $\tttr^{\Pex}(0\ct 6)=9$.
Since $t$ is finite, we have $C(a)(x)=\uplus_{a_0\in \AxPa(x)} (\trP{a_0}\ct T(a_0))$. This indicates that in a $\ttS$-derivation, contexts and types can be computed from the support $\supp{P}$ and the types created in axiom rules.}

%If $a\in \supp{P}$, then $a$ points to a judgment inside $P$ typing $t\rstr{\ovl{a}}$. We write this judgment $C(a)\vdash t\rstr{\ovl{a}}:T(a)$. Context $C$ and type $T$ should be written $C^P$ and $T^P$ but we omit $P$. From now on, we shall also write $\tra,\ t(a)$ instead of $t\rstr{\ovl{a}},\ t(\ovl{a})$.

\subsection{Typing some Notable Terms in System $\scrR$}
\label{ss:typing-examples}

We now use system $\scrR$ to type a few terms  satisfying fixpoint equations. Some of them are not head normalizing.  Let $\Delf=\lx.f(x\,x),\,\cu=\lambda f.\Delf  \Delf$ ($\cu$ is \textit{Curry fixpoint combinator}). Moreover, if $I=\lx.x$ and $K=\lxy.x$, then  $\cu\,I\rew \Omega$ (satisfying $\Om \bred \Om$), $\cu\,f\rew \cuf:=\Delf\,\Delf$ (satisfying $\cuf\bred f(\cuf)$) and $\cu\,K\bred \cul:= (\lx.\ly.xx)\lx.\ly.xx$ (satisfying $\cul\bred \ly.\cul$). 
\techrep{
We recall from the Introduction\techrep{ the definition of the \textit{order} of a $\lam$-term}:

\begin{definition}
  \label{def:order-comp-unsound}
Let $t$ be a $\lam$-term. The \textbf{order} of $t$ is $\sup \set{n\in \bbN\,|\, \exists x_1,\ldots,x_n,u\ \text{s.t.}\ t\bred^* \lx_1\ldots \lx_n.u}$.
\end{definition}
}

Iterating reduction on $\cuf$ and $\cul$ infinitely many times, we see that $\cuf$ (resp. $\cul$) \textit{strongly converges} to the infinitary term  $\fom:=f(f(...))$ (resp. $\ly.\ly....$) in the sense of \cite{KennawayKSV97,Czajka14}. Thus, $\Om$ and $\cuf$ are both \textbf{zero terms} (terms of order 0)  and $\cul$ a term of infinite order. The term $\Om$ is actually a \textit{mute term} (see \S\;\ref{ss:intro-difficulty-comp-unsound}) and $\cuf$ is a term whose B\"ohm tree~\cite{Barendregt85} $\fom$ does not contain $\bot$.

%We define the \textbf{order} of type $B$ as $\sup \set{n\in \bbN\,|\, B=\set{A_i^1}_{\iI(1)}\rew \ldots \rew \set{A_i^{n-1}}_{\iI(n-1)}\rew C}$ \eg the order of $o,\, \set{o,o'}\rew o,\, \set{o}\rew \set{o}\rew \set{o} \rew o$ are 0, 1, 3 respectively. Where $R$ is the reflexive type satisfying $R=\set{R}\rew R$, by unfolding the left occurrence of $R$, we may write $R=\underbrace{\set{R}\rew \ldots \set{R}}_n\rew R$ for any $n$, so $R$ has an infinite order.
%By Definition\;\ref{def:order-type}, the \textit{order of a type} $\tau$ is the number of top-level arrows in $\tau$. When $R$ is the reflexive type satisfying $R =\mult{R}_\om \rew R$, by unfolding the right occurrence of $R$ $n$ times, we have $R=\underbrace{\mult{R}_\om\rew \ldots \mult{R}_\om}_n\rew R$ for any $n$, so $R$ has an infinite order.

Because of rule $\abs$ and subject reduction (that is satisfied in $\scrR$), a term of order $n$ may only be typed with a type of order $\geqslant n$, as explained in \S\;\ref{ss:intro-sem-aspects-infty-types}. %Since the reflexive type is of infinite order, it does not give any information about the order of the term it types. %However, order may be captured more precisely by some derivations featuring coinductive types.
However, some $\scrR$-derivations can capture more precisely the order of terms.
For all $\scrR$-type $\tau$, we define coinductively $\phi_\tau$ by $\phi_\tau=\mult{\phi_\tau}_\om\rew \tau$. For instance, we consider the following typing of $\cu$ (omitting left-hand sides of $\ax$-rules):\\[-3.4ex]
\[\Pi_{\Delf}= \infer[\abs]{\infer[\app]{ \infer[\ax]{\phd}{f:\mtau \rew \tau} \hspace{0.3cm}
    \infer[\app]{  \infer[\ax]{\phd}{x:\phi_\tau}
        \hspace{0.3cm} \big( \infer[\ax]{\phd }{x:\phi_\tau}\big)_{\om} }{ \ju{x:\mult{\phi_\tau}_{\om}}{x\,x:\tau}}}   {
      \ju{f:\mult{\mtau\rew \tau};x:\mult{\phi_\tau}_{\om}}{f(x\,x):\tau}}}{
  \ju{f:\mult{\mtau\rew \tau}}{\Delf:\phi_\tau\ (=\mult{\phi_\tau}_\om\rew \tau)}}\]
\[    \Pi_{\cu}=
    %\infer{\Pi_{\Delf}\tri \ju{f:\set{\set{A}\rew A}}{\Delf:\phi_A}\hspace{0.3cm} \Pi_{\Delf}\tri \ju{f:\set{\set{A}\rew A}}{\Delf:\phi_A}}
    \infer[\abs]{
\infer[\app]{\Pi_{\Delf} \hspace{0.9cm} ( \Pi_{\Delf})_\om}{\ju{f:\mult{\mtau\rew \tau}_\om}{\Delf\,\Delf:\tau} }}{
      \ju{}{\cu:\mult{\mtau\rew \tau}_\om\rew \tau}
    }
  \]
Thus, in system $\scrR$, $\cu$ is typable with $\mult{\mtau\rew \tau}_\om\rew \tau$ for \textit{any} type $\tau$. Notice that we also have derived $\ju{f:\mult{\mtau\rew \tau}_\om}{\cuf:\tau}$ for any $\scrR$-type $\tau$.

 Using suitable instances or variants of $\Pi_{\cu}$, we can build $\Pi_{\Om}\tri \ju{}{\Om:\tau}$ (for any $\tau$) and $\Pi_{\lam}\tri \ju{}{\cul:\emul\rew \emul \rew\ldots}$ %Derivations $\Pi_{\Om}$ and $\Pi_{\lam}$ can also be obtained using subject reduction, noticing that we can derive, for all type $A$, $ \ju{}{\lx.x:\set{A}\rew A}$ for $\Pi_{\Om}$ and $ \ju{}{K:\es\rew A}$ for $\Pi_{\lam}$.
 By instantiating $\tau$ with a type variable $o$, we get $ \ju{}{\Om:\tv}$ and $\ju{}{\cuf:\tv}$. Thus, the zero terms $\Om$ and $\cuf$ are typed\footnote{
Note that $\cuf \rew f(\cuf)$ ($\cuf$ is HN) and $\cuf$ is typable with $\tv$ in the finite system $\scrRo$.} with types of order 0  whereas $\cul$ (whose order is infinite) is typed with a type of infinite order, as it was constrained to be.

We will generalize this result (not only for terms built from fixpoint combinators like $\Om$ or $\lx.\Om$) and show that, for all \textit{pure} terms $t$ of order $n$, there is a $\scrR$-derivation\techrep{ (or a $\scrDw$-derivation)} typing $t$ with a type of order $n$ (Theorem~\ref{th:order}).

\ignore{

\subsection{Parsing, Pointing}
\label{s:parsing}
 
Let us give now a more low-level presentation of the pointers that System $\ttS$ naturally features.

%Let $t$ be a term. A position of $t$ will be a $b\in \set{0,1,2}^*$ \ie a word on the alphabet $\set{0,1,2}$. A position of $\ttS$ will be a word $a\in \bbN^*$ \ie a word on the alphabet $\bbN$. The notion of collapse can be extended letter-wise on $\bbN^*$ \eg $\ovl{0 \ct 5 \ct 1 \ct 3\ct 2}=0\ct 2 \ct 1\ct 2\ct 2$. 

%If $a_1,\, a_2\in \bbN^*$, $a_1\ct a_2$ is the concatenation of $a_1$ and $a_2$. We may write $a_1a_2$ and $01$ instead of $a_1\ct a_2$ and $0\ct 1$ and extend $\ct$ to sets of words. We write  $\epsi$ for the empty word and $a_1\leqslant a_2$ if there is $a_3\in \bbN^*$ s.t. $a_2=a_1\ct a_3$ (prefix order).

%A \textbf{tree} $A$ of $\bbN^*$ is a non-empty subset of $\bbN^*$ that is downward-closed for the prefix order ($a\leqslant a'\in A$ implies $a\in A$).
% A subset $F\subset \bbN^*$ is a \textbf{forest} if $F=A-\{\epsi\}$ for some tree $A$ such that $0,\,1 \notin A$.
%A \textbf{forest} is the set  of the form $A\setminus \set{\epsi}$ for some tree $A$ such that $0,\,1 \notin A$. Formally, a \textbf{labelled tree} $T$ (resp. \textbf{labelled forest} $F$) is a function to a set $\Sigma$, whose domain is called its \textbf{support} ($\supp{T},\,\supp{F}$) and is a tree (resp. a forest). If $U=T$ or $U=F$, then $U\rstr{a}$ is the function defined on $\set{a_0\in \bbN^*\,|\, a\ct a_0\in \supp{U}}$ and $U\rstr{a}(a_0)=U(a\ct a_0)$. If $U$ is a labelled tree (resp. forest), then $U\rstr{a}$ also is.

Lambda-Terms can be seen as labelled trees following this pattern:

\begin{tikzpicture}
  
  \draw (0.9,-2.6) node{\textbf{Variable $x$}} ;
\draw (0.9,-2.05) circle (0.25) ;
  	\draw (0.9,-2.05) node{$x$} ;
\draw (3.5,-2.6) node{\textbf{Abstraction $\lambda x.u$}};
  \draw (2.8,0) -- (4.2,0) -- (3.5,-1.2) -- (2.8,0) ;
  \draw (3.5,-0.5) node{\textbf{$u$}} ;
  \draw (3.5,-1.2) -- (3.5,-1.8) ;
  \draw (3.5,-1.5) node[right]{\red{0}} ;
  \draw (3.5,-2.05) circle (0.25) ;
  \draw (3.5,-2.05) node{$\lambda x$} ;

  \draw (6.4,-2.66)  node {\textbf{Application $u\, v$} } ;

\draw (7.3,-1.2) -- (8,0) -- (6.6,0) -- (7.3,-1.2) ;
\draw (6.58,-1.87) -- (7.3,-1.2) ;
\draw (7.3,-1.5) node {$\red{2} $} ;
\draw (7.3,-0.5) node {$v$} ;

\draw (5.5,-1.2) -- (4.8,0) -- (6.2,0) -- (5.5,-1.2) ;
\draw (6.22,-1.87) -- (5.5,-1.2) ;
\draw (5.5,-1.5) node {$\red{1}$} ; 
\draw (5.5,-0.5) node {$u$} ;

\draw (6.4,-2.05) node {$\arob$} ;
\draw (6.4,-2.05) circle (0.25) ;
\end{tikzpicture}

Nodes are labelled by $x$, $\lambda x$ ($x$ ranging over $\TermV$, a countable set of term variables) or $\arob$. The integers that label edges will also be called \textbf{tracks} \eg 0 is dedicated to abstractions, 1 to application left-hand sides, 2 to application arguments.  A position of a term $t$ is a $b\in \set{0,1,2}^*$ \ie a word on the alphabet $\set{0,1,2}$. The set of positions of a term $t$, known as its \textbf{support}, is written $\supp{t}$.

$\ttS$-types can also be seen as labelled trees, in which nodes are labelled by $\tv$ ($\tv$ ranging over $\TypeV$) or $\rew$, and edges by tracks $\geqslant 1$
For instance, $(7\ct \tv_1,3\ct \tv_2,2\ct \tv_1)\rew \tv$ is represented by:
\begin{center}
\begin{tikzpicture}
\draw (2,0) node {\small $\rew$};
\draw (2,0) circle (0.18);

\draw (3,1)  node{\small $\tv$};
\draw (3,1) circle (0.18);
\draw (2.87,0.87) -- (2.13,0.13) ;
\draw (2.7,0.55) node {\small \red{$1$} };

\draw (1.3,1) node{\small $\tv_1$} ;
\draw (1.3,1) circle (0.18);
\draw (1.91,0.161) -- (1.38,0.85) ;
\draw (1.8,0.55) node {\small \red{$2$} };

\draw (0.6,1) node{\small $\tv_2$} ;
\draw (0.6,1) circle (0.18);
\draw (1.86,0.13) -- (0.68,0.84);
\draw (1.4,0.55) node {\small \red{$3$} };

\draw (-0.4,1) node{\small $\tv_1$} ;
\draw (-0.4,1) circle (0.18);
\draw (1.83,0.08) -- (-0.3,0.85);
\draw (0.1,0.55) node {\small \red{$7$} };
\end{tikzpicture}
\end{center}
Thus, for types, track 1 is dedicated to right-hand sides of arrows.\\

Formally, we write $\bbN^*$ for the set of words on the alphabet $\bbN$. If $a_1,\, a_2\in \bbN^*$, $a_1\ct a_2$ is the concatenation of $a_1$ and $a_2$. We may write $a_1a_2$ and $01$ instead of $a_1\ct a_2$ and $0\ct 1$ and extend $\ct$ to sets of words.
We write  $\epsi$ for the empty word and $a_1\leqslant a_2$ if there is $a_3\in \bbN^*$ s.t. $a_2=a_1\ct a_3$ (prefix order).
A \textbf{tree} is a subset of $\bbN^*$ downward closed for the prefix order ($A$ is a tree if $a_2\in A,\, a_1\leqslant a_2$ implies $a_1\in A$). A \textbf{labelled tree} $T$ is a function from a tree $A$ to a set $\Sig$. We set then $\supp{T}=A$. If $a\in \supp{T}$, $T\rstr{a}$ is the subtree of $T$ at position $a$: $\supp{T\rstr{a}}=\set{a_0\in \bbN^*|a\cdot a_0\in \supp{T}}$ and $T\rstr{a}(a_0)=T(a\ct a_0)$.

Derivations can also be seen as labelled trees. We use the same notations as for the typing rules: position $\epsi$ points to the judgment concluding derivation $P$, if $P$ types $\lx.t$, its unique depth 1 subderivation is $P\rstr{0}$ and if $P$ types $t\,u$, $P\rstr{1}$ is the depth 1 subderivation typing $t$ and, for $\kK$, $P\rstr{k}$ is the subderivation concluded with $\ju{D_k}{u:S_k}$. Thus, we call a track $\geqslant 2$ an \textbf{argument track}. For instance, a subderivation on track $9$ will be a subderivation typing the argument $u$ of the underlying $\lambda$-term $t\,u$. As a subterm, $u$ is on track 2. Thus, the subderivation on track 9 will type the subterm on track 2. This motivates the notion of \textbf{collapse} (written $\ovl{k}$) of a track $k$, setting $\ovl{k}=\min(k,\,2)$.

%\pierre{If $k\in \bbN$, the \textbf{collapse} of $k$, written $\ovl{k}$, is $\min(k,2)$. If $|a|=\ell,\, a=k_0\ct k_1 \ct... \ct k_{\ell-1}$, the \textit{collapse} of $a$, also written $\ovl{a}$, is $\ovl{k_0}\cdot \ovl{k_1}\ldots \cdot \ovl{k_{\ell-1}}$ Thus, $\ovl{a}\in \set{0,1,2}^*$.}

The notion of collapse can be extended letter-wise on $\bbN^*$ \eg $\ovl{0 \ct 5 \ct 1 \ct 3\ct 2}=0\ct 2 \ct 1\ct 2\ct 2$. 
If $a\in \supp{P}$, then $a$ points to a judgment inside $P$ typing $t\rstr{\ovl{a}}$. We write this judgment $C(a)\vdash t\rstr{\ovl{a}}:T(a)$. Context $C$ and type $T$ should be written $C^P$ and $T^P$ but we omit $P$. From now on, we shall also write $\tra,\ t(a)$ instead of $t\rstr{\ovl{a}},\ t(\ovl{a})$.

In the example above, $\Pex(01)=\ju{x:(4\ct \Sex)}{x:\Sex}$, so
$C(01)=x:(4\ct \Sex)$ \ie $C(01)(x)=(4\ct \Sex)$. 
Since $\Sex=(8\ct o,3\ct o',2\ct o)\rew o'$, we have $C(01)(x)(4)=\rew$, $C(01)(x)(43)=o',\ T(01)(\epsi)=\,\rew$, $T(01)(1)=o'$.
Likewise, $\Pex(03)=\ju{x:\!(2\ct \tv')}{\tv'}$, so that
 $C(03)=x:(2\ct \tv')$ and $T(03)=\tv'$. Thus, $C(03)(x)(2)=\tv'$ and $T(03)(\epsi)=o'$. We also have $C(0)(x)=(2\ct \tv',4\ct (8\ct \tv,3\ct \tv',2\ct \tv)\rew \tv',5\ct \tv,9\ct \tv)$, so that $C(0)(x)(2)=\tv'$ and $C(0)(x)(42)=\tv$.\\

This example motivates the notion of \textbf{bipositions}: a biposition (metavariable $\p$) is a pointer into a type nested in a judgment of a derivation. A pair $(a,c)$ is a \textbf{(right) biposition} of $P$ if $a\in \supp{P}$ and $c\in \supp{T(x)}$.We write then $P(a,c)$ for $T(a)(c)$. 
% Thus, a biposition (metavariable $\p$) %, not to be confused with $b$) points to a symbol of type (on the right-hand side of $\vdash$) nested in a judgment inside $P$. We write then $P(a,c)$ for $T(a)(c)$ and $\p \in P$ instead of $\p \in \bisupp{P}$.

\begin{definition}
The \textbf{bisupport} of a derivation $P$, written $\bisupp{P}$, is the set of its (right or left) bipositions.
\end{definition}

%We may write $\p \in P$ instead of $\p \in \bisupp{P}$. 
Some other notations are useful to handle $\ttS$-derivations: assume that $P$ types $t$. We set $A=\supp{P}$ and $B=\bisupp{P}$.
If $x\in \TermV,~a\in A$, we set $\AxPa(x)=\set{a_0 \in A\,|\,a\leqslant a_0,\,t(a)=x,\nexists a_0',\,a\leqslant a_0'\leqslant a_0,\,t(a'_0)=\lx }$ (occurrences of $x$ in $P$ above $a$, that are not bound w.r.t. $a$). If $P(a)=\ju{x:k\cdot T}{x:T}$, then we set $\trP{a}=k$ and call $\trP{a}$ the \textbf{axiom track} that has been used at $a$ \eg $\tttr^{\Pex}(0\ct 2)=9$.
Since $t$ is finite, we have $C(a)(x)=\uplus_{a_0\in \AxPa(x)} (\trP{a_0}\ct T(a_0))$. This indicates that in a $\ttS$-derivation, contexts and types can be computed from the support $\supp{P}$ and the types created in axiom rules.

We define coinductively a \textit{collapse} $\pi$ from the set of types of $\ttS$ to the set of types of $\scrD$ by
$\pi(\tv)=\tv$ and $\pi(\sSk\rew T)=\set{\pi(S_k)}_{\kK} \rew \pi(T)$. This collapse can be straightforwardly extended to a collapse from the set of derivations of $\ttS$ to the set of derivations of $\scrD$, noticing that $\sSk=(S'_k)_{\kK'}$ implies $\pi(\sSk)=\pi((S'_k)_{\kK'})$.
For instance, the derivation $\Pex$ above collapses on:\\[-4ex]
{$$\Piex=\infer{\infer{\infer{\phd}{x:\set{\set{o,o'}\rew o'}  }\\
\infer{\phd}{x:\set{o}} \\  
\infer{\phd}{x:\set{o'}} \\
    }{\ju{x:\set{o',\set{o,o'}\rew o',o}}{xx:o'}}}{
  \ju{}{\lx.xx:\set{o',\set{o,o'}\rew o',o} \rew o'}
    }$$%\\[-3ex]
}
As a consequence of the collapse:
\begin{proposition}
\label{prop:collapse-S-D}  
If a term $t$ is $\ttS$-typable, then it is $\scrD$-typable. 
\end{proposition}

\noindent Thus, if every term is typable in $\ttS$ (the proof of which takes the remainder of this paper), then every term is  typable in $\scrD$.

%\noindent Thus, if we prove that every term is typable in $\ttS$ (what we will do now), then every term will be typable in $\scrS$ too. %This is what we will form the remainder of this article.

%%%% dans version longue, écrire  le collapse pour les dérivations en toutes lettres.

}

\newcommand{\ascarrowleft}[2]{
  \trans{#1}{#2}{
    \draw [->,dotted,>=stealth] (0,0) --++ (-0.35,0.5) ;
 %   \draw (-0.1,0.25) node {\ssz \texttt{a}};
  }
}

\newcommand{\ascarrowup}[2]{
  \trans{#1}{#2}{
    \draw [->,dotted,>=stealth] (0,0) --++ (0,0.5) ;
%     \draw (0.1,0.25) node {\ssz \texttt{a}};
}}

\newcommand{\inBis}{\mathtt{inBis}}
\newcommand{\orange}[1]{\textcolor{orange}{#1}}

\section{Bisupport Candidates}
\label{s:candidate-bisupp}

% subsec {s:asc-pol-inv-threads-informal}
% {obs:asc-threads-two-ends-informal} dt top occ de asc thread
% {obs:thread-at-most-2-consumed-informal}
% 
%

  In this section, we characterize, for a given term $t$, the \textit{bisupport candidate} \ie the (potential) forms of a derivation typing $t$. {By ``form'', we intuitively mean a set of unlabelled positions (that must be stable under some suitable relations). We make explicit that idea by studying first the possible forms of a $\ttS$-type in \S\;\ref{ss:candidate-supp-types}. The notion of unlabelled position has a meaning only because tracks of $\ttS$ allow us to define suitable pointers. This would be impossible in system $\scrR$.

\subsection{A Toy Example: Support Candidates for Types}
\label{ss:candidate-supp-types}

In this section, we explain how the notion of ``form'' of a support can be formalized by giving a characterization of the supports of $\ttS$-types in terms of stability conditions.

The definition of a particular $\ttS$-type $T$ can be understood as a two-step process: first, we choose the support $C:=\supp{T}$, next, we choose the type labels $T(c)$ (in  the signature $\TypeV\cup \set{\rew}$) given to the positions $c\in C$. However, not all the subsets $C\subseteq \bbN^*$ are fit to be the support of a type, and not all the possible decorations of a suitable set $C$ yield a correct type. 

%A type $T$ may be defined by its support $C:=\supp{T}$ and the labels $T(c)$ given to nodes $c$ ($c$ ranging over $C$).  a supprimer
For instance, let us consider the two sets of positions $C_1$ and $C_2$ below. Do they define the supports of some types $T_1$ and $T_2$?
\begin{center}
  \begin{tikzpicture}
%  \draw (-3.4,2.5) node [below right] {\normalsize \bf What is a correct type ? }; a supprimer   
%  \draw (0,0) node {$\rew$} ;
%  \draw (0,0) node {$\tv$} ;
  \draw (0,0) circle (0.18) ;

  \draw (0.13,0.13) -- (0.57,0.57) ;
  \draw (0.45,0.25) node {$\tiny \red{1}$ };
  
%  \draw (0.7,0.7) node {$\tv_1$} ;
  \draw (0.7,0.7) circle (0.18) ;
  
  \draw (-0.13,0.13) -- (-0.57,0.57) ;
  \draw (-0.45,0.25) node {$\tiny \red{4}$};
%  \draw (-0.7,0.7) node {$\rew$} ;
  \draw (-0.7,0.7) circle (0.18) ;

  \draw (-0.57,0.83) -- (-0.13,1.27);
  \draw (-0.25,0.95) node { $\tiny \red{1} $ } ;
%  \draw (0,1.4) node {$\rew$} ;
%  \draw (0,1.4) node {$\tv_2$} ;
  \draw (0,1.4) circle (0.18);

  \draw (-0.83,0.83) -- (-1.27,1.27) ;
  \draw (-1.15,0.95) node {$\tiny \red{3}$};
%  \draw (-1.4,1.4) node {$\tv_1$} ;
  \draw (-1.4,1.4) circle (0.18) ;

  \draw (-2.8,1.4) circle (0.18) ;
%  \draw (-2.8,1.4) node {$\tv_3$} ;
  \draw (-2.2 ,0.95)  node {$\tiny \red{8}$};
  \draw (-0.87,0.73) -- (-2.7,1.26) ;

%  \draw (-2.1,-0.5) node [below right] {
%    \parbox{5cm}{\small \bf Wrong Labels}};
  
  \draw (-3,-0.2) node [below right] {
    \parbox{5cm}{$C_1=\set{\epsi,\, 1,\, 4,\, 4\ct 1,\, 4\ct 3,\, 4\ct 8  }$}};

%%%% wrong types

  \draw (4,0) circle (0.18) ;

  \draw (4.13,0.13) -- (4.57,0.57) ;
  \draw (4.45,0.25) node {$\tiny \red{1}$ };

  \draw (4.7,0.7) circle (0.18) ;
  
  \draw (3.87,0.13) -- (3.43,0.57) ;
  \draw (3.55,0.25) node {$\tiny \red{4}$};
  
  \draw (3.3,0.7) circle (0.18) ;

  \draw (3.17,0.83) -- (2.73,1.27) ;
  \draw (2.85,0.95) node {$\tiny \red{3}$};
  \draw (2.6,1.4) circle (0.18) ;

%  \draw (2.4,-0.5) node [below right] {
%    \parbox{5cm}{\small \bf Wrong Support} }  ;
  
  \draw(2,-0.2) node [below right] {
   \parbox{5cm}{\small $C_2=\set{\epsi,\, 1,\, 4,\, 4\ct 3 }$} };

%  \draw (-3,-2.8) node [below right] {
%    \parbox{10cm}{\fnsz {\bf Candidate Support:} a set of positions that is the support of a type
%      \begin{itemize}
%      \item $c\rewto c\ct k$ (a candidate supp is a tree)
%      \item $c\ct 1 \rewtt c \ct k$ (if a node does not have a 1-son, it is a leaf)
%      \end{itemize}}} ;
  \end{tikzpicture}\\[-4ex]
\end{center}
As it turns out, $C_1$ is the support of a type \eg $(4\cdot(8\cdot \tv_3,3\cdot \tv_1)\rew \tv_2)\rew \tv_1$ (figure below). %By contrast, no type $T$ may satisfy $\supp{T}=C_2$: we recall from Sec.~\ref{s:types-S} that arrows right-hand sides are put on track 1, but $4$ is non-terminal in $C_2$ ($4\cdot 3\in C_2$), so it should be labelled with an arrow, which would be wrong, since $4$ does not have a son on track 1. a supprimer
By contrast, no type $T$ may satisfy $\supp{T}=C_2$, because a non-terminal node of a type (necessarily an arrow) should have a child on track 1  (by convention, its right-hand side), but $4\in C_2$ and $4\cdot 1 \notin C_2$.
\begin{center}
  \begin{tikzpicture}

  \draw (0,0) node {$\rew$} ;
%  \draw (0,0) node {$\tv$} ;
  \draw (0,0) circle (0.18) ;

  \draw (0.13,0.13) -- (0.57,0.57) ;
  \draw (0.45,0.25) node {$\tiny \red{1}$ };
  
  \draw (0.7,0.7) node {$\tv_1$} ;
  \draw (0.7,0.7) circle (0.18) ;
  
  \draw (-0.13,0.13) -- (-0.57,0.57) ;
  \draw (-0.45,0.25) node {$\tiny \red{4}$};
  \draw (-0.7,0.7) node {$\rew$} ;
  \draw (-0.7,0.7) circle (0.18) ;

  \draw (-0.57,0.83) -- (-0.13,1.27);
  \draw (-0.25,0.95) node { $\tiny \red{1} $ } ;
%  \draw (0,1.4) node {$\rew$} ;
  \draw (0,1.4) node {$\tv_2$} ;
  \draw (0,1.4) circle (0.18);

  \draw (-0.83,0.83) -- (-1.27,1.27) ;
  \draw (-1.15,0.95) node {$\tiny \red{3}$};
  \draw (-1.4,1.4) node {$\tv_1$} ;
  \draw (-1.4,1.4) circle (0.18) ;

  \draw (-2.8,1.4) circle (0.18) ;
  \draw (-2.8,1.4) node {$\tv_3$} ;
  \draw (-2.2 ,0.95)  node {$\tiny \red{8}$};
  \draw (-0.87,0.73) -- (-2.7,1.26) ;

  \draw (-3,-0.2) node [below right] {
     \parbox{5cm}{\small Type $(4\ct(8\ct \tv_3,3\ct \tv_1)\rew \tv_2)\rew \tv_1$}};

%%%% wrong types

  \draw (4,0) node {$\tv$} ;  
  \draw (4,0) circle (0.18) ;

  \draw (4.13,0.13) -- (4.57,0.57) ;
  \draw (4.45,0.25) node {$\tiny \red{1}$ };
  
  \draw (4.7,0.7) node {$\tv_1$};
  \draw (4.7,0.7) circle (0.18) ;
  
  \draw (3.87,0.13) -- (3.43,0.57) ;
  \draw (3.55,0.25) node {$\tiny \red{4}$};

  \draw (3.3,0.7) node {$\rew$};
  \draw (3.3,0.7) circle (0.18) ;
  
 \draw (3.43,0.83) -- (3.87,1.27);
  \draw (3.75,0.95) node { $\tiny \red{1} $ } ;
  \draw (4,1.4) node {$\rew$} ;
%  \draw (4,1.4) node {$\tv_2$} ;
  \draw (4,1.4) circle (0.18);

  \draw (2.6,1.4) node {$\tv_3$};
  \draw (3.17,0.83) -- (2.73,1.27) ;
  \draw (2.85,0.95) node {$\tiny \red{3}$};
  \draw (2.6,1.4) circle (0.18) ;

   \draw (1.2,1.4) circle (0.18) ;
  \draw (1.2,1.4) node {$\tv_3$} ;
  \draw (1.8 ,0.95)  node {$\tiny \red{8}$};
  \draw (3.13,0.73) -- (1.3,1.26) ;
  
  \draw(2.4,-0.2) node [below right] {
      \parbox{5cm}{\small Wrong decoration} };
\end{tikzpicture}\\[-4ex] 
\end{center}
This motivates the following notion: a \textbf{support candidate (s-candidate)} of type is a subset $C\subseteq \bbN^*$ such that there exists a type $T$ satisfying $C=\supp{T}$. Given an s-candidate $C$, it is easy to define a correct type whose support is $C$:  % techrep
\begin{itemize}
\item The non-terminal nodes of $C$ should be decorated with arrows\paper{.}\techrep{ and \ldots}
\item \techrep{\ldots}the leaves of $C$ should be decorated with type variables.
\end{itemize}
So was done for the decoration on the left-hand side, representing the type $(4\cdot (8\cdot \tv_3,3\cdot \tv_1)\rew \tv_2)\rew \tv_1$}. 
  \techrep{As a counterexample,}\paper{In contrast,} the decoration on the right-hand side is incorrect: $\epsi$ (non-terminal) is labelled with $o\in \TypeV$ and  $4\cdot 1$ (leaf) with  $\rew$.\\

The observations about $C_1$ and $C_2$ above suggest considering two relations $\rewto$ and $\rewtt$ defined by:
\begin{itemize}
\item For all $c\in \bbN^*$, $k\in \bbN$, $c\cdot k \rewto c$.
\item For all $c\in \bbN^*$, $k\geqslant 2$, $c\cdot k \rewtt c\cdot 1$.
\end{itemize}
A set of positions $C$ is closed under $\rewto$ (\ie $c_1\in C$ and $c_1\rewto c_2$ implies $c_2\in C$) iff it is a tree. Stability under condition $\rewtt$ means that if a node $c$ is not terminal, then it has a child on track 1.
We have:
 
\begin{lemma}
\label{l:s-candidate-types}
  Let $C\subseteq \bbN^*$. Then $C$ is a type support candidate (\ie there exists a type $T$ s.t. $C=\supp{T}$) iff $C$ is non-empty and is closed under $\rewto$ and $\rewtt$.
\end{lemma}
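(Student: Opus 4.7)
\begin{proofsketch}
The plan is to prove both implications directly from the coinductive definitions of $\supp{T}$ and $\supp{\sSk}$, handling the two directions in turn.

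For the forward direction, I would assume $C = \supp{T}$. Non-emptiness is immediate since both defining clauses of $\supp{\cdot}$ place $\epsi$ in the support. For closure under $\rewto$, a standard coinductive argument shows that any $c \cdot k \in C$ forces $c \in C$ by descending through the alternating arrow and sequence layers; this is just the usual tree property of supports. For closure under $\rewtt$, suppose $c \cdot k \in C$ with $k \geqs 2$. The sub-object of $T$ reached at position $c$ cannot be a type variable, whose support is $\{\epsi\}$, so it must take the arrow form $F_c \rew T_c'$; but the defining clause $\supp{F_c \rew T_c'} = \{\epsi\} \cup \supp{F_c} \cup 1 \cdot \supp{T_c'}$ always places track $1$ in the support, hence $c \cdot 1 \in C$.

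For the backward direction, I would build a type $T$ coinductively from $C$. At each position $c \in C$: if $c$ is a leaf of $C$ (no $c \cdot k$ belongs to $C$), assign any type variable $\tv \in \TypeV$ at $c$. Otherwise, set $K_c = \{k \geqs 2 \mid c \cdot k \in C\}$; the position $c \cdot 1$ then belongs to $C$ either because $K_c \neq \eset$ and closure under $\rewtt$ applies, or because $c$'s only child is itself on track $1$. Assign at $c$ the arrow type $F_c \rew T_{c \cdot 1}$ where $F_c = (k \cdot T_{c \cdot k})_{k \in K_c}$ and each $T_{c'}$ is furnished by the same coinductive process at position $c'$. A routine coinduction then confirms $\supp{T} = C$.

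The whole argument is essentially bookkeeping. The main (mild) obstacle is guaranteeing the coinductive construction is always applicable: a position $c \in C$ whose only child lies on track $0$ would fit neither of the two cases above. This is handled implicitly by restricting to $C \subseteq (\bbN \setminus \{0\})^*$, which is the only regime that can arise from a genuine type support.
\end{proofsketch}
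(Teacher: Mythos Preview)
Your proof is correct and follows the natural approach. The paper does not actually supply a proof for this lemma (it is presented as an illustrative toy example), but your decomposition---label leaves with a type variable, inner nodes with $\rew$, and use the closure conditions to guarantee well-formedness---is precisely the scheme the paper later uses for the analogous, harder Proposition~\ref{prop:charac-bisupp} on bisupports. Your final remark about track~$0$ is also pertinent: as literally stated over all of $\bbN^*$ the equivalence fails (e.g.\ $C=\{\epsi,0\}$ satisfies the right-hand side but is not a type support), so the intended domain is indeed $C\subseteq(\bbN\setminus\{0\})^*$, which is implicit in the paper's discussion of type positions.
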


Thus, relations $\rewto$ and $\rewtt$ are enough to characterize s-candidates. We call them \textbf{stability relations} \eg $C_1$ is stable under $\rewto$ and $\rewtt$, whereas $4\cdot 3\in C_2$, $4\cdot 3\rewtt 4\cdot 1$ but $4\cdot 1\notin C_2$, so that $C_2$ is not stable under $\rewtt$ (this example means that $4$ has no child on track 1 whereas it is not terminal and thus cannot be decorated by an arrow nor by a type variable).

When $c_1\rewto c_2$ or $c_1\rewtt c_2$, we say that $c_1$ \textbf{subjugates} $c_2$, because $c_1$ demands $c_2$ to ensure a correct\techrep{formation of the} support.

\subsection{Toward the Characterization of Bisupport Candidates}
\label{ss:toward-candidate-bisupp}

For the remainder of this paper, we fix an injection $\code{\cdot}:\bbN^* \rew \Nmzo$. By Goal\;\ref{goal:code-S-typability}, we want to prove that every term $t$ is $\code{\cdot}$-typable. By analogy with the notion of candidate supports for types (previous section), the idea is to characterize the \textbf{bisupport candidate\techrep{ (b-candidate)}} for the $\code{\cdot}$-derivations typing a given term $t$ \ie sets $B\subseteq \bbN^*\times \bbN^*$ s.t. there exists a $\code{\cdot}$-derivation $P$ typing $t$ satisfying $B=\bisupp{P}$ (Prop.\;\ref{prop:charac-bisupp} to come).

In  the remainder of this section, we define:
\begin{itemize}
\item $\bbB^t$, the set of the potential bipositions of a derivation typing a term $t$ (this Section\;\ref{ss:toward-candidate-bisupp}).
\item On $\bbB^t$, we define a relation $\rewbullet$ (which is actually the union of 7 stability relations). More precisely: 
\begin{itemize}
\item There is a special constant symbol $\bbot$ in $\bbB^t$, that roughly indicates ``untypability'' or ``emptiness''.
\item The term $t$ is typable iff there is a non-empty subset $B$ of $\bbB^t$, such that $B$ is stable under $\rewbullet$ and does \textit{not} contain $\bbot$. Such a $B$ is the support of a derivation typing $t$. This equivalence is given by the ``completeness-like'' statement of Corol.\;\ref{corol:typability-and-thepsi}.\\
\end{itemize}
\end{itemize}

Let us now define $\ttB^ t$ by first noticing  that not every position $a\in \bbN^*$ (or biposition $(a,c)\in \bbN^*\times \bbN^*$) may be in a derivation typing a given term $t$. For instance, we have $\supp{\lx.y\,x}=\set{\epsi,\,0,\,0\cdot 1,\,0\cdot 2 }$, so, if $P$ types $\lx.x\,x$, then $a\in \supp{P}$ implies $\ovl{a}=\epsi,\,0,\,0\cdot 1$ or $0\cdot 2$ \ie $\ovl{\supp{P}}\subseteq \set{\epsi,\,0,\,0\cdot 1,\,0\cdot 2 }$. For instance, $\supp{\Pex}=\set{\epsi,\,0,\,0\cdot 1,\, 0\cdot 2,\,0\cdot 5,\,0\cdot 6}$. 
 More generally, if $t$ is a term, we set $\bbA^t=\set{a\in \bbN^*\,|\,\ovl{a}\in \supp{t}}$ and $\bbB^t=(\bbA^t\times \bbN^*)\cup \set{\bbot}$ (where $\bbot$ is an ``empty biposition'' constant), so that, if $P$ is a derivation typing $t$, then a position (resp. a biposition) of $P$ must be in $\bbA^t$ (resp. in $\bbB^t\setminus\set{\bbot}$) \ie $\supp{t}\subseteq \bbA^t$ and $\bisupp{P}\subset \bbB^t\setminus\set{\bbot}$. %We will later define a relation $\rewbullet$, using the ``constant of emptiness'' $\bbot$, such that $\p \rewbullet \bbot$ indicates that $\p$ \textit{cannot} be in $\supp{P}$. Thus, we will describe how emptiness propagates inside the bisupport candidates. 
The constant $\bbot$ roughly materializes emptiness and will be used  to describe how ``relevance related emptiness'' is constrained to occur in $\code{\cdot}$-derivations (see polar inversion in \S\;\ref{ss:asc-pi-c-u}.
 %The ``constant of emptiness'' $\bbot$ is a special biposition that will help us indicate that a biposition $(a,c)$ \textit{cannot} be in $\bisupp{P}$ (more precisely, we are going to define a relation $\rewbullet$ on bipositions such that $\p \rewbullet \bbot$ implies that $\p$ \textit{cannot} be in $\bisupp{P}$).
%Its use will be explained below, in relation with polar inversion $\rewp{}$.
\paper{\\}

\techrep{
Not every $a\in \bbA^t$ may be in a derivation $P$ typing $t$ \eg $2\in \supp{u}$ with $u=(\lx.y)z$, but if $P$ types $u$, then subterm $z$ is left untyped -- $\lx.y$ must be typed with $\est \rew T$ (relevance) -- and 2 cannot be in $\supp{P}$.\\}

We drop \paper{from now on }$P$ and $t$ from most notations\techrep{, in which they are implicit}.
 We set $\bbAa(x)=\set{a_0\in \bbA\,|\,a\leqslant a_0,\, t(a_0)=x,\; \nexists a'_0,\,a\leqslant a'_0< a_0,\, t(a'_0)=\lx }$. Thus, if $P$ is a $\code{\cdot}$-derivation, then, with the notation $\ttAx^P_a$ (\S\;\ref{ss:parsing-pointing}), 
 $\AxPa(x)\subset \bbAa(x)$ for all $a\in \supp{P},\, x\in \TermV$ and $\bbAa(x)$ may be considered as the set of position candidates for $\ax$-rules typing   \techrep{the free occurrences of} $x$ above $a$.

Now, remember that the function $\code{\cdot}$ has been fixed to choose axiom tracks: if $x$ a variable and $a_0$ an axiom position candidate for $x$ (\ie $t(a_0)=x$), then a potential $\code{\cdot}$-derivation $P$ containing $a_0$ has an axiom of the form $P(a_0)=\ju{x:(k\cdot T)}{x:T}$ with $k=\code{a_0}$. Thus, if 
$t(a)=\lx$ and we set $\Trl{a}=\set{\code{a_0}\,|\,a_0\in \bbA_{a\cdot 0}(x)}$, then $\Trl{a}$ is the set of axiom tracks dedicated to $x$ above the $\abs$-rule at position $a$ by the function $\code{\cdot}$. It is interesting in that, \eg if $t(a)=\lx$, $8 \notin \Trl{a}$, then we can assert that, if there exists a $\code{\cdot}$-derivation $P$ and $a\in \supp{P}$, then $P(a)=\sSk \rew T$ with $8\notin K$. Indeed, by definition of $\Trl{a}$, there is no axiom position candidate $a_0$ for $x$ above $a$ whose axiom track is 8.
\fublainv{peut-être expliquer ce qu'on entend par potential derivation}

Thus, when a variable $x$ is not at some places in $t$, $\code{\cdot}$ contrains  emptiness to ``occur'' at some particular tracks if we perform an abstraction $\lx$. This give us more fine-grained information about occurrences of emptiness in a derivation typing $t$ than the case where $\lx.u:\est\rew T$ because $x$ does not occur free in $u$: system $\ttS$  will provide us information about emptiness \textit{track by track}. This is precisely what we need to understand typability in the relevant and non-idempotent framework (remember \S\;\ref{ss:intro-infty-relev-typing}): we have to ensure that emptiness does not compromise typability \ie intuitively, emptiness must not propagate everywhere in the derivations typing a given term $t$. If it did, a derivation typing $t$ would be empty (\ie $t$ would not be typable) and we want to show that this does not happen, in the purpose of  proving that every term is typable in $\ttS$
%This characterization requires from us that we capture first the way emptiness is constrained to occur in relevant derivations. Then, we will have to ensure that emptiness does not compromise typability \ie emptiness must not propagate everywhere in the derivations typing a given term $t$. If it did, a derivation typing $t$ would be empty (\ie $t$ would not be typable) and we want to show that this does not happen, in the purpose of  proving that every term is typable in $\ttS$.

\subsection{Tracking a Type in a Derivation}
\label{ss:asc-pi-c-u}

Let us now express the stability conditions (as in Sec.\;\ref{ss:candidate-supp-types}) that a $\code{\cdot}$-bisupport candidate for a derivation typing $t$ should satisfy. We will need to ensure \techrep{six}\paper{the}  points\paper{ below}:
\begin{itemize}
\item Identification of the components (\ie the bipositions) of a same type $T$ in a derivation from bottom to top (see Fig.\;\ref{fig:asc-cons}): relation of \textbf{ascendance} $\rewa$.
\item Identification of the components of  type given in an $\ax$-rule to a variable $x$ ($S_5$ in Fig.\;\ref{fig:asc-cons}) and its occurrence called by the abstraction $\lx$: relation of \textbf{polar inversion} $\rewp$.
\item Identification of the matching components of the types of $u$ and $v$ in the $\app$-rule typing $u\,v$ (types $S_k$ in the $\app$-rule of Fig.\;\ref{fig:asc-cons}): relation of \textbf{consumption} $\rew$.     
\item Correct type formation, as in Sec.\;\ref{ss:candidate-supp-types}: extensions of relations $\rewto$ and $\rewtt$.
\item The type of a subterm of the form $\lx.u$ is an arrow type (and not a type variable): relation $\rewr$.\paper{\\}
\techrep{
\item For technical reasons, we also need a ``big-step" stability condition, meaning that the support of a derivation is a tree: relation $\rewdown$.\\}
\end{itemize}

Once again, by lack of space, most of the proofs are omitted for the remainder of the paper and we can give only a few illustrations of the concepts that we use. We refer to \fucite{} and Chapers 11 and 12 of \cite{VialPhd} for all the details and more examples and heuristics.

\techrep{We formalize these ideas in the following sections. }\techrep{For the discussion below, let us recall again that, given a type $T$ and a sequence type $\sSk$, a position $c\in \supp{T}$ corresponds to the position $1\cdot c\in \supp{\sSk\rew T}$, since $T$ occurs in this arrow type right-hand side. And if $k\in K$, position $c$ in $S_k$ corresponds to position $k\cdot c$ in $\sSk$.}
In Fig.\;\ref{fig:asc-cons}, we indicate the position of a judgment between angle brackets \eg $\ju{C;x:\sSksh}{t:T}\posPr{\,a\cdot 0\,}$ means that judgment $\ju{C;x:\sSksh}{t:T}$ is at position $a\cdot 0$. We denote by $\ttpos$ the (partial) converse of $\code{\cdot}$ \eg, if $a_0:=\pos{7}$ exists, then $a_0$ is the axiom position candidate whose axiom track is 7: concretely, this just means that, if there exits a $\code{\cdot}$-derivation $P$ typing $t$ s.t. $a_0\in \supp{P}$, then $P(A)=\ju{x:(7\ct S)}{x:S}$ for some type $S$ and  \paper{$x\in \TermV$}\techrep{variable $x$}.
\\
\begin{figure}[h]
  \vspace{-0.7cm}
  \begin{center}
  \begin{tikzpicture}

\draw (0,2.8) node [right] {\small \bf Abstraction rule};
  
  \draw (1.5,2.25) --++ (4.2,0);
  \draw (5.6,2.25) node [right] {\small $\ax$};
  \draw (1.5,2.3) node [below right] {$\ju{x:(5\ct S_5)}{x:S_5}\posPr{\,\pos{5}\,}$};

  \draw [dashed] (3.2,1.8) --++ (0,-0.2) --++(-0.7,-0.5) --++ (0,-0.2);

  \draw (0,1) node [below right] { $\ju{C;\,x:\sSk}{u:T}\!\posPr{\,a\ct 0\,}$}
   ;
  \draw (5.8,0.4) node [right] {\small (with $5\in K,8\notin K$)};
  
  \draw (0.1,0.4) --++(4.4,0);
  \draw (4.4,0.45) node [right] {$\abs$};
  
  \draw (0,0.4) node [below right] {$\ju{C }{\lambda x.u:\sSk\rew T }\!\posPr{a}$ };

%  \draw (6,1) node [below right] {\parbox{2cm}{
%      $\abs$-rule at\\ position $a$}};

%      \draw (-0.7,-1.2) node [below right] {\parbox{8cm}{\normalsize
%          We then set:\\
%         $(a,\, 1\cdot c) \rewa (a\cdot 0,1\cdot c)$ when $t(a)=\lambda x$
%      }} ;

\draw (0,-0.4) node [right]{\small \bf Application rule};
  
 \draw (0,-0.7) node [below right] {$\ju{C}{u: \sSk\! \rew\! T}  \posPr{\,a\ct 1\,} \hspace{0.4cm} (\ju{D_k}{v:S_k}  \posPr{\,a\ct k\,} )_{\kK}$
  } ;

  \draw (0,-1.25) --++(7.65,0);
  \draw (7.77,-1.25) node [right] {\small $\app$};
  
  \draw (1.77,-1.2) node [below right] {$\ju{C \uplus(\uplus_{\kK} D_k)}{u\,v:T} \posPr{\,a\,}$ };
\end{tikzpicture}\\[-2ex]
    \caption{Ascendance, Polar Inversion and Consumption}
    \label{fig:asc-cons}
  \end{center}
  \vspace{-0.3cm}
\end{figure}

\noindent \textbullet~ Assume that, in a $\code{\cdot}$-derivation $P$, we find an $\abs$-rule at position $a$ as in Fig\;\ref{fig:asc-cons}: 
the judgment $\ju{C;x:\sSksh}{u:T}$ (pos. $a\cdot 0$) yields $\ju{C}{\lx.u:\sSksh\rew T}$ below (pos. $a$). The occurrence of $T$ in the conclusion of the rule is intuitively the same as that in its premise: we say the former is the \textbf{ascendant} of the latter, since it occurs above in the typing derivation. Likewise, in the $\app$-rule, the occurrence of $T$ in $\ju{C\uplus_{\kK} D_k}{u\,v:T}$ stems from that of premise $\ju{C}{u:\sSksh\rew T}$: the first occurrence of $T$ is also the ascendant of $T$ in the conclusion of the rule. \paper{Ascendance induces a stability relation $\rewa$ on $\ttB^t$, the set of candidate bipositions of $t$, that can be formally defined by:} 
%\paper{Formally, one checks that ascendance corresponds to the stablility relation $\rewa$ defined below, for all $(a,c)\in \bbB^t$:}
\begin{itemize}
\item $(a,c)\rewa (a\cdot 1,1\cdot c)$ if $t(a)=\arob$.
\item $(a,1\cdot c)\rewa (a\cdot 0,c)$ if $ t(a)=\lx$.
\end{itemize}
\techrep{
 This formalization relies on the correspondence between $c$ and $1\cdot c$ above.
 Relation $\p_1 \rewa  \p_2$ means that $\p_2$ is the ascendant of $\p_1$ \ie $\p_1$ and $\p_2$ are corresponding pointers to the same type symbol in the conclusion and the (left) premise of the rule at some position $a$.
\pierre{EXEMPLE TRAITER}} For instance, in Fig\;\ref{fig:a-thread-in-Pex}, the bred (resp. the blue) occurrences of $\tv'$ are ascendants of one another. They correspond to bipositions $(\epsi,1^2)\rewa (0,1) \rewa (0^2,\epsi) \rewa (0^2\cdot 1,1)$ (resp. $(\epsi,1\cdot 4\cdot 1)\rewa (0,4\cdot 1)$).\\
%\pierre{For instance (\cf p.\;\pageref{ex:Pex-comp-unsound}), in $\Pex$, $(\epsi,\,\epsi)\rewa (0,\,1)\rewa (0\cdot 1,\,\epsi)$: those 3 bipositions point to type symbol $\tv'$, from the judgment concluding $\Pex$ to the axiom rule where it was created (position $0\cdot 1$). The three occurrences of this ascendant thread are represented in red in Fig.\;\ref{fig:a-thread-in-Pex} (note that two axiom rules of $\Pex$ could not fit in the figure).}
{
\begin{figure*}[t]
\[\infer[\abs]{\infer[\abs]{\infer[\app]{\infer[\ax]{\phd}{\ju{4}{x: (8\ct \orange{o},3\ct o',2\ct o)\rew \red{o'}}  }\hspace{0.5cm}
      %2\ct
      \infer[\ax]{\phd}{\ju{9}{x:\tv}\trck{2}} \hspace{0.2cm}  
%3\ct
\infer[\ax]{\phd}{\ju{2}{x:\tv'}\trck{3}} \hspace{0.2cm}
%8\ct
   \infer[\ax]{\phd}{\ju{5}{x:\purple{\tv}}\trck{8}}
    }{\ju{\ldots}{x\,x:\red{\tv'}}}}{
  \ju{}{\lx.x\,x:(2\cdot \tv',4\cdot (8\cdot \green{\tv},3\cdot \tv',2\cdot \tv)\rew \blue{\tv'},5\cdot \tv,9\cdot \tv)\rew \red{\tv'}}
}}{
\ju{}{\ly x.xx:\est\rew (2\cdot \tv',4\cdot (8\cdot \green{\tv},3\cdot \tv',2\cdot \tv)\rew \blue{\tv'},5\cdot \tv,9\cdot \tv)\rew \red{\tv'}}
  }\]
\caption{Threads, Ascendance and Consumption}
\label{fig:a-thread-in-Pex}
\vspace{-0.4cm}
\end{figure*}}

\ignore{
\begin{figure*}[t]
\[\infer[]{\infer{\infer{\infer{\phd}{\ju{x:\!(4\ct (8\ct \green{o},3\ct o',2\ct o)\rew o')}{x: \!(8\ct o,3\ct o',2\ct o)\rew \red{o'}}  }\hspace{0.5cm}
      %2\ct
      \infer{\phd}{\ju{x:\!(9\ct \tv)}{x:\!\tv}}\trck{2} \hspace{0.2cm}  
%3\ct
\infer{\phd}{\ju{x:\!(2\ct \tv')}{x:\!\tv'}}\trck{3} \hspace{0.2cm}
%8\ct
   \infer{\phd}{\ju{x:\!(5\ct \tv)}{x:\!\green{\tv}}}\trck{8}
    }{\ju{x:(2\cdot \tv',4\cdot (8\cdot \tv,3\cdot \tv',2\cdot \tv)\rew \tv',5\cdot \tv,9\cdot \tv)}{x\,x:\red{\tv'}}}}{
  \ju{}{\lx.xx:(2\cdot \tv',4\cdot (8\cdot \tv,3\cdot \tv',2\cdot \tv)\rew \blue{\tv'},5\cdot \tv,9\cdot \tv)\rew \red{\tv'}}
}}{
\ju{}{\ly x.xx:\est\rew (2\cdot \tv',4\cdot (8\cdot \tv,3\cdot \tv',2\cdot \tv)\rew \blue{\tv'},5\cdot \tv,9\cdot \tv)\rew \red{\tv'}}
  }\]
\caption{A Thread composed of a Positive Ascendant Thread (red) and a Negative Ascendent Thread (blue)}
\label{fig:a-thread-in-Pex}
\vspace{-0.4cm}
\end{figure*}}

\noindent \textbullet~ Let us have another look at the $\abs$-rule at position $a$ in Fig.~\ref{fig:asc-cons}. Assume $5\in K$.
Then the occurrence of $S_5$ in $\sSk \rew T$ at pos. $a$ stems from an axiom rule concluding with $\ju{x:(5\cdot S_5)}{x:S_5}$ at pos. $\pos{5}$: we say that the occurrence $S_5$ (in $\sSk \rew T$) is the \textbf{polar inverse} of the occurrence of $S_5$ in the axiom rule.
Assume on the contrary that $8 \notin K$. So $S_8$ does not exist and there is no $\ax$-rule typing $x$ and using axiom track 8 above $a$. Morally, $S_8$ is empty.

More generally, we recall from the end of \S\;\ref{ss:toward-candidate-bisupp} that, if $t(a)=\lx$ and $k_0\geqs 2$, the function $\code{\cdot}$\ldots
\begin{itemize}
\item \ldots either gives an unique axiom position candidate for the type on track $k$ in $\sSk$: this happens when $\exists a_0\geqs a\cdot 0$ s.t. $\code{a_0}=k_0$ (\ie, when $k_0\in \Trl{a}$ by construction of $\Trl{a}$). In that case, $a_0=\pos{k_0}$. 
\item  \ldots or tells us that $k_0\notin K$ \ie there is no (top-level) type on track $k_0$ in $\sSk\rew T$: this happens when no $a_0\geqs a\cdot 0$ satisfies $\code{a_0}$ (\ie $k_0\notin \Trl{a}$). In that case, $\pos{k_0}$ is undefined. We consider the type $S_{k_0}$ (that intuitively does not exist) be the polar inverse of an empty type.
\end{itemize}
Polar inversion also induces a stability relation $\rewp$ on $\bbB^t$, that can be formally defined by:
\begin{itemize}
\item $(a,k\cdot c) \rewp ( \pos{k},c)$ if  $k\in \Trl{a}$\hfill (first case)
\item $(a,k\cdot c) \rewp \bbot$ if  $k\notin \Trl{a}$ \hfill (second case)
\end{itemize}
Now, we may understand the use of constant $\bbot$: it indicates biposition that \textit{cannot} be in any potential $\code{\cdot}$-derivation typing $t$. More precisely, $\bbot$ is here to play the role of the polar inverse of all the bipositions that cannot exist, because of the choices made by the function $\code{\cdot}$.
\techrep{

  \ignore{
For instance, in $\Pex$ (with a suitable choice of $\code{\cdot}$), we have $(\epsi,9)\rewp (0\cdot 2,\epsi)$ (because the $\ax$-rule typing $x$ and using track 9 is at position $0\cdot 2$): indeed, $(\epsi,9)$ and $(0\cdot 2,\epsi)$ both point to the type symbol $o$. Likewise, $(\epsi,4\cdot 1)\rewp (0\cdot 1,1)$ and   in Fig.\;\ref{fig:a-thread-in-Pex}, the occurrence of $\tv'$ at biposition $(\epsi,9)$ is colored in blue whereas that at $(0\cdot 1,1)$ is the top occurrence of $\tv$ colored in red.
Likewise, $(\epsi,4)\rewp (0\cdot 1,\epsi)$ (they both point to the type symbol $\rew$) and $(\epsi, 4\cdot 3)\rewp (0\cdot 1,3)$ (they both point to the type symbol $o'$). Since 3 and 8 are not used as axiom tracks for $x$, we should have $(\epsi,3)\rewp \bbot$ and $(\epsi,8\cdot 3\cdot 2\cdot 1 \cdot 0)\rewp \bbot$ (for a good choice of $\code{\cdot}$). } \pierre{EXEMPLE TRAITER}} For instance, in Fig.\;\ref{fig:a-thread-in-Pex}, the top blue occurrence of $\tv'$ is the polar inverse of the top red one: formally, $(0,4\cdot 1)\rewp (0^2\cdot 1,1)$.

\subsection{Type Formation, Type Destruction}
\label{s:consumption}

In this subsection, we conclude the definitions of the stability relations that characterize the form of $\ttS$-derivations, yielding the notion of \textbf{subjugation}, as in \S\;\ref{ss:candidate-supp-types}. \\

\noindent \textbullet~ The notion of \textbf{consumption} is related to rule $\app$. Assume $t(a)=\arob,\,\tra=u\,v$ with $u:\sSk \rew T$ and $v:S_k$ for all $\kK$ as in Fig.\;\ref{fig:asc-cons} so that $u\,v$ can be typed with $T$. Each type $S_k$ occurs in $\sSk \rew T$ and $v:S_k$. However, it is absent in the type of $u\,v$: we say it has been \textbf{consumed}. Formally, we set, for all $(a,c)\in \bbB^t,\, k\geqslant 2$ s.t. $t(a)=\arob$:
\begin{itemize}
\item $(a\cdot 1,k\cdot c)\rewc{a} (a\cdot k, c)$
\end{itemize}  
\techrep{\pierre{
Indeed, the premise concluding with $u:\sSk \rew T$ (resp. with $v:S_k$) is at position $a\cdot 1$ (resp. $a\cdot k$). Position $c\in \supp{S_k}$ corresponds to position $k \cdot c$ in  $\supp{\sSk\rew T}$. For instance, in $\Pex$, there is an $\app$-rule at position 0 and 
$(0\cdot 1,8)\rewc{0} (0\cdot 8,\epsi)$ (pointing to type symbol $o$) and $(0\cdot 1,3)\rewc{0} (0\cdot 3,\epsi)$ (pointing to $o'$).
}} In Fig.\;\ref{fig:a-thread-in-Pex}, the orange and the purple occurrences of $\tv$ are consumed in the $\app$-rule: formally, $(0^2\cdot 1,8)\rewc{0^2} (0\cdot 8,\epsi)$. 

We set $\rew\,=\cup \set{\rewc{a}\,|\, a\in  \bbA,\, t(a)=\arob}$ and write $\wer$ for the symmetric relation.\\

Let $P$ be a $\code{\cdot}$-derivation typing a term $t$. If $\p_1\rewa \p_2$ or $\p_1 \rewp \p_2$ or $\p_1 \rew \p_2$, then $\p_1 \in \bisupp{P}$ iff $\p_2 \in \bisupp{P}$ (by construction of those relations).\paper{\\}

\techrep{
However, relations $\rewa,\, \rewp$ and $\rew$ are not enough to express the stability conditions characterizing a bisupport candidate for a derivation typing $t$ (as we have for types with Lem.~\ref{l:s-candidate-types}).
We need to ensure that types are correctly formed and that an abstraction is typed with an arrow type (and not with a type variable).
For that, we define additional relations $\rewto,\, \rewtt,\ \rewr$ and $\rewdown$ below.\\
}

%\pierre{If $\p_1\rew \p_2$, then $\p_1 \notin P$ iff $\p_2 \notin P$. However, this is not enough to express the conditions that a subset $B$ of $\bbB$ should satisfy to be the bisupport of a derivation $P$ typing $t$. We define an additional relation $\rewbullet$ s.t., when $\p_1\rewbullet \p_2$, if $\p_1\notin P$, then $\p_2 \notin P$. }In that case, we say that $\p_1$ \textbf{subjugates} $\p_2$.

%We call this relation \textbf{subjugation} and its definition is manyfolded\pierre{???handles several subcases}.

\noindent \textbullet~Relations $\rewto$ and $\rewtt$ ensure that the types are correctly defined and are natural extensions of those of Sec\;\ref{ss:candidate-supp-types}:
\begin{itemize}
\item For all $(a,c) \in \bbB^t$ and $k\in \bbN$, $(a,c\cdot k)\rewto (a,c)$.
\item  For all $(a,c) \in \bbB^t$ and $k\geqslant 2$, $(a,c \cdot 1) \rewtt (a,c\cdot k)$.\\
\end{itemize}

\noindent \textbullet~
\techrep{
    Note that, for instance, if $t=I=\lx.x$ and $B=\set{\pepsi}=\set{(\epsi,\epsi)}$, then $B$ is stable under $\rewa,\; \awer,\; \rew,\;\wer,\; \rewto,\; \rewtt$ but obviously, there is no derivation $P$ typing $t$ such that $\bisupp{P}=B$ (indeed, $t$ is necessarily typed with an arrow type \ie $1\in \supp{\ttT^P(\epsi)}$, which means $(\epsi,1)\in \bisupp{P}$). Thus, those relations are not enough to describe a candidate bisupport. \\
}
The relation below $\rewr$ ensures that, if $\lx.u$ is a typed subterm of $t$, then its type $T$ is an arrow type\techrep{ (and not a type variable}. \techrep{When $T$ is an arrow, then $\supp{T}$ must at least contain 1 (besides $\epsi$), the position of the root of right-hand side of the arrow. We set then, for all $a\in \bbA^t$ s.t. $t(a)=\lx$:}
\begin{itemize}
\item $(a,\epsi)\rewr (a,1)$\\
\end{itemize}

\noindent \textbullet~\paper{The ``big-step'' stability relation $\rewdown$ below roughly states that the support of a potential derivation is a tree:}\techrep{Relation $\rewdown$ is included in the reflexive transitive closure of $\awer\cup\wer\cup \rewto \cup \rewr$, but turns out to be useful (to obtain Lemma~\ref{l:ref-red}). When the biposition $(a',c)$ is in a derivation $P$, then every position below $a'$ must be in $P$ (must have a non-empty type). We set then, for all $a,a'\in \bbA$ and $c\in \bbN^*$ s.t. $a\leqslant a'$:}
\begin{itemize}
\item $ (a',\,c)\rewdown (a,\,\epsi)$
\item $\bbot \rewdown (a,\epsi)$
\end{itemize}
The 2nd case is also useful to ensure Lemma~\ref{l:ref-red}.\\

\noindent \textbullet~ We set $\rewbullet =\,\rew \cup \wer \cup \rewto \cup \rewtt \cup \rewr \cup \rewdown$. If $\p_1\rewbullet \p_2$, notice that, by construction, $\p_1 \in \bisupp{P}$ implies $\p_2\in \bisupp{P}$.
We say then that $\p_1$ \textbf{subjugates} $\p_2$, generalizing \S\;\ref{ss:candidate-supp-types}.

%\pierre{tout sum up, on a les relations $\rewa$ parcrout d'un type, $\rewp$ relie un type entre blut et truc (capture aussi l'apparition du vide), $\rewto$ and $\rewtt$ that ensure type formation, $\rewr$ qu'une abs n'est pas reduite a une tv (mettre un contre exemple? bof) et rewdown qui a une utilite technique, avec renvoi vers cette partie la ou il faut}

%\subsection{Coherence Theory}
%\pierre{les relations symétriques qq part}
\subsection{Threads and Minimal Bisupport Candidate}
\label{s:bbBm}

We prove now that the relations above are indeed enough to express a sufficient condition of typability (Corollary\;\ref{corol:typability-and-thepsi}).

As we have seen, if $P$ is a $\code{\cdot}$-derivation, then $\bisupp{P}$ is closed under $\rewa$, $\awer$, $\rewp$, $\pwer$, $\rew$, $\wer$, $\rew$, $\rewto$, $\rewtt$, $\rewr$ and $\rewdown$. Of course, $\bbot$, the empty biposition, cannot be in $P$. It turns that it is enough to characterize candidate bisupports (Proposition\;\ref{prop:charac-bisupp}). In this statement,  $\iden$ is the reflexive, transitive, symmetric closure of $\rewa\cup \rewp$. We have:

\begin{proposition}
  \label{prop:charac-bisupp}
  Let $B\subseteq \bbB^t$. Then $B$ is a  $\code{\cdot}$-candidate bisupport for a derivation typing $t$ (\ie there exists a $\code{\cdot}$-derivation s.t. $B=\bisupp{P}$) iff (1) $B$ is non-empty, (2) $B$ is closed under $\iden$ and  $\rewbullet$, and (2) $B$ does not contain $\bbot$.
\end{proposition}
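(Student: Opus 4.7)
The proof splits into two directions. The forward direction (necessity) is largely book-keeping: given $P$ with $\bisupp{P}=B$, each of the stability relations was designed precisely to reflect a constraint realized by the typing rules, so I would verify each one by reading off the applicable rule. Non-emptiness holds because $(\epsi,\epsi)\in \bisupp{P}$, and $\bbot \notin \bisupp{P}$ by definition. For $\rewa$, I read off the $\abs$- and $\app$-rules (the occurrence of $T$ on the premise side is the ascendant of the one on the conclusion side). For $\rewp$, at an $\abs$-rule at position $a$, each track $k$ appearing in $\sSk$ must have an $\ax$-rule typing $x$ with axiom track $k$ above, and that $\ax$-rule lives at $\pos{k}$ by the $\code{\cdot}$-convention; if $k\notin \Trl{a}$, there is no such $\ax$-rule and the corresponding biposition cannot be in $\bisupp{P}$, which is what ``subjugation to $\bbot$'' encodes. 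Consumption $\rew$ follows from how types split in the $\app$-rule; $\rewto$ and $\rewtt$ hold because every $\ttT^P(a)$ is a well-formed type (Lemma~\ref{l:s-candidate-types} applied to each $T(a)$); $\rewr$ holds because every $\abs$-rule concludes with an arrow; and $\rewdown$ encodes the tree-structure of $\supp{P}$. Closure under $\iden$ follows from closure under $\rewa$ and $\rewp$.

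For the reverse direction (sufficiency), I would coinductively construct a $\code{\cdot}$-derivation $P$ from $B$. First, set $A:=\{a\in\bbA^t : (a,\epsi)\in B\}$; this is prefix-closed in $\supp{t}$ thanks to $\rewdown$. For each $a\in A$, define the type $T_a$ by $\supp{T_a}:=\{c\in \bbN^*: (a,c)\in B\}$; the axioms $\rewto$ and $\rewtt$ guarantee, via Lemma~\ref{l:s-candidate-types}, that this is the support of a genuine type, whose non-terminal nodes I label with $\rew$ and whose leaves I label with type variables. To make the leaf-labelling coherent across the derivation, I would pick one type variable per $\iden$-equivalence class of leaf bipositions (well-defined since $B$ is $\iden$-closed and $\iden$ identifies ``the same'' type symbol). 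Contexts are then forced: at position $a\in A$, set $C(a)(x):= \uplus_{a_0\in \AxPa(x)\cap A}(\code{a_0}\cdot T_{a_0})$, which is conflict-free because $\code{\cdot}$ is an injection.

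It remains to check that the resulting tree is indeed a valid $\code{\cdot}$-derivation of $t$, i.e.\ that each rule fires correctly. At positions where $t(a)=\lx$, the relation $\rewr$ guarantees $T_a$ is an arrow; polar inversion $\rewp$ together with $\iden$-closure ensures that the sequence type on the left of this arrow coincides with $C(a\cdot 0)(x)$ computed above, so the $\abs$-rule goes through. At positions where $t(a)=\arob$, consumption $\rew$ and ascendance $\rewa$ together ensure that the type at $a\cdot 1$ is an arrow whose domain matches, track-by-track, the types $T_{a\cdot k}$ of the argument subderivations, and that the codomain is $T_a$; so the $\app$-rule is valid. At leaf positions $a=\pos{k}$ with $t(a)=x$, polar inversion forces the axiom rule $\ju{x:(k\cdot T_a)}{x:T_a}$ with axiom track $k=\code{a}$, which is the required $\code{\cdot}$-axiom. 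The main obstacle is precisely this verification step: one must show that the combinatorial constraints $\iden$ and $\rewbullet$ collectively suffice to reconstruct \emph{all} the local coherence conditions of the typing rules. This reduces to bookkeeping once one observes that every local constraint of an inference rule of $\ttS$ has been mirrored by one of the stability relations, which was the design principle in \S\ref{ss:asc-pi-c-u} and \S\ref{s:consumption}.
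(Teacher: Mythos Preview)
Your overall architecture matches the paper's: necessity is bookkeeping, and sufficiency proceeds by decorating $B$ with $\rew$ at inner nodes and type variables at leaves, then verifying each rule. The paper does exactly this, with one simplification you miss and one detail you get slightly wrong.

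The gap is in your leaf-labelling. You propose one type variable per $\iden$-class, but $\iden$ is the closure of $\rewa\cup\rewp$ only --- it does \emph{not} include consumption $\rew$. At an $\app$-node $a$, the leaf biposition $(a\cdot 1,\,k\cdot c)$ inside the domain of the arrow and the corresponding leaf $(a\cdot k,\,c)$ in the argument subderivation are related by $\rewc{a}$, not by $\iden$; nothing forces them into the same $\iden$-class. With your labelling they may receive different type variables, and then the $S_k$ in $\sSk\rew T$ will not literally equal the type $T_{a\cdot k}$, so the $\app$-rule fails. Your verification paragraph glosses over this: closure under $\rew$ guarantees that the \emph{supports} match track-by-track, but not the \emph{labels}.

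The paper sidesteps the issue by using a single fixed type variable $o$ for every leaf; that trivially makes all labels agree wherever supports agree. It then remarks (in the \texttt{techrep} version) that the general labelling, yielding every $\code{\cdot}$-derivation with bisupport $B$, assigns one variable per $\iden_{\arob}$-class, where $\iden_{\arob}$ is the closure of $\rewa\cup\rewp\cup\rew$. Either fix repairs your argument immediately.
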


If the closure of a set $B$ contains $\bbot$, then intuitively, $B$ needs to use a slot that is constrained (by relevance) to be empty: thus, no derivation can contain $B$. 

\techrep{\begin{proof}}              
\paper{
  \vspace{0.2cm}\begin{proofsketch}}  %%% version moyenne
The necessity of these conditions has been discussed in the previous subsections.

Conversely, assume that $\emptyset \neq B\subset \bbB^t\setminus\set{\bbot}$ is closed under $\iden$ and $\rewbullet$. We want a derivation $P$ s.t. $\bisupp{P}=B$. For that, we need to suitably decorate the $\p \in B$. Mainly, a non-terminal biposition must be labelled with $\rew$ and a terminal one with a fixed type variable $o$, in order to get correct types (as in \S~\ref{ss:candidate-supp-types}).
\techrep{
% \phdadd{C, T transformés en ttC, ttT + pb de parenthèses mal fermées}
  Thus, we set $\Lves{B}=\set{(a,c)\in B\,|\, (a,c\cdot 1)\notin B}$ and we define $P$ on $B$ by $P(\p)=\tv$ if $\p \in \Lves{B}$ and $P(\p)=\rew$ if not.} \techrep{We now verify}\paper{On can check} that $P$ is a correct $\ttS$-derivation using the definition of $\iden$ and $\rewbullet$.\paper{\end{proofsketch}}

\techrep{ 
  Let $A=\set{a\in \bbA^t\,|\, \exists c\in \bbN^*,\, (a,c)\in B}$. Thus, $A\subseteq \bbA$. For all $a\in A$, we set $\ttT(a)(c)=P(a,c)$ whenever $(a,c)\in B$. For all $a\in A$ and $x\in \TermV$, we set $A_a(x)=A\cap \bbAa(x)$ and  $\ttC(a)(x)=\uplus_{a_0\in \bbAa(x)} (\code{a_0}\cdot \ttT(a_0))$. Thus, $\ttT(a)$ and $\ttC(a)(x)$ are functions from $\bbN^*$ to $\TypeV\cup \set{\rew}$. By hypothesis, if $\thr{\p}=\thbot$, then $\p\notin B$. 

  %%\pierre{pas $\bbA$}

%%% types vides, abus de notation

For all $a\in \bbA$, if $a\in A$, $\dom{\ttT(a)}$ is a tree and $\ttT(a)$, as a labelled tree, is a correct type and if $a\notin A$, $\dom{\ttT(a)}=\emptyset$.\\
Indeed, if $a\in A$, then $\dom{\ttT(a)}$ is non-empty. The definitions of $\Lves{B}$, $\ttT(a)(c)$ and Lemma\;\ref{l:s-candidate-types} grant then that $\ttT(a)$ is a correct type.

\noindent We consider now $a\in A$ and check that the typing rules are respected.
\begin{itemize}  
\item Assume $t(a)=x$. Then, by definition of $\ttC$, $C(a)(y)=\est$ if $y\neq x$ and $\ttC(a)(x)=(\code{a}\cdot \ttT(a))$, so that $a$ is a correct axiom rule.\\

\item Assume $t(a)=\arob$.
\begin{itemize}
\item  Since $(a,c)\rewa (a\cdot 1,1\cdot c)$, then $(a,c)\in B$ iff $(a\cdot 1,1\cdot c)\in B$ and even $(a,c)\in \LB$ iff $(a\cdot 1,1\cdot c)\in \LB$, so that functions $c\mapsto \ttT(a)(c)$ and $c\mapsto \ttT(a\cdot 1)(1\cdot c)$ are equal. \fublainv{conclure par \ttT(a)=\Hd{\ttT(a\cdot 1)} \ie \ttT(a\cdot 1) is of the form $\ldots \rew \ttT(a)$}
\item  Moreover, by $\rew$ and $\wer$, %% \pierre{mal dit}
  $(a\cdot 1,k\cdot c)\in B\iff (a\cdot k,c)\in B$, and even $(a\cdot 1,k\cdot c)\in \LB\iff (a\cdot k,c)\in \LB$. Thus, for all $k\geqslant 2,c\in \bbN^*,\, \ttT(a\cdot 1,k\cdot c)=\ttT(a\cdot k,c)$.
%  Using $\rew$ and $\wer$, we show likewise that, for all $k\geqslant 2,\, (c\mapsto T(a\cdot 1)(k\cdot c))= c\mapsto (T(a,k\cdot c))$.\\
\item  Since $a\in A$, $(a,\epsi) \in B$, so $(a\cdot 1,1)\in B$, so $(a\cdot 1,\epsi)\notin \LB$, so $\ttT(a\cdot 1)(\epsi)=\rew$, by definition of $T$.
\end{itemize}
  Thus, $\ttT(a\cdot 1)=(\ttT(a\cdot k))_{k\geqslant 2}\rew \ttT(a)$.\\

  By definition of $\ttC$ and of $A_a(x)$, we easily obtain $\ttC(a)(x)=\cup_{k\geqslant 1,a\cdot k\in A} \ttC(a\cdot k)(x)$ as expected.

  Thus, $a$ is a correct $\app$-rule.\\

\item Assume $t(a)=\lx$. Then, by $\rewr$\,, $(a,1)\in B$ and $(a,\epsi)\notin \LB$, so that $\ttT(a)(\epsi)=\rew$. 
  Since $(a,1\cdot c) \rewa (a\cdot 0,c)$, then $(a,1\cdot c) \in B$ iff $(a\cdot 0,c) \in B$ and even $(a,1\cdot c) \in \LB$ iff $(a\cdot 0,c) \in \LB$, so that functions $c\mapsto \ttT(a,1\cdot c)$ and $(c \mapsto \ttT(a\cdot 0,c))$ are equal.\\
  Moreover, let $k\geqslant 2$:
  \begin{itemize}
  \item If $k\notin \Trl{a}$, then $(a,k\cdot c) \rewp \bbot$, so $\thr{a,k\cdot c}=\thbot$, so, by hypothesis, $(a,k\cdot c) \notin B$, so $k\cdot c\notin \supp{\ttT(a)}$.
  \item If $k\in \Trl{a}$, then $(a,k\cdot c) \rewp (\pos{k},c)$, so $(a,k\cdot c)\iden (\pos{k},c)$. So $(a,k\cdot c)\in B$ iff $(\pos{k},c)\in B$ and even $(a,k\cdot c)\in \LB$ iff $(\pos{k},c)\in L(B)$.
  \end{itemize}
  This shows that, for all $k\geqslant 2$, functions $c\mapsto \ttT(a,k\cdot c)$ and $c\mapsto \ttC(a\cdot 0)(x)(c)$ are equal, by definition of $\ttC$.\\
  Thus, we have $\ttT(a)=\ttC(a\cdot 0)(x)\rew \ttT(a\cdot 0)$ and $a$ is a correct $\abs$-rule.
      \end{itemize}    
\end{proof}

\begin{remark}
We only use one type variable $\tv$ in the above proof. A general method (yielding every $\code{\cdot}$-derivation whose bisupport is $B$) is to label $\p_1,\p_2\in \LB$ with the same type variable whenever $\p_1 \iden_{\arob} \p_2$, where $\iden_{\arob}$ is the reflexive, transitive, symmetric closure of $\rewa\cup \rewp \cup \rew$.
\end{remark}}

From now on, it will be better to reason modulo $\iden$ (it may already be guessed that $\iden$ \textit{should} commute with $\rew,\, \rewto,\ldots$, which is made explicit in \S\;\ref{s:interaction-normal-chains}) and to focus on subjugation.

\ignore{
\begin{figure}[t]
\begin{center}
{
\begin{tikzpicture}

  \inputtypeleftn{0.4}{6.3}{$ \mult{\tv_1,\tv_2}\rewsh \mult{\purple{\tv}} \rewsh \tv \hspace{0.5cm} \phd$}
  \drawlabnode{0.4}{6.3}{$x$}
  
 \inputtypeleftn{1.05}{7.2}{$\mtv \rew \green{\tv_1}$}
  \drawlabnode{1.05}{7.2}{$y$}
  \inputtypeleftn{2.35}{7.2}{$\tv$}
  \drawlabnode{2.35}{7.2}{$z$}
  \blocka{1.7}{6.3}
  \outputtyperight{1.7}{6.3}{$\;\;\green{\tv_1}$}
  
  \inputtyperightn{3.05}{7.2}{$\erewa_2$}
  \drawlabnode{3.05}{7.2}{$y$}
  \drawlabnode{4.35}{7.2}{$z$}
  \blocka{3.7}{6.3}
  \outputtyperight{3.7}{6.3}{$\;\;\tv_2$}
  
   \draw (1.25,5.45) -- (3.5,6.25);
   \blocka{1.05}{5.4}
   \outputtypeleft{1.05}{5.4}{$ \mult{\purple{\tv}}\rew \tv$}
   
  \drawlabnode{4.35}{5.4}{$y$}
  \inputtyperightn{4.35}{5.4}{$\tv$}
  
  \draw (1.9,4.55) -- (4.15,5.35);
  \blockappl{1.7}{4.5}
  \outputtypeleft{1.7}{4.5}{$\tv$}
  \blockunary{1.7}{3.6}{$\ly$}
  \outputtypeleft{1.7}{3.6}{$\mult{\mtv \rewsh \green{\tv_1},\emul \rewsh \tv_2,\tv} \rew \tv$}
  \blockunary{1.7}{2.7}{$\lx$}
  \outputtypeleft{1.7}{2.7}{$\mult{ \mult{\tv_1,\tv_2}\rewsh \mult{\purple{\tv}} \rewsh \tv}\blue{\rew} \mult{\mtv\rewsh \green{\tv_1},\emul \rewsh \tv_2,\tv} \rew \tv$}
  \blockunary{1.7}{1.8}{$\lz$}
  \outputtypeleft{1.7}{1.8}{$\mult{\tv}\rew \mult{\mult{\tv_1,\tv_2}\rewsh \mult{\purple{\tv}} \rewsh \tv}\blue{\rew} \mult{\mtv\rewsh \green{\tv_1},\emul  \rewsh \tv_2,\tv} \rew \tv$}

  \inputtyperightn{2.85}{2.7}{$\mult{\tv_3}\rew \red{\tv}$}
  \drawlabnode{2.85}{2.7}{$f$}
  
  \drawtri{4.6}{2.7}{$u$}
  \outputtyperight{4.6}{2.7}{$\tv_3$}
  \draw (3.78,1.95) -- (4.6,2.7);

  \blockappl{3.6}{1.8}
  \outputtyperight{3.6}{1.8}{$\;\;\red{\tv}$}

  \draw (2.53,1.05) -- (3.43,1.65);
  
  \blockappl{2.35}{0.9}
   \outputtypeleft{2.35}{0.9}{$\mult{ \mult{\tv_1,\tv_2}\rewsh \mult{\purple{\tv}} \rewsh \tv}\blue{\rew} \mult{\mtv\rewsh \green{\tv_1},\emul \rewsh \tv_2,\tv} \rew \tv$}

  \drawtri{4.9}{0.9}{$v$}
  \outputtyperight{4.9}{0.9}{$\mult{\tv_1,\tv_2}\rewsh \mtv \rewsh \tv $}
  \draw (3.18,0.15) -- (4.9,0.9);
  
  \blockappl{3}{0}
  \outputtypeleft{3}{0}{$\mult{\mtv\rewsh \green{\tv_1},\emul \rewsh \tv_2,\tv} \rew \tv$}

%% thread vert (\tv_1) neg et pos

  % partie positive
   \draw [->,>=stealth,dotted] (2.05,6.2) --++ (0.13,0.28) --++ (-1.14,1.44) --++ (-0.25,0);

  % polar inversion
  \draw [->,>=stealth,loosely dashed] (-1.1,3.5) --++ (0,0.5) --++ (-1.3,1.3) --++ (0,3.1) --++ (2.6,0) --++ (0.3,-0.3)  ;
  \draw (-3.2,6) node {\parbox{2cm}{\begin{center}\fnsz \texttt{polar\\ inversion}\end{center}}};

  % partie négative
  \ascarrowup{-1.1}{2.53}
  \ascarrowup{-1.1}{1.63}
    \ascarrowleft{-0.55}{0.73}
    \ascarrowleft{0.1}{-0.17}

    %% thread bleu (\rew) neg

  \ascarrowup{-2.25}{1.63}
    \ascarrowleft{-1.8}{0.73}

%%% thread rouge (\tv_3 partant de f)-
\draw [->,>=stealth,dotted] (3.95,1.65) --++ (0.15,0.15) --++ (0,1.35) --++ (-0.15,0.15) ;

%%% thread violet

   %%% partie positive

  \draw [->,>=stealth,dotted] (-0.05,5.2)--++ (-0.3,0.4) --++ (0,1.2) ;
 
   %%% polar inversion
  \draw [->,>=stealth,loosely dashed] (-3.35,2.4) --++ (0,2.4) --++ (-0.7,0.7) --++ (0,2.1)   --++ (3.3,0) --++ (0.4,-0.4) ;
  
  %%% partie négative
  \ascarrowup{-3.35}{1.63}
    \ascarrowleft{-2.85}{0.73}

\end{tikzpicture}

  }

%%%%%%%%%%%%%%%%%%%%%%%%%%%%%%%%%%%%%%%%%
%%% Ci-dessous copie. Ne pas modifier %%%
%%%%%%%%%%%%%%%%%%%%%%%%%%%%%%%%%%%%%%%%%
\ignore{
\begin{tikzpicture}

  \inputtypeleftn{0.4}{6.3}{$\mtv \rewsh \mult{\tv_1,\tv_2}\rewsh \tv$}
  \drawlabnode{0.4}{6.3}{$x$}
  
 \inputtypeleftn{1.05}{7.2}{$\mtv \rew \tv_1$}
  \drawlabnode{1.05}{7.2}{$y$}
  \inputtypeleftn{2.35}{7.2}{$\tv$}
  \drawlabnode{2.35}{7.2}{$z$}
  \blocka{1.7}{6.3}
  \outputtyperight{1.7}{6.3}{$\;\;\tv_1$}
  
  \inputtyperightn{3.05}{7.2}{$\erewa_2$}
  \drawlabnode{3.05}{7.2}{$y$}
  \drawlabnode{4.35}{7.2}{$z$}
  \blocka{3.7}{6.3}
  \outputtyperight{3.7}{6.3}{$\;\;\tv_2$}
  
   \draw (1.25,5.45) -- (3.5,6.25);
   \blocka{1.05}{5.4}
   \outputtypeleft{1.05}{5.4}{$\mult{\tv_1,\tv_2}\rewsh \tv$}
   
  \drawlabnode{4.35}{5.4}{$y$}
  \inputtyperightn{4.35}{5.4}{$\tv$}
  
  \draw (1.9,4.55) -- (4.15,5.35);
  \blockappl{1.7}{4.5}
  \outputtypeleft{1.7}{4.5}{$\tv$}
  \blockunary{1.7}{3.6}{$\ly$}
  \outputtypeleft{1.7}{3.6}{$\mult{\mtv\rewsh \tv_1,\emul \rewsh \tv_2,\tv} \rew \tv$}
  \blockunary{1.7}{2.7}{$\lx$}
  \outputtypeleft{1.7}{2.7}{$\mult{\mtv \rewsh \mult{\tv_1,\tv_2}\rewsh \tv}\rew \mult{\mtv\rewsh \tv_1,\emul \rewsh \tv_2,\tv} \rew \tv$}
  \blockunary{1.7}{1.8}{$\lz$}
  \outputtypeleft{1.7}{1.8}{$\mult{\tv}\rew \mult{\mtv \rewsh \mult{\tv_1,\tv_2}\rewsh \tv}\rew \mult{\mtv\rewsh \tv_1,\emul  \rewsh \tv_2,\tv} \rew \tv$}

  \inputtyperightn{2.85}{2.7}{$\mult{\tv_3}\rew \tv$}
  \drawlabnode{2.85}{2.7}{$f$}
  
  \drawtri{4.6}{2.7}{$u$}
  \outputtyperight{4.6}{2.7}{$\tv_3$}
  \draw (3.78,1.95) -- (4.6,2.7);

  \blockappl{3.6}{1.8}
  \outputtyperight{3.6}{1.8}{$\;\;\tv$}

  \draw (2.53,1.05) -- (3.43,1.65);
  
  \blockappl{2.35}{0.9}
   \outputtypeleft{2.35}{0.9}{$\mult{\mtv \rewsh \mult{\tv_1,\tv_2}\rewsh \tv}\rew \mult{\mtv\rewsh \tv_1,\emul \rewsh \tv_2,\tv} \rew \tv$}

  \drawtri{4.9}{0.9}{$v$}
  \outputtyperight{4.9}{0.9}{$\mtv \rewsh \mult{\tv_1,\tv_2}\rewsh \tv $}
  \draw (3.18,0.15) -- (4.9,0.9);
  
  \blockappl{3}{0}
  \outputtypeleft{3}{0}{$\mult{\mtv\rewsh \tv_1,\emul \rewsh \tv_2,\tv} \rew \tv$}
\end{tikzpicture}

  }
\caption{Ascendance, Polar Inversion and Threads (Informal)}
\label{fig:asc-pol-inv-informal-1}
\end{center}
\end{figure}

% REFERER A CETTE FIGURE INFORMELLE ds techrep
}

\begin{definition} 
  \label{def:threads-comp-unsound}
Let $t$ be a term and $\code{\cdot}:\bbN^* \rightarrow \bbN\setminus \set{0,1}$ an injection, and $\rewa$, $\rewp$ the relations of ascendance and polar inversion in $\bbB^t$ defined \wrt $\code{\cdot}$.
\begin{itemize}
 \item An \textbf{ascendant thread} is an equivalence class of relation $\idena$, the reflexive, transitive, symmetric closure of $\rewa$.
 \item A \textbf{thread} (metavariable $\theta$) is an equivalence class of relation $\iden$ (see Fig.\;\ref{fig:a-thread-in-Pex} \techrep{ and \;\ref{fig:threads}}).
\item  The \textbf{quotient set} $\bbB^t\!/\!\iden$ is denoted $\ttThr$.
\end{itemize}
\end{definition}

In Fig.\;\ref{fig:a-thread-in-Pex}, the red occurrences of $\tv'$  correspond to an ascendant thread and the blue one to another. Their union constitute a (full) thread, that we denote $\theta_a$.\techrep{ \pierre{In Fig.\;\ref{fig:a-thread-in-Pex}, the red occurrences of $\tv'$  correspond to the ascendant thread $\set{(\epsi,1),(1,\epsi),(0\cdot 1,1)}$ and the blue occurrence of $\tv'$ to another ascendant thread (with only one element). The four colored occurrences of $\tv'$ correspond to a thread.}} Likewise, the green and the orange occurrence of $\tv$ respectively correspond to the negative and the positive part of a thread $\theta_b$. The unique purple occurrence of $\tv$ correspond to a singleton thread $\theta_c$.

The notation $\ttThr$ implicitly depends on $t$ and $\code{\cdot}$. The thread of $(a,c)\in \bbB$ is written $\thr{a,c}$ and we set:
\begin{center}
%$\thepsi=\thr{\epsi,\epsi}\hspace{2.5cm} \thbot =\thr{\bbot}$
$\begin{array}{c@{\hspace{1.6cm}}c}
    \thepsi=\thr{\epsi,\epsi} & \thbot =\thr{\bbot} \\
    \text{``root thread''} & \text{``thread of emptiness''}
    \end{array}$
\end{center}
If $\thr{\p}=\theta$, we say that $\theta$ \textbf{occurs at biposition $\p$}, also written $\theta:\p$ or $\p:\theta$ \eg $\theta_a:(\epsi,1^2)$ or $\theta_a:(0,4\ct 1)$.
\techrep{ We will consider the case $\p_1:\th:\p_2$, where $\p_1$ and $\p_2$ are different bipositions.}

%%% ds ce chapitre, Thr n'est pas associé à une dérivation en particulier

We consider now the extension of every other relation modulo $\iden$. Namely,
we write $\theta_1 \rrewc{a} \theta_2$ if $\exists \p_1,\p_2,~ \theta_1=\thr{\p_1},\ \theta_2=\thr{\p_2},\ \p_1\rewc{a} \p_2$. Thus, $\theta_1\rrewc{a} \theta_2$ iff $\theta_1:\p_1\rewc{a} \p_2:\theta_2$ for some $\p_1,\p_2$. In that case, we say that $\theta_1$ (resp. $\theta_2$) has been \textbf{left-consumed} (resp. \textbf{right-consumed}) at biposition $\p_1$ (resp. $\p_2$) \techrep{and that $\theta_1$ and $\theta_2$ are facing each other a pos. $a$ }\eg in Fig.\;\ref{fig:a-thread-in-Pex},
\begin{center}
  \vspace{-0.2cm}
  $\theta_b:(0^1\ct 1,8)\rrewc{0^2} (0^2\ct 8,\epsi): \theta_c$
\end{center}
  \techrep{Thus, $\theta_b$ and $\theta_c$ face each other at pos. $0^2$}.
We proceed likewise for $\rewto$, $\rewtt$, $\rewr$, $\rewdown$, $\rewbullet$, thus defining $\rrewto$, $\rrewtt$, $\rrewr$, $\rrewdown$, $\rrewb$. Notation $\rrewb^*$ denotes the reflexive transitive closure of relation $\rrewb$. \pierre{Remember again that these relations $\rrewto,\ldots$ concern for now \textit{candidate} derivations typing a given term $t$ }

\begin{corollary}
\label{corol:typability-and-thepsi}
  If $\thbot$ is not in the transitive closure of $\set{\thepsi}$ by $\rrewb$, then $t$ is typable in $\ttS$ (by means of a $\code{\cdot}$-derivation).
\end{corollary}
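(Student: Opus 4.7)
\begin{proofsketch}
The plan is to exhibit an explicit bisupport $B \subseteq \bbB^t$ satisfying the three conditions of Proposition~\ref{prop:charac-bisupp}, and then invoke that proposition to obtain the desired $\code{\cdot}$-derivation.

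Concretely, I would define
\[
B \;=\; \bigcup \,\{\, \theta \in \ttThr \,\mid\, \thepsi \,\rrewb^*\, \theta \,\}\,,
\]
that is, $B$ is the set of bipositions $\p \in \bbB^t$ whose thread $\thr{\p}$ is reachable from the root thread $\thepsi$ under $\rrewb^*$. Checking that $B$ meets the three requirements of Proposition~\ref{prop:charac-bisupp} is then essentially bookkeeping. First, $B$ is non-empty: by reflexivity of $\rrewb^*$, the thread $\thepsi$ belongs to the collection defining $B$, so the biposition $(\epsi,\epsi)$ lies in $B$. Second, $B$ is closed under $\iden$ by construction, since it is defined as a union of $\iden$-equivalence classes (\ie of threads). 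Third, $B$ is closed under $\rewbullet$: if $\p_1 \in B$ and $\p_1 \rewbullet \p_2$, then by the very definition of $\rrewb$ on $\ttThr = \bbB^t/\!\iden$, we have $\thr{\p_1} \rrewb \thr{\p_2}$, and combining with $\thepsi \,\rrewb^*\, \thr{\p_1}$ yields $\thepsi \,\rrewb^*\, \thr{\p_2}$, so $\p_2 \in B$. Finally, $\bbot \notin B$: indeed, $\bbot \in B$ would mean $\thbot = \thr{\bbot}$ is reachable from $\thepsi$ under $\rrewb^*$, contradicting the hypothesis.

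Having verified conditions (1)--(3), Proposition~\ref{prop:charac-bisupp} applies and provides a $\code{\cdot}$-derivation $P$ with $\bisupp{P} = B$ typing $t$, which concludes the proof. There is no genuine obstacle here: once Proposition~\ref{prop:charac-bisupp} is available, the corollary is really just the observation that ``reachability from $\thepsi$ while avoiding $\thbot$'' furnishes a canonical candidate bisupport. All the technical difficulty has been absorbed in the prior work of defining $\rewbullet$ and establishing that the stability conditions (ascendance, polar inversion, consumption, type formation, $\rewr$, $\rewdown$) exactly characterize candidate bisupports.
\end{proofsketch}
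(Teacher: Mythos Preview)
Your proof is correct and essentially identical to the paper's: the paper defines $\bbBm = \set{\p \in \bbB \,|\, \thepsi \,\rrewb^*\, \thr{\p}}$, notes that under the hypothesis it satisfies the conditions of Proposition~\ref{prop:charac-bisupp}, and concludes. You give slightly more detail in verifying the three conditions, but the construction and argument are the same.
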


\begin{proof}
  Let $\bbBm=\set{\p\in \bbB\,|\, \thepsi\, \rrewb^*\, \thr{\p}}$ \ie $\bbBm$ is the union of the reflexive transitive closure of $\thr{\epsi}$ under $\rrewb$. If $\thepsi \rewbullet^* \thbot$ does not hold,  then $\bbBm$ satisfies the hypotheses of Proposition~\ref{prop:charac-bisupp}. So there exists a derivation $P$ s.t. $\bisupp{P}=\bbBm$ and thus, $t$ is typable.
\end{proof}

Since a $\ttS$-derivation contains $\pepsi$, by Proposition\;\ref{prop:charac-bisupp}, any $\code{\cdot}$-derivation typing $t$ will satisfy $\bbBm\subseteq \bisupp{P}$ and thus, $\bbBm$ is the \textbf{minimal bisupport candidate} for a $\code{\cdot}$-derivation typing $t$.\\

\noindent \textbf{Analogies with first order model theory:} 
%  \pierre{Given $t\in \Lam$ and $\code{\cdot}$, let $\calT$ be the first order theory whose  set of constants is $\Thr{P}$ and whose axioms are $\thepsi \neq \thbot$ and $(\theta_1=\theta_2)$ (for all pairs $(\theta_1,\theta_2)$ such that $\theta_1\rrewb \theta_2$). Then Corollary\;\ref{corol:typability-and-thepsi} states that there exists a $\code{\cdot}$-derivation $P$ typing $t$ iff $\calT$ is not contradictory.}
Given $t\in \Lam$ and $\code{\cdot}$ and keeping in mind the intuition of bisupport candidates, let $\calT$ be the first order theory whose  set of constants is $\Thr{P}$, that one unary predicate symbol $\inBis$ (standing for ``is in bisupport'') and whose set of axioms is $\set{\inBis(\theta_1)\equiv \inBis(\theta_2)\;|\; \theta_1,\theta_2\in \Thr{P},\; \theta_1\rrewb \theta_1}\cup \set{\neg \inBis(\thbot)}$. Then Corollary\;\ref{corol:typability-and-thepsi} states that there \textit{exists} a $\code{\cdot}$-derivation $P$ typing $t$ iff $\calT$ is not contradictory: this is a sort of completeness result. Of course, it remains to be proved that $\calT_t$ is not contradictory (given any $t$). And this will be done using a technique closely associated to the $\lam$-calculus: a finite reduction strategy (presented in \S\;\ref{s:normalizing-chains}).

  %% \fublainv{y a probablement moyen d'avoir une théorie dtles modèles sont *toutes* les code deriv du terme t}

%\begin{definition}[Typability Theory]
%  The \textbf{typability theory of $t$}, written $\Typ_t$, is the first order theory on the set of constants $\ttThr$ and the unary predicate ${\tt Bot}$, defined by $\Typ_t=\set{\Bot{\thbot}}\cup \set{\Bot{\theta_1}\Rew \Bot{\theta_2}\,|\, \theta_1\rrewb \theta_2}$.
% \end{definition}

%Intuitively, $\Bot{\thr{a,c}}$ means $(a,c)$ is not in $\bisupp{P}$. If $\Typ_t\vdash \Bot{a,c}$, no derivation $P$ typing $t$ can hold $(a,c)$. When $\Typ_t\nvdash\Bot{\thepsi}$, we can build a derivation typing $t$ whose bisupport is minimal:
%\begin{proposition}
%\label{prop:typt-implies-typable}
%  If $\Typ_t$ does not prove $\Bot{\thepsi}$, then $t$ is typable in $\ttS$.
%\end{proposition}

\section{Nihilating Chains}
\label{s:nihilating-chains}

We begin \S\;\ref{s:nihilating-chains} with a global description of the key steps leading to the fulfilment of Goal\;\ref{goal:code-S-typability} (every term is $\code{\cdot}$-typable) the final result (every term is $\scrR$-typable) and a presentation of the central notion of nihilating chain.\\

In the purpose of proving that every term is typable, we want to prove that, for all term $t$ and injection $\code{\cdot}:\bbN^*\rew \bbN\setminus\set{0,1}$, there is a $\code{\cdot}$-derivation typing $t$. According to Corollary\;\ref{corol:typability-and-thepsi}, we must show that $\thbot$ is not in the reflexive transitive closure of $\thepsi$ by $\rrewb$. A proof of $\thepsi \rrewb^* \thbot$ would involve a nihilating chain: 

\begin{definition}
  \label{def:nihilating-chain}\mbox{}
  \begin{itemize}
    % A \textbf{(nihilating) chain} is a \textit{finite} sequence of the form $\theta_0 \rrewb \theta_1 \rrewb \ldots \rrewb \theta_{m}$ with $\theta_0=\thepsi$ and $\theta_{m}=\thbot$. 
\item  A \textbf{chain} is a \textit{finite} sequence of the form $\theta_0 \rrewb \theta_1 \rrewb \ldots \rrewb \theta_{m}$. 
\item When $\theta_0=\thepsi$ and $\theta_{m}=\thbot$, the chain is said to be \textbf{nihilating}.
  \end{itemize}
\end{definition}

In order to apply Corollary~\ref{corol:typability-and-thepsi}, we must then prove that there is no nihilating chain. In other words, this corollary implies:

\begin{proposition}
  \label{prop:if-no-nihil-ch-then-comp-unsound}
If the nihilating chains do not exist, then every term is $\code{\cdot}$-typable, and thus, also $\scrR$-typable.
\end{proposition}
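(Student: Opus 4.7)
The plan is to observe that this proposition is essentially a direct packaging of Corollary~\ref{corol:typability-and-thepsi} together with the collapse from $\ttS$ to $\scrR$ mentioned in \S\ref{ss:system-S-LICS18}. There is no substantive new mathematical content to introduce: we are merely rephrasing the conclusion of the corollary in terms of nihilating chains and then transporting typability from $\ttS$ down to $\scrR$.

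First, I would fix an arbitrary term $t$ and an arbitrary injection $\code{\cdot}:\bbN^{*}\rew \Nmzo$, and unfold what ``no nihilating chain exists'' means for this $t$ and this $\code{\cdot}$. By Definition~\ref{def:nihilating-chain}, the absence of nihilating chains means that there is no finite sequence $\theta_0 \rrewb \theta_1 \rrewb \ldots \rrewb \theta_m$ with $\theta_0=\thepsi$ and $\theta_m = \thbot$. Equivalently, the statement $\thepsi \rrewb^{*} \thbot$ fails, i.e.\ $\thbot$ does not belong to the reflexive transitive closure of $\{\thepsi\}$ under $\rrewb$. This is precisely the hypothesis of Corollary~\ref{corol:typability-and-thepsi}, so applying that corollary yields a $\code{\cdot}$-derivation typing $t$, and hence a $\ttS$-derivation typing $t$.

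Then I would invoke the collapse from $\ttS$ to $\scrR$: forgetting track annotations turns a sequence into a multiset and thus turns any $\ttS$-derivation into an $\scrR$-derivation on the same term with a collapsed type, as noted right after Proposition~\ref{s:S-sr-se}. Hence $t$ is also $\scrR$-typable. Since $t$ and $\code{\cdot}$ were arbitrary, the conclusion holds for every term. The only point to be mildly careful about is that ``the nihilating chains do not exist'' must be read uniformly in the choice of $t$ and $\code{\cdot}$, because the relation $\rrewb$ (and the constants $\thepsi,\thbot$) depend on both; this is compatible with the way we chose them arbitrarily at the start.

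There is no real obstacle in this proposition: all the genuine difficulty of the paper is hidden in the forthcoming verification that nihilating chains actually never exist (the content of \S\ref{s:nihilating-chains}--\S\ref{s:normalizing-chains}, where the normalizing/collapsing strategy on chains is introduced). By contrast, the present statement is just the ``soft'' side of the argument, setting up the reduction of $\scrR$-typability to the non-existence of nihilating chains, and its proof fits in a few lines of reasoning of the shape described above.
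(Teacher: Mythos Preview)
Your proposal is correct and matches the paper's approach exactly: the paper introduces this proposition with the words ``this corollary implies'' (referring to Corollary~\ref{corol:typability-and-thepsi}), giving no further proof, and you have simply unfolded that implication together with the collapse from $\ttS$ to $\scrR$ noted after Proposition~\ref{s:S-sr-se}.
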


We proceed \textit{ad absurdum} and consider $\theta_0 \rrewb \theta_1 \rrewb \ldots \rrewb \theta_{m}$ with $\theta_0=\thepsi$ and $\theta_{m}=\thbot$. 
However, $\rrewb$ can be $\rrew,\, \werr,\, \rrewto,\, \rrewtt,~ \rrewr$ or $\rrewdown$. The structure of the proof is the following:

\begin{itemize}
\item We define (Definition\;\ref{d:syntactic-polarity}) the notion of \textit{polarity} for bipositions:  a biposition  is negative when it is created by an $\abs$-rule (modulo $\rewa$) and positive if not.
  %$\Pol{\p}=\oplus$ if $\Asc{\p}$ is in an axiom and $\Pol{\p}=\omin$ if $\Asc{\p}$ is in a $\lx$.
\item  The termination of a finite collapsing strategy (Sec.\;\ref{s:collapsing-strategy}) guarantees that positivity can be assumed to only occur at suitable places in the chain without loss of generality. In that case, we say that the\techrep{ nihilating } chain is \textit{normal} (Definition\;\ref{def:normal-chain}).
\item In normal chains, the different cases of subjugation interact well (\S\;\ref{s:interaction-normal-chains}), so that, from any normal chain, we may build another that begins with $\thepsi \rrewb \theta_1$\techrep{ (\S\;\ref{s:comp-unsound-almost-at-hand})}. This is  easily shown to be impossible,\techrep{ allowing us to conclude }\paper{ which entails } that nihilating chains do not exist and that every term is $\ttS$-typable.
\end{itemize}  

\ignore{   %\restore ?
\begin{itemize}
\item We define (Def.~\ref{d:syntactic-polarity}) the notion of \textit{syntactic polarity} for bipositions: a biposition  is negative when it was created by an $\abs$-rule (modulo $\rewa$) and positive if not.
  %$\Pol{\p}=\oplus$ if $\Asc{\p}$ is in an axiom and $\Pol{\p}=\omin$ if $\Asc{\p}$ is in a $\lx$.
\item  A collapsing strategy grants that positivity can occur at suitable places in the chain (collapsing hypothesis). In that case, we say that the nihilating chain is normal.
\item We can rule out the occurrences of $\rrew,~ \rrewr$ and $\rrewdown$ in a normal  chain.
\item \pierre{Relation $\werr$ ``commutes'' with $\rrewto$ and $\rrewtt$ in normal nihilating chains. This will allow us to assume that the chain is of the form:
\begin{center}
  $\theta_0 \rrewt \theta_1 \rrewt \ldots \rrewt \theta_{m'} \werr \theta_{m'+1}\werr \ldots \werr \theta_m$\end{center}}
\item \pierre{This implies that $\theta_{m'}=\thbot$. We can then assume that $m'=0$.}
\item \pierre{Definition of $\rewa$ and $\rewp$ shows then that $\theta_m$ cannot be $\thepsi$, which concludes the proof.}
\end{itemize}}

\subsection{Polarity and Threads}
\label{s:syntactic-polarity}

In this section, we\techrep{ describe the form of threads and ascendant threads (Definition\;\ref{def:threads-comp-unsound}) and } define the key notion of syntactic polarity of a biposition.

\techrep{
Notice that $\rewa$ is functional: if $\p_1 \rewa \p_2$, we write $\p_2=\asc(\p_1)$. Notice also that $\asc$ is injective. Thus, $\p_1 \idena \p_2$ iff $\exists i\geqslant 0,~ \p_2=\asc^i(\p_1)$ or $\p_1=\asc^i(\p_2)$.

Given a biposition $\p=(a,c)\in \bbB$, we call $a$ the \textbf{outer} and $c$ the \textbf{inner position} of $\p$. Since $\asc$ may only add the prefix 0 or 1 to $a$ and  add/remove the prefix 1 to $c$, by induction:

\begin{lemma}
\label{l:form-idena} 
If $(a_1,c_1) \idena (a_2,c_2)$ then $\exists a_3\in \set{0,1}^*,\,(a_2=a_1\ct a_3~ \text{or}~ a_1=a_2\ct a_3)$ and  $\exists i\geqslant 0,\,(c_2=1^i\cdot c_1~ \text{or}~ c_1=1^i\cdot c_2)$.
\end{lemma}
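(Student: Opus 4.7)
The plan is to exploit the functionality and injectivity of $\asc$ (noted just before the statement): the $\idena$-equivalence class of any biposition forms a linear chain under the map $\asc$, so the hypothesis $(a_1,c_1)\idena (a_2,c_2)$ reduces to the existence of some $n\geqs 0$ with (without loss of generality) $(a_2,c_2)=\asc^n(a_1,c_1)$. The proof then proceeds by induction on $n$.

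The base case $n=0$ gives $a_2=a_1$ and $c_2=c_1$, so we take $a_3=\epsi$ and $i=0$. For the inductive step, set $(a',c'):=\asc^{n-1}(a_1,c_1)$, so that the induction hypothesis gives $a'=a_1\ct a'_3$ with $a'_3\in \set{0,1}^*$ and either $c'=1^{i'}\ct c_1$ or $c_1=1^{i'}\ct c'$. Now apply one more step of $\asc$, inspecting $t(a')$:
\begin{itemize}
\item If $t(a')=\arob$, the application clause of $\rewa$ gives $(a_2,c_2)=(a'\ct 1,\,1\ct c')$. Then $a_2=a_1\ct(a'_3\ct 1)$ with $a'_3\ct 1\in \set{0,1}^*$. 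For the inner coordinate, either $c_2=1^{i'+1}\ct c_1$ (first sub-case), or when $c_1=1^{i'}\ct c'$ we have $c_1=1^{i'}\ct c'$ and $c_2=1\ct c'$: if $i'\geqs 1$ then $c_1=1^{i'-1}\ct c_2$, and if $i'=0$ then $c_2=1\ct c_1$. In all sub-cases, the conclusion holds.
\item If $t(a')=\lx$, necessarily $c'$ starts with $1$, say $c'=1\ct c''$, and $(a_2,c_2)=(a'\ct 0,\,c'')$. Then $a_2=a_1\ct(a'_3\ct 0)$ with $a'_3\ct 0\in \set{0,1}^*$. For the inner coordinate, analyse the two sub-cases of the induction hypothesis: if $c'=1^{i'}\ct c_1$, then either $i'\geqs 1$ and $c_2=1^{i'-1}\ct c_1$, or $i'=0$ and $c_1=1\ct c_2$; if $c_1=1^{i'}\ct c'$, then $c_1=1^{i'+1}\ct c_2$. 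Again the conclusion holds.
\end{itemize}

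The only real subtlety is the bookkeeping on the inner coordinate: each $\asc$-step prepends or strips a single $1$, so the induction variable $i$ oscillates around $0$ and may switch which side of the disjunction ($c_2=1^i\ct c_1$ vs.\ $c_1=1^i\ct c_2$) is active; handling the boundary cases $i'=0$ and $i'=1$ carefully is what one must not skip. There is no deep obstacle: once $\rewa$ is recognised as functional and injective, the lemma is a mechanical unfolding of the two clauses of its definition.
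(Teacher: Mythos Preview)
Your proof is correct and follows the same approach as the paper's one-line argument (``Since $\asc$ may only add the prefix $0$ or $1$ to $a$ and add/remove the prefix $1$ to $c$, by induction''): you reduce $\idena$ to an iterated application of $\asc$ via its functionality and injectivity, then induct on the number of steps, case-splitting on the two clauses of $\rewa$. The only difference is that you spell out the boundary bookkeeping on the inner coordinate, which the paper leaves implicit.
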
}

We set, for all $\p\in \bbB$, $\Asc{\p}=\asc^i(\p)$, where $i$ is maximal (\ie $\asc^i(\p)$ is defined, but not $\asc^{i+1}(\p)$). Thus, $\Asc{\p}$ is the \textbf{top ascendant} of $\p$ \eg in Fig.\;\ref{fig:a-thread-in-Pex},\techrep{ $\Asc{\epsi,1^2}=(0^2\cdot 1,1)$ (\resp $\Asc{\epsi,1\cdot 4\cdot 1}=(0,4\cdot 1)$), that is,} the top red (\resp blue) occurrence of $\tv'$ is the top ascendant of the other ones (\resp one). A top ascendant is either located in an $\ax$-node (\eg the top red ascendant in Fig.\;\ref{fig:a-thread-in-Pex}) or in an $\abs$-node (\eg the blue one)\techrep{, since $\asc$ is total on $\app$-nodes}, motivating the notion of (syntactic) polarity\techrep{  for bipositions:}

\begin{definition}
\label{d:syntactic-polarity} \mbox{}
  \begin{itemize}
\item  Let $\p \in \bbB^t\setminus\set{\bbot}$ and $(a_0,c_0)=\Asc{\p}$. We define the \textbf{polarity} of $\p$ as follows: if $t(a_0)=x$ for some $x\in \TermV$, then we set $\Pol{\p}=\oplus$ and if $t(a_0)=\lx$, then we set $\Pol{\p}=\ominus$. We also set $\Pol{\bbot}=\omin$.
\item If $\thr{\p}=\theta$ and $\Pol{\p}=\oplus/\ominus$, we say that $\theta$ occurs positively/negatively at biposition $\p$.
\item If $\theta$ is left/right-consumed at $\p$ and $\Pol{\p}=\oplus$ (resp. $\Pol{\p}=\ominus$), we say that $\theta$ is left/right-consumed positively (resp. negatively) at biposition $\p$.  
\end{itemize}
\end{definition}

Then, we write for instance $\theta_1 \loplus\rrewc{a} \romin \theta_2$
to mean that $\theta_1$  is left-consumed positively and $\theta_2$ is right-consumed negatively in the $\app$-rule at position $a$.
%to mean that $\theta_1$ (resp. $\theta_2$) is left(resp. right)-consumed positively (resp. negatively).
In Fig.\;\ref{fig:a-thread-in-Pex}, the blue occurrence of $\tv'$  is negative and the red ones are positive.

\techrep{Since $\rewp$ also defines an injective function (out of $\thbot$) and $\p_1\rewp \p_2$ implies that $\p_1$ (in a $\lx$) and $\p_2$ (in an axiom or $\bbot$) do not have ascendants:

\begin{lemma}
  \label{l:form-iden} \mbox{}
\begin{itemize}
\item  For all $\p_1,\p_2\in \bbB^t$, $\p_1\iden \p_2,\, \Pol{a_1,c_1}=\oplus$ and $\Pol{a_2,c_2}=\omin$ iff $\Asc{a_1,c_1}\rewp \Asc{a_2,c_2}$.
\item For all $\p\in \bbB^t$, $\thr{\p}=\thbot$ iff $\Asc{\p}=(a_0,k\ct c_0)$ with $t(a_0)=\lx$ and $k\notin \Trl{a_0}$. 
\end{itemize}
\end{lemma}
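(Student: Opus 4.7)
The plan is to exploit the generator structure of $\iden$: by definition, $\iden$ is the reflexive, symmetric, transitive closure of $\rewa \cup \rewp$, and two structural observations will drive the argument. First, by the preceding Lemma~\ref{l:form-idena} and the definition of $\Asc{\cdot}$, the top ascendant is invariant along $\idena$-classes, so the polarity $\Pol{\cdot}$ (read off the node type at the top ascendant) is constant within each ascendant class. Hence only $\rewp$-edges can flip polarity along an $\iden$-path. Second, both endpoints of a $\rewp$-edge are \emph{top} ascendants: the source $(a_0, k\cdot c_0)$ with $t(a_0)=\lx$ has no $\rewa$-successor, since the $\rewa$-clause at a $\lx$-node only lifts inner positions beginning with $1$ whereas here $k\geqs 2$; and the target $(\pos{k}, c_0)$ lies at an axiom rule, which is a leaf of the derivation and therefore has no $\rewa$-successor either.

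For the first item, the $(\Leftarrow)$ direction is immediate: one composes $\p_1 \idena \Asc{\p_1}$, one $\rewp$-edge, and $\Asc{\p_2} \idena \p_2$, while the polarity assignment follows from the definition (axiom top ascendants are $\oplus$, $\lx$-node top ascendants with inner position starting with $k\geqs 2$ are $\omin$). For $(\Rightarrow)$, I would factor any $\iden$-path from $\p_1$ to $\p_2$ into maximal $\idena$-segments separated by $\rewp$- or $\pwer$-steps. Since polarity is invariant along each segment and strictly flips at every polar-inversion step, a mixed-polarity pair must be linked by an odd number of $\rewp$-edges. The key collapse is to show that consecutive pairs $\rewp \pwer$ (or $\pwer \rewp$) cancel: this uses the (partial) bijective behaviour of $\rewp$ on top ascendants, where the target is determined by the pair $(a_0,k)$ via the injection $\code{\cdot}$, and the source is determined by its enclosing $\lx$-binder. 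After cancellation the path reduces to a single edge between the top ascendants, which yields the stated characterisation.

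For the second item, I would reuse the same factorisation. The constant $\bbot$ has no $\rewa$-predecessor, and by definition of $\rewp$ its only incoming edges are of the form $(a_0, k\cdot c_0) \rewp \bbot$ with $t(a_0)=\lx$ and $k \notin \Trl{a_0}$. Hence the $\iden$-class of $\bbot$ consists exactly of $\bbot$ together with the $\idena$-descendants of top ascendants of that shape. The $(\Leftarrow)$ direction is then the direct composition $\p \idena \Asc{\p} \rewp \bbot$; the $(\Rightarrow)$ direction re-runs the factorisation argument, noting that any $\iden$-path to $\bbot$ must terminate with a $\rewp$-step whose source is a top ascendant of the required form (again by the absence of $\rewa$- and $\pwer$-predecessors at $\bbot$).

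The main obstacle is the collapse of alternating $\idena$/$\rewp$-paths to a single $\rewp$-edge: this is where one must carefully use that $\rewp$ is a partial bijection between $\lx$-node top ascendants and axiom top ascendants (plus $\bbot$), and that its endpoints are always \emph{top} ascendants so no $\rewa$-move can ``escape'' them. Once this is settled, both items amount to a bookkeeping of polarities and of the shape of the $\Asc{\cdot}$ datum.
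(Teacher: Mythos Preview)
Your proposal is correct and follows essentially the same approach as the paper. The paper's own proof is a single sentence recording the two key structural facts you identified---that $\rewp$ is an injective (partial) function out of $\thbot$, and that both endpoints of any $\rewp$-edge are top ascendants---leaving the reader to fill in precisely the path-factorization and cancellation argument you spell out.
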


Lemmas~\ref{l:form-idena} and \ref{l:form-iden} may be illustrated by Fig.\;\ref{fig:threads} (see also the more detailed Fig.\;\ref{fig:asc-pol-inv-informal-1}, p.\;\pageref{fig:asc-pol-inv-informal-1}), in which we only represent (with thick lines) the outer positions  in the occurrences of the threads.
\begin{figure}[h]
\begin{center}
  \begin{tikzpicture}

  \draw (0.5,4.2) node [right] {\small $\ovl{\ju{x:(5\ct S_5)}{x:S_5}}$};
  \draw [dotted] (2.75,4.1) --++ (0,-0.4) --++(0.4,-0.4)  --++ (0,-0.4) --++ (-1.1,-0.2) --++ (0,-0.2); 
\red{  \draw  [very thick] (2.9,4.1) --++ (0,-0.4) --++(0.4,-0.4);}
  \draw (3.2,3.65) node [right] {\parbox{1.5cm}{\fnsz positive occ. (red)}};
  
%  \draw [dashed] (2.75,4.1) --++ (0,-0.2) --++(0.2,-0.2) --++ (0,-0.2) --++ (0.2,-0.2) --++ (0,-0.4) --++ (-1.1,-0.2) --++ (0,-0.2); 
  
 %\draw (2.05,2.4) node {$\bullet$};
 % \draw [dotted]  (2.05,2.4) --++ (0,0.2) --++(1.45,0.2) --++(0,0.4);
 \draw (1.6,3) node [right] {\fnsz polar inv.};
  
  \draw (0,2.3) node [right] {\small $\ju{C}{\lx.u:\sSksh \rew T}$};
\blue{  \draw [very thick] (1.85,2.25) --++(0,-0.4) --++ (0.2,-0.2) --++ (0,-0.3)--++ (0.3,-0.3) ;}

  \draw (0,1.3) node [right] {\fnsz \parbox{1.7cm}{negative occ. of the thread (blue)}};

      \draw (0.1,0.2) node [right] {\textbf{Thread with $+/-$ occ.}};

\transh{1}{
  
  \draw (7.3,3.7) node [above right] {\small  $\bbot$};
  \draw (7.3,3.7) node {$\bullet$};

  \draw [dotted] (5.2,3.1)  --++ (0,0.4) -- (7.3,3.7) ;
  \draw (6,3.8) node {\fnsz polar inv.};
  
  \draw (3.5,2.9) node [right] {\small $\ju{C'\!}{\!\!\ly.v\!:\!(\!S'_k\!)_{\kK'}\!\rew \!T'$}};
\blue{  \draw [very thick] (5.2,2.8) --++ (0,-0.4) --++ (0.3,-0.3) --++ (0,-0.2); }

\blue{  \draw [very thick] (6.5,2.3) --++ (0,-0.3) --++ (0.2,-0.2) --++ (0,-0.3) --++ (0.15,-0.15);}
  \draw [dotted] (7.3,3.7) --++(-0.3,-0.3) --++ (0,-0.7) --(6.5,2.3) ; 

\blue{  \draw [very thick] (7.1,2.2) --++ (0,-0.4) --++ (0.2,-0.2) --++ (0,-0.2); }
  \draw [dotted] (7.1,2.2) --++ (0.2,0.2) -- (7.3,3.7);

  \draw (3.5,1.8) node [below right] {\fnsz negative occ. (blue)};
  
  %%% annotations

  %%%

    \draw (3.6,0.2) node [right] {\textbf{Thread of emptiness $\thbot$}};
}
\end{tikzpicture}%\\[-5.5ex]
  \caption{Threads}
  \label{fig:threads}
 % $\phd$\\[-6.2ex]
\end{center}
\end{figure}

Thus, except the empty thread $\thbot$, Lemma~\ref{l:form-iden} means a thread can have \textit{at most} two ``connex components" (a series of positive ascendants and/or a series of negative ascendants respectively) \eg the green or the purple threads in Fig.\;\ref{fig:asc-pol-inv-informal-1}. All bipositions of $\thbot$ are negative.}
\techrep{\begin{remark} \mbox{}
\begin{itemize}
\item If the top occurrence of an ascendant thread is in an $\ax$-rule typing a variable $x$ s.t. $x$ is not bound in $t$, then the thread has one (positive) ``connex component" \eg the red thread in Fig.\;\ref{fig:asc-pol-inv-informal-1}. 
\item If a series of ascendants ends at the \textit{root} of an $\abs$-rule introduction $\lx$ (\ie $\Asc{a_0,c_0}=(a,\epsi)$ with $t(a)=\lx$), then the thread  has one (negative) ``connex component" \eg the blue thread in Fig.\;\ref{fig:asc-pol-inv-informal-1}. Notice that such a thread does \textit{not} contain $\bbot$.
  \end{itemize}
\end{remark}}

%% essspliquer pq t(a_0)=\arob pas possible ?
\techrep{
  %{
  Lemma \ref{l:form-iden} will be refined into Lemmas~\ref{l:form-iden-pos} and \ref{l:form-iden-neg}.\\}
\techrep{Since a consumed biposition does not have a descendant, Lemma~\ref{l:form-idena} and \ref{l:form-iden} imply the following lemma (which is the formal version of Observation\;\ref{obs:thread-at-most-2-consumed-informal}, p.\;\ref{obs:thread-at-most-2-consumed-informal}):

\begin{lemma}[Uniqueness of Consumption]
  \label{l:uniqueness-consumption}
  Let $\oast \in \set{\oplus,\omin}$ and $\theta \in \ttThr,~ \theta\neq \thbot$. Then, there is a most one $\theta'$ s.t. ($\theta \loast  \rrew \theta'$ or $\theta \loast\werr \theta'$).
\end{lemma}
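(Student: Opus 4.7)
\begin{proofsketch}
The plan is to combine two basic observations: that the polarity-$\oast$ part of a thread is a linear chain with a unique bottom, and that the consumable bipositions are precisely such bottoms.

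First I will note that $\rewa$ is functional, and so is its converse: given $\p' = (a',c')$, the last letter of $a'$ determines which case of $\rewa$ could have produced $\p'$, and hence fixes the unique candidate $\p$ with $\p \rewa \p'$. It follows that each $\idena$-class is a linear chain running from a unique top $\Asc{\cdot}$ down to an at-most-one bottom, namely an element with no $\rewa$-descendant.

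Next I will check that every consumable biposition is such a bottom. A left-consumable biposition has shape $(a\cdot 1, k\cdot c)$ with $k\geqslant 2$, so neither the $\arob$-case of $\rewa$ (which would require the inner position to start with $1$) nor the $\lx$-case (which would require the outer position to end in $0$) can apply to it. The same argument covers a right-consumable $(a\cdot k, c)$ with $k\geqslant 2$. Moreover these two shapes of outer positions are incompatible, so a single biposition is at most left- or right-consumable, never both.

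It then remains to invoke Lemma~\ref{l:form-iden}: a thread $\theta \neq \thbot$ contains at most one positive and at most one negative ascendant component. Fixing $\oast$ thus isolates a single chain inside $\theta$, which has at most one bottom and hence at most one consumable biposition $\p$. Since $\rewc{a}$ is itself functional, this $\p$ (if it exists) picks out a unique consumption partner and thus a unique $\theta'$ satisfying either $\theta \loast \rrew \theta'$ or $\theta \loast \werr \theta'$. The only delicate point I anticipate is keeping the left- and right-consumption cases cleanly separated so that the combined count really is at most one rather than at most two; the incompatibility of the two outer-position shapes handles exactly this.
\end{proofsketch}
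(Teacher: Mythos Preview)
Your proof is correct and follows essentially the same route as the paper, which simply notes that ``since a consumed biposition does not have a descendant, Lemma~\ref{l:form-idena} and~\ref{l:form-iden} imply the following lemma.'' You have unpacked exactly these ingredients: consumed bipositions are bottoms of ascendant chains, each ascendant chain is linear, and a non-$\thbot$ thread has at most one ascendant component of each polarity. Your additional observation that a single biposition cannot be simultaneously left- and right-consumable (because the outer position ends in $1$ versus in $k\geqslant 2$) is the right way to ensure the final count is at most one rather than two.
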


In the figure above, consumption may occur at the bottom of each ``connex component".}

\subsection{Interactions in Normal Chains}
\label{s:interaction-normal-chains}

% In this section, we mainly try to explicit possible interactions between consumption (relation $\rew$ or $\rrew$) and subjugation in normal chains:

In \S\;\ref{s:interaction-normal-chains}, we present the notion of normal chain and explicit some interesting interaction properties that allow us to simplify/rewrite them.

As it has been discussed in \S\;\ref{ss:intro-infty-relev-typing} and \ref{ss:system-R-LICS18}, the possibility for a variable $x$ (of a redex or of a redex to be created later)  to be substituted in a reduction sequence is problematic. Intuitively, a biposition is negative when it was ``created" in an abstraction $\lx$ and that left-consumption is associated to left-hand sides of application. Thus, a negative left-consumption hints at the presence of redex (this intuition will be made more explicit in Sec.\;\ref{s:collapsing-strategy}). More precisely, it indicates the presence of what we will call a \textit{redex tower}. This suggests the following notion:

\begin{definition}
  \label{def:normal-chain}
  A nihilating chain is \textbf{normal} if no thread is left-consumed negatively in it (the chain does not contain a link of the form $\theta_i\lomin \rrew \theta_{i+1}$ or $\theta_i \werr\romin \theta_{i+1}$).
\end{definition}

Normal chains can be handled! We state some \textbf{interaction lemmas} below, that notably describe some commutations between stability relations.

\techrep{
Some relations of subjugation may roughly be related to the negative hand side of an arrow (\eg left-consumption destroys $\sSk$ in $t:\sSk\rew T$) or positive hand side (\eg ascendance) or the root of a type (\eg $\rewr$, from inner position $\epsi$ to 1). On the other hand, in a type, the track values give  a concise way to indicate nesting inside arrows \ie an argument track means a nesting inside the source of an arrow and track 1 means nesting in its target. This will help us now to better understand the possible forms of normal chains, yielding the series of \textbf{Interaction Lemmas} below.

Let $\p=(a,c)\in \bbB^t$.
According to Lemma~\ref{l:form-idena}, ascendance/descendance can only add or remove the prefix 1 to $c$. From that, we deduce that the only way to remove an argument track while visiting a thread is to pass from the negative hand side to the positive hand side of $\rewp$ (indeed, track $k$ is absent from the inner position of $\p_2$ in  $\p_1=(a,k\cdot c)\rewp (\pos{k},c)=\p_2$). Moreover, a biposition that is left-consumed must start with an argument track (a track $\geqslant 2$) by definition of $\rew$. Thus, if a thread is left-consumed \textit{positively}, the inner position of all its occurrences will contain an argument track (this is the meaning of Lemma~\ref{l:form-iden-pos}).

% \phdadd{goal remplacé par target}
By definition of $\rewr$ and $\rewdown$, the target of $\rewr$ (resp. of $\rewdown$) must have an inner position equal to $1$ (resp. $\epsi$). This, with the previous observation (related to Lemma~\ref{l:form-iden-pos}), shows that a \textit{thread} that is consumed positively cannot be the target of $\rrewr$ or $\rrewdown$ (Lemma~\ref{l:rrew-rrewr-rrewdown} below).

It is easy to see  that $\rrewto$ and $\rrew$ commute (Lemma~\ref{l:rrew-rrewto} below). Using a similar reasoning as the one above (presence of an argument track in the inner position), we may also prove that $\rewtt$ and $\rrew$ commute when the involved thread is left-consumed positively in $\rrew$ (Lemma~\ref{l:rrew-rrewtt}). 
 
}

% \paper{ %%% version courte
%   %{
%\noindent The following lemmas make explicit possible interactions between consumption and subjugation. We prove them mostly using Lemmas~\ref{l:form-idena} and \ref{l:form-iden}:
% }  

% We set $\son{k}{a,c}=(a,c\ct k)$ %and $\fath{a,c\ct k}=(a,c)$
%for all $a,c\in \bbN^*,k\geqslant 1$ s.t. $a\in \bbA$.
%% checker s'il y a encore du fath ou du son_k dans version longue

\techrep{\paragraph*{Formal Proofs of the Intersection Lemmas} 
 The end of \S\;\ref{s:interaction-normal-chains} is technical and is dedicated to the proof of the interaction lemmas described above. A summary of these lemmas can be found at the beginning of \S\;\ref{s:comp-unsound-almost-at-hand}.
  
\begin{lemma}  %%%%%%% pas sûr que ce lemme soit vrai. Notamment à cause des racines dans les types fleches de lambda...
  \label{l:rew-rewto}
If $(a_1,c_1)\iden (a_2,c_2)$, then $(a_1,c_1\ct k)\iden (a_2,c_2\ct k)$ for all $k$.
  %  If $\p_1 \iden \p_2$, then $\son{k}{\p_1}\iden \son{k}{\p_2}$.
\end{lemma}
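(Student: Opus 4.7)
\begin{proofsketch}
The plan is to proceed by induction on the length of an $\iden$-chain joining $(a_1,c_1)$ and $(a_2,c_2)$, reducing the claim to verifying that each elementary link (an instance of $\rewa$ or $\rewp$, or its converse) is preserved when a common suffix $k$ is appended to the inner positions. The base case $(a_1,c_1) = (a_2,c_2)$ is trivial, and by symmetry it suffices to treat $\rewa$ and $\rewp$ oriented forward.

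The $\rewa$-cases are immediate from the definition of ascendance. If $(a,c)\rewa (a\ct 1, 1\ct c)$ because $t(a) = \arob$, then the same rule yields $(a,c\ct k)\rewa (a\ct 1, 1\ct c\ct k)$; and if $(a, 1\ct c)\rewa (a\ct 0,c)$ because $t(a) = \lx$, then $(a, 1\ct c\ct k)\rewa (a\ct 0, c\ct k)$ by the same rule. For a $\rewp$-link landing in an ordinary biposition, $(a, k'\ct c) \rewp (\pos{k'},c)$ with $k'\in \Trl{a}$, appending $k$ to the inner position gives $(a, k'\ct c\ct k)\rewp (\pos{k'}, c\ct k)$ by the first clause of polar inversion.

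The one subtle situation is a link that traverses $\bbot$: since the constant $\bbot$ has no inner position to which a suffix can be appended, we cannot carry the induction through $\bbot$ naively. The fix is to treat a pair of links $(a, k'\ct c)\rewp \bbot \pwer (a', k''\ct c')$ as a single step. Because the second clause of polar inversion fires whenever the head track is \emph{not} in $\Trl{\cdot}$, we have $k'\notin \Trl{a}$ and $k''\notin \Trl{a'}$, and these conditions are insensitive to a suffix $k$ appended further down the inner position. Hence $(a, k'\ct c\ct k)\rewp \bbot$ and $(a', k''\ct c'\ct k)\rewp \bbot$, so $(a, k'\ct c\ct k)\iden (a', k''\ct c'\ct k)$. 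Stitching all such rewritten links together yields the required $\iden$-equivalence between $(a_1, c_1\ct k)$ and $(a_2, c_2\ct k)$. The only step that requires care is this pairing around $\bbot$; everything else is a one-for-one verification of the four basic rules.
\end{proofsketch}
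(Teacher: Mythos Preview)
Your proof is correct and follows the same approach as the paper: verify that appending a suffix $k$ to the inner position is preserved by each generator $\rewa$ and $\rewp$, then conclude by induction on the length of an $\iden$-chain. Your explicit treatment of chains passing through $\bbot$ is a detail the paper's one-line proof elides, but it is handled correctly and does not constitute a genuinely different route.
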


\begin{proof}
This holds when we replace $\iden$ by $\rewa$ and $\rewp$. The lemma follows by induction.
\end{proof}

\noindent Lemma~\ref{l:rew-rewto} is useful to prove Lemmas\;\ref{l:rrew-rrewto}, \ref{l:form-iden-pos} and \ref{l:form-iden-neg}.
} 
 
\begin{lemma}[Exchange of $\rrew$ and $\rrewto$]
  \label{l:rrew-rrewto}
If $\theta_1 \rrewto \theta_2$ and $\theta_2\rrew \theta_4$, then, $\exists \theta_3,\, \theta_1\rew \theta_3 $ and $\theta_3\rrewto \theta_4$.
\end{lemma}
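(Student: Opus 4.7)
\begin{proofsketch}
The plan is a direct diagram chase: pick explicit biposition witnesses for the two hypotheses, then use the preceding Lemma~\ref{l:rew-rewto} to transport the $\iden$-equivalence past the extra track letter that $\rewto$ strips off, and read off $\theta_3$ from the resulting diagram.

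Concretely, unfolding the definitions, I would pick representatives $\p_1 = (a,\,c\cdot k) \rewto (a,\,c) = \p_2$ witnessing $\theta_1 \rrewto \theta_2$, and representatives $\p_2' = (a'\cdot 1,\,k'\cdot c')$, $\p_4 = (a'\cdot k',\,c')$ with $t(a')=\arob$ and $\p_2' \rew \p_4$ witnessing $\theta_2 \rrew \theta_4$. The common thread $\theta_2$ gives $\p_2 \iden \p_2'$. The key move is to apply Lemma~\ref{l:rew-rewto} to this equivalence with the same trailing track $k$ that was stripped by $\rewto$: this produces $(a,\,c\cdot k) \iden (a'\cdot 1,\,k'\cdot c'\cdot k)$, i.e.\ $\p_1 \iden (a'\cdot 1,\,k'\cdot c'\cdot k)$.

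Once this is in hand, I define $\p_3 := (a'\cdot k',\,c'\cdot k)$ and $\theta_3 := \thr{\p_3}$. Because $t(a')=\arob$, the $\rew$-step $(a'\cdot 1,\,k'\cdot c'\cdot k) \rew \p_3$ is valid, so combining with $\p_1 \iden (a'\cdot 1,\,k'\cdot c'\cdot k)$ yields $\theta_1 \rrew \theta_3$. On the other hand, stripping the trailing $k$ from the inner position of $\p_3$ gives $\p_3 \rewto (a'\cdot k',\,c') = \p_4$, hence $\theta_3 \rrewto \theta_4$. There is no genuine obstacle here: the whole content of the lemma is the ``trailing $k$ is preserved by $\iden$'' fact already isolated in Lemma~\ref{l:rew-rewto}, and the square closes because $\rewto$ acts only on the inner position while $\rew$ acts on both coordinates independently.
\end{proofsketch}
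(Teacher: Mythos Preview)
Your proof is correct and follows essentially the same approach as the paper: pick biposition witnesses, use Lemma~\ref{l:rew-rewto} to push the trailing track through the $\iden$-equivalence, and take $\theta_3=\thr{a'\cdot k',\,c'\cdot k}$. The paper's proof is identical up to variable naming (it uses $\ell$ for your trailing $k$ and $k$ for your $k'$), and in fact your write-up is cleaner, since the paper's version contains a couple of evident typos in the definition of $\theta_3$ and the displayed relations.
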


%%% version longue : faire diagramme !

\techrep{
\begin{proof}

Say $\theta_1:(a,c\cdot \ell)\rewto (a,c):\theta_2$ and $\theta_2:(a'\cdot 1,k\cdot c')\rew (a'\cdot k, c'):\theta_4$. By Lemma \ref{l:rew-rewto}, $(a,c\cdot \ell)\iden (a\cdot 1,k\cdot c'\cdot \ell)$. So we set $\theta_3=\thr{a'\cdot 1,k\cdot c'\cdot \ell}$, so that $\theta_1:(a'\cdot 1,k\cdot c'\cdot \ell)\rew (a'\cdot 1,k\cdot c'\cdot \ell):\theta_3$ and $\theta_3:(a'\cdot 1,k\cdot c'\cdot \ell)\rewto (a'\cdot 1,k\cdot c'):\theta_4$ as expected.
\end{proof}
}

% Due to the definition of $\rewp$:

\techrep{

 \begin{lemma}
  \label{l:form-iden-pos}
  Assume $\Pol{a\ct 1,k}=\oplus$ and $(a\ct 1,k)\iden (a_2,c_2)$.
  \begin{itemize}
  \item If $\Pol{a_2,c_2}=\oplus$, then $\exists i,\, c_2=1^i\ct k$.\\
    Moreover, for all $c\in\bbN^*,\, (a\ct 1,k\ct c)\iden (a_2,c_2\ct c)$.
  \item If $\Pol{a_2,c_2}=\omin$, let $(a_0,1^j\ct k\ct c)$ and $\ell=\code{a_0}$. Then $\exists j,\, c_2=1^j\ct \ell \ct 1^i \, k$.\\Moreover, for all $c\in \bbN^*,\, (a\ct 1,k\ct c)\iden (a_2,c_2\ct c)$. 
  \end{itemize}
  If $\Pol{a\ct 1,k\ct c}=\oplus$ for some $c$, then $\Pol{a\ct 1,k}=\oplus$ and the lemma can be applied.
\end{lemma}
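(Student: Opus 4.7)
\begin{proofsketch}
My plan uses the thread structure from Lemma~\ref{l:form-iden} together with the congruence Lemma~\ref{l:rew-rewto}, and turns on a single invariant of ascendance: along any ascendant thread, an $\rewa$-step only modifies the \emph{leading $1$'s} of the inner position, while any suffix starting with a letter $\neq 1$ is preserved. Indeed, each $\rewa$-step either appends a leading $1$ (at an $\arob$, via $(a,c)\rewa (a\ct 1,1\ct c)$) or strips one (at a $\lx$, via $(a,1\ct c)\rewa (a\ct 0,c)$, which requires the inner position to start with $1$). Since $k\geqs 2$, the suffix ``$k$'' of the inner position of $(a\ct 1,k)$ is untouched by any $\rewa$-step performed along the ascendant thread through $(a\ct 1,k)$; so its positive top ascendant, which exists as $\Pol{a\ct 1,k}=\oplus$, has the form $(a_0,1^i\ct k)$ for some $i\geqs 0$, with $t(a_0)$ a variable.

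For Case 1, both $\p_1:=(a\ct 1,k)$ and $\p_2:=(a_2,c_2)$ are positive and $\iden$-equivalent, so by Lemma~\ref{l:form-iden} they share the same positive top ascendant $(a_0,1^i\ct k)$. Applying the invariant to the descent from $(a_0,1^i\ct k)$ down to $(a_2,c_2)$ along the same ascendant thread yields $c_2=1^{i_2}\ct k$ for some $i_2\geqs 0$. For Case 2, $(a_2,c_2)$ is negative, so Lemma~\ref{l:form-iden} forces $\Asc{\p_1}\rewp \Asc{\p_2}$: the polar inversion $(a_0,1^i\ct k)\pwer (a',\ell\ct 1^i\ct k)$ holds, where $a'$ is the binder of the variable $x:=t(a_0)$, $\ell:=\code{a_0}\in \Trl{a'}$ and $a_0=\pos{\ell}$. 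Since $\ell\geqs 2$, the same invariant applied to the negative ascendant thread rooted at $(a',\ell\ct 1^i\ct k)$ gives that every biposition on it, in particular $(a_2,c_2)$, has inner position $1^j\ct \ell\ct 1^i\ct k$ for some $j\geqs 0$, as announced.

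The ``moreover'' statements in both cases are direct instances of Lemma~\ref{l:rew-rewto}: appending a fixed suffix $c$ to every inner position commutes with each step of $\rewa$ and $\rewp$, so $(a\ct 1,k)\iden (a_2,c_2)$ lifts to $(a\ct 1,k\ct c)\iden (a_2,c_2\ct c)$. For the final sentence of the lemma, note that the outer-position trace of the ascent from $(a\ct 1,\cdot)$ depends only on the current outer position and on whether the current inner starts with $1$; as $k\geqs 2$, neither ``$k$'' nor ``$k\ct c$'' starts with $1$, and their respective ascents evolve through identical outer positions and identical leading-$1$ counts. Hence the top ascendants of $(a\ct 1,k)$ and $(a\ct 1,k\ct c)$ share the same outer position, so the same polarity.

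The main delicate step will be the clean formulation of the invariant as a standalone sublemma (by induction on the number of $\rewa$-steps separating two bipositions in a single ascendant thread), and the bookkeeping showing that the polar inversion step, when it occurs, replaces the preserved suffix ``$k$'' with the new preserved suffix ``$\ell\ct 1^i\ct k$''. Everything else is a careful unrolling of the definitions of $\rewa$, $\rewp$, $\Asc{\cdot}$ and $\Pol{\cdot}$ from \S\;\ref{ss:asc-pi-c-u}--\ref{s:syntactic-polarity}.
\end{proofsketch}
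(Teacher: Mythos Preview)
Your proposal is correct and follows essentially the same approach as the paper: the paper's proof simply cites Lemmas~\ref{l:form-idena}, \ref{l:form-iden} and \ref{l:rew-rewto} for the two bullet cases, and proves the final sentence by the same induction you sketch (that $\asc^i(a,k)$ is defined iff $\asc^i(a,k\ct c)$ is, with matching outer position and the suffix $c$ carried along). Your ``leading-$1$'s invariant'' is exactly the content of Lemma~\ref{l:form-idena}, so you are just unpacking what the paper leaves terse. One small wording point: in Case~1 you invoke Lemma~\ref{l:form-iden} to conclude that two positive $\iden$-equivalent bipositions share a top ascendant, but that lemma as stated covers the mixed-polarity case; the fact you need (two positive bipositions in the same thread are already $\idena$-related) is the structural consequence discussed right after Lemma~\ref{l:form-iden}, so cite that discussion or Lemma~\ref{l:form-idena} directly.
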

}

\techrep{
  \begin{proof}
  The two first points come from Lemmas~\ref{l:form-idena},\;\ref{l:form-iden} and \ref{l:rew-rewto}.\\
Regarding the end of the statement: by induction on $i$, we also prove that, for all $k\geqslant 2,a,c\in \bbN$, $(a^i,c^i):=\asc^i(a,k)$ is defined iff $\asc^i(a,k\ct c)$ is and in that case, $\asc^i(a,k\ct c)=(a^i,c^i\ct c)$. Thus, if $(a_0,c_0)=\Asc{a,k}$, then $\Asc{a,k\ct c}=(a_0,c_0\ct c)$. This implies that  $\Pol{a,k}=\Pol{a,k\ct c}$.
\end{proof}

%%% dans version longue, montrer que c pas vrai qd pol pas positive !!!

\noindent Lemma~\ref{l:form-iden-pos} is useful to prove Lemmas~\ref{l:rrew-rrewr-rrewdown} and \ref{l:rrew-rrewtt}: %%% version courte
}

\begin{lemma}[Elimination of $\rrewr$ and $\rrewdown$]
\label{l:rrew-rrewr-rrewdown}
If $\theta \loplus \rrew \theta'$, there is no $\theta_0$ s.t. $\theta_0\rrewr \theta$ or $\theta_0 \rrewdown \theta$.
\end{lemma}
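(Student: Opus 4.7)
The plan is to exploit the constrained form of the inner positions in any occurrence of a thread that is left-consumed positively, as described by Lemma \ref{l:form-iden-pos}, and then observe that $\rrewr$ and $\rrewdown$ target inner positions ($1$ or $\epsi$) that are incompatible with this form.

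First I would unpack the hypothesis $\theta \loplus \rrew \theta'$: by definition of $\rrew$ this furnishes a biposition $\p_1=(a\cdot 1, k\cdot c)$ in $\theta$ with $t(a)=\arob$, $k\geqs 2$, and $\Pol{\p_1}=\oplus$. The last sentence of Lemma~\ref{l:form-iden-pos} then yields $\Pol{a\cdot 1, k}=\oplus$ (the polarity does not depend on the extra suffix $c$). Applying Lemma~\ref{l:form-iden-pos} to the base biposition $(a\cdot 1,k)$, every other biposition $(a_2,c_2')$ in its thread has $c_2'=1^i\cdot k$ (if positive) or $c_2'=1^j\cdot \ell\cdot 1^i\cdot k$ with $\ell=\code{a_0}\geqs 2$ (if negative); moreover the same lemma tells us that arbitrary occurrences of $\theta=\thr{\p_1}$ are obtained by appending $c$, namely $(a_2,c_2'\cdot c)$.

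The key observation is then that every inner position $c_2'\cdot c$ of an occurrence of $\theta$ contains a letter $\geqs 2$ (either $k$ or $\ell$), hence cannot be equal to $\epsi$ nor to the one-letter word $1$. I would then handle the two cases:
\begin{itemize}
\item If $\theta_0 \rrewr \theta$, then by definition of $\rewr$ the thread $\theta$ must occur at some biposition of the shape $(a_r,1)$ with $t(a_r)=\lx$. But the above form analysis shows that the inner position of any occurrence of $\theta$ is never the one-letter word $1$, contradiction.
\item If $\theta_0 \rrewdown \theta$, then by definition of $\rewdown$ the thread $\theta$ must occur at some biposition of the shape $(a_d,\epsi)$. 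Again the form analysis rules this out, since every occurrence of $\theta$ has a strictly longer inner position containing a track $\geqs 2$.
\end{itemize}

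The argument is essentially purely syntactic once Lemma~\ref{l:form-iden-pos} is in hand; the only subtle point is the appeal to its last sentence to transfer positivity from $(a\cdot 1,k\cdot c)$ to $(a\cdot 1,k)$ and to extend the description of the equivalence class to occurrences obtained by suffixing $c$. No obstacle beyond careful bookkeeping is expected.
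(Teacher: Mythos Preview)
Your proposal is correct and follows essentially the same approach as the paper: unpack the positive left-consumption to get an occurrence $(a\cdot 1,k\cdot c)$ of $\theta$, invoke Lemma~\ref{l:form-iden-pos} to see that every occurrence of $\theta$ has an inner position containing an argument track $\geqs 2$, and observe this is incompatible with the inner positions $1$ (target of $\rewr$) and $\epsi$ (target of $\rewdown$). The paper's proof is simply a terser version of exactly this argument.
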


   \techrep{
     \begin{proof}
       Say $\theta:(a\cdot 1,k\ct c)\rew (a\cdot k, c):\theta'$ (with necessarily $k\geqslant 2$). We assume \textit{ad absurdum} that $\theta_0\rrewr \theta$ or $\theta_0\rrewdown \theta$.

       The first case implies that $\theta=\thr{a',\epsi}$ and the second one that $\theta=\thr{a',1}$ for some $a'$. But this is impossible according to Lemma~\ref{l:form-iden-pos}.
\end{proof}
}

\begin{lemma}[Exchange of $\loplus \rrew$ and $\rrewtt$]
\label{l:rrew-rrewtt}
If $\theta_1\rrewtt \theta_2$ and $\theta_2\loplus\rrew \theta_4$, then, $\exists \theta_3,\, \theta_1\loplus \rrew \theta_2 $ and $\theta_3\rrewtt \theta_4$.
\end{lemma}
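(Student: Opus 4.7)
The plan is to unfold both hypotheses and then use the positivity assumption to align the two representatives of $\theta_2$. From $\theta_1 \rrewtt \theta_2$ we obtain $(a,c\cdot 1)\rewtt (a,c\cdot k)$ with $k\geq 2$, and from $\theta_2\loplus\rrew \theta_4$ we obtain $(a'\cdot 1,k'\cdot d)\rewc{a'}(a'\cdot k',d)$ with $\Pol{a'\cdot 1,k'\cdot d}=\oplus$. The two bipositions $(a,c\cdot k)$ and $(a'\cdot 1,k'\cdot d)$ belong to the common thread $\theta_2$, hence are $\iden$-related.

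Next, I would invoke the last paragraph of Lemma~\ref{l:form-iden-pos} to derive $\Pol{a'\cdot 1,k'}=\oplus$, and then apply the positive case of the same lemma to $(a'\cdot 1,k')$ to deduce that any $(a_2,c_2)\iden (a'\cdot 1,k')$ satisfies $c_2=1^i\cdot k'$ for some $i$, and moreover that $(a'\cdot 1,k'\cdot e)\iden (a_2,1^i\cdot k'\cdot e)$ for every suffix $e$. Taking $e=d$ and matching with $(a,c\cdot k)$ yields $a=a_2$ together with $c\cdot k=1^i\cdot k'\cdot d$; in the generic case $|d|\geq 1$ this forces $k=k'$ and provides a factorization $c=1^i\cdot k'\cdot d'$, $d=d'\cdot k$.

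Then I would extend the same $\iden$-relation with the suffix $1$ in place of the last letter of $d$, getting $(a'\cdot 1,k'\cdot d'\cdot 1)\iden (a,1^i\cdot k'\cdot d'\cdot 1)=(a,c\cdot 1)$. Setting $\theta_3:=\thr{a'\cdot k',\,d'\cdot 1}$, the consumption link $(a'\cdot 1,k'\cdot d'\cdot 1)\rewc{a'}(a'\cdot k',d'\cdot 1)$ at the same $\app$-node $a'$ witnesses $\theta_1\loplus\rrew \theta_3$, its positivity following once more from the last line of Lemma~\ref{l:form-iden-pos}. Finally, $(a'\cdot k',d'\cdot 1)\rewtt (a'\cdot k',d'\cdot k)$ (legitimate since $k\geq 2$) and $\theta_4=\thr{a'\cdot k',d'\cdot k}$, so $\theta_3\rrewtt\theta_4$.

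The hard part is precisely the alignment step: without the positivity hypothesis, nothing forces the two strings $c\cdot k$ and $k'\cdot d$ attached to the two representatives of $\theta_2$ to have compatible shapes, and one cannot in general swap the last letter of a shared suffix inside an $\iden$-class. This contrasts with the proof of Lemma~\ref{l:rrew-rrewto}, which works unconditionally because $\rewto$ merely chops a suffix whereas $\rewtt$ replaces one. The boundary case $d=\epsi$ (in which the factorization $d=d'\cdot k$ collapses) must be handled by a separate, slightly finer argument, using an alternative representative of $\theta_2$ obtained by polar inversion at the enclosing $\lx$ in order to recover a consumable positive biposition lying in $\theta_1$.
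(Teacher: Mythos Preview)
Your argument follows the paper's template (use Lemma~\ref{l:form-iden-pos} to transport the modified suffix through the $\iden$-class), but you have the direction of $\rewtt$ reversed. The displayed definition in \S\;\ref{s:consumption} is in fact a typo; the toy relation in \S\;\ref{ss:candidate-supp-types} and the paper's own proofs of this lemma and of Lemma~\ref{l:rbot} all use $(a,c\cdot\ell)\rewtt(a,c\cdot 1)$ for $\ell\geqslant 2$. Hence the $\theta_2$-representative on the $\rewtt$ side is $(a,c\cdot 1)$, not $(a,c\cdot k)$.

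That orientation is precisely what makes the argument clean: the representative of $\theta_2$ ends in $1$, and the shape supplied by Lemma~\ref{l:form-iden-pos} has the form $(\text{prefix ending in }k')\cdot d$ with $k'\geqslant 2$, so $d$ is forced to end in $1$, say $d=d_0\cdot 1$. There is then \emph{no} boundary case $d=\epsi$; that case is purely an artefact of your reversed orientation, and the ``alternative representative via polar inversion'' you sketch for it is not needed. Two smaller slips: the claim ``this forces $k=k'$'' is false (and in any case unused downstream); and you invoke only the positive clause of Lemma~\ref{l:form-iden-pos} for the representative at $a$, whereas its polarity is unknown a priori --- the paper explicitly covers both shapes $c\cdot 1=1^i\cdot k'\cdot d$ and $c\cdot 1=1^j\cdot\ell'\cdot 1^i\cdot k'\cdot d$. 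With the correct orientation and both polarity cases, your argument coincides with the paper's proof.
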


  \techrep{
    \begin{proof}
      Say $\theta_1:(a, c\cdot \ell)\rewtt (a, c \cdot 1):\theta_2$ and $\theta_2:(a'\cdot 1,k\cdot c')\loplus\rew (a'\cdot k, c'):\theta_4$.
           
  By Lemma~\ref{l:form-iden-pos}, $c\cdot 1=1^i\cdot k\cdot c'$ or $c\cdot 1=1^j\cdot \ell' \cdot 1^i\cdot k\cdot  c'$ for some $i,j,\ell'$. Thus, $c'=c'_0\cdot 1$ for some $c'_0$.

  Also by Lemma~\ref{l:form-iden-pos} (last statement), we have
$(a,c \cdot \ell) \iden (a'\cdot 1,k\cdot c'_0\cdot \ell)$.

  We set then $\theta_3=\thr{a'\cdot k,c'_0\cdot \ell}$, so that
  $\theta_1:(a'\cdot 1,k\cdot c'_0\cdot \ell)\rew (a'\cdot k,c'_0\cdot \ell):\theta_3$ and $\theta_3:(a'\cdot k,c'_0\cdot \ell)\rewtt (a'\cdot k,c'_0\cdot 1):\theta_4$, as expected.
\end{proof}
}

  \techrep{
  We recall from Lemma~\ref{l:form-iden} that, if the thread of a biposition $\p=(a,c)$ is $\thbot$, then $\p$ is negative and was ``created" in an $\abs$-rule (thus, $c$ must have an argument track).
  In that case, it is easy to see, that when we postfix anything to the inner position $c$ (\ie we use relation $\rewto$), we get a biposition $\p'=(a,c\ct c')$ whose thread is also $\thbot$. Using the presence of an argument track $k$ (as in the positive case discussed above), we can also prove  that $\thbot$ is stable under $\rrewtt$ and cannot be the target of $\rrewr$ and $\rrewdown$. All this is captured by Lemma~\ref{l:rbot}. %(Lemma~\ref{l:form-iden-neg} is an intermediary technicality).
  }

\techrep{
  \begin{lemma}
  \label{l:form-iden-neg}
  Assume $(a,1^i\ct k)\iden \bbot$ with $k\geqslant 2$. Then, for all $c\in \bbN^*,\,  (a,1^i\ct k\ct c)\iden \bbot$.\\
  Moreover, if $(a,1^i\ct k\ct c)\iden \bbot$ with $k\geqslant 2$, then $(a,1^i\ct k) \iden \bbot$, and we can apply the lemma.
\end{lemma}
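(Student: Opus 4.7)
The plan is to invoke Lemma~\ref{l:form-iden}, which characterizes $\p\iden\bbot$ by the shape of $\Asc{\p}$, and to show that appending a suffix $c$ to the inner position commutes with $\Asc{\cdot}$, provided the inner position already contains an argument track $k\geqslant 2$.

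The key preliminary step is a commutation claim, proved by induction on $j\geqslant 0$: $\asc^j(a,1^i\ct k)$ is defined iff $\asc^j(a,1^i\ct k\ct c)$ is defined, and when both exist they share the same outer position $a^j$, while the inner position of the second is obtained from that of the first by postfixing $c$. The point is that $\asc$ only manipulates the leading symbol of the inner position: through an $\arob$-node it prepends a $1$, through an $\lx$-node it removes a leading $1$, and it is undefined when the inner position does not start with $1$ at a $\lx$-node or when the outer position labels a variable. Since $k\geqslant 2$ cannot be removed, the tail $k\ct c$ (resp.\ $k$) is preserved throughout the chain and the two termination conditions coincide at every step.

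For the forward direction, assume $(a,1^i\ct k)\iden \bbot$. By Lemma~\ref{l:form-iden}, $\Asc{a,1^i\ct k}=(a_0,k'\ct c_0)$ with $t(a_0)=\lx$ and $k'\notin \Trl{a_0}$. The commutation step then yields $\Asc{a,1^i\ct k\ct c}=(a_0,k'\ct c_0\ct c)$; termination of the extended chain is automatic since $t(a_0)=\lx$ and the inner position still starts with $k'\geqslant 2$. Reapplying Lemma~\ref{l:form-iden} gives $(a,1^i\ct k\ct c)\iden \bbot$.

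For the converse, suppose $(a,1^i\ct k\ct c)\iden \bbot$. Lemma~\ref{l:form-iden} supplies $\Asc{a,1^i\ct k\ct c}=(a_0,k''\ct c'')$ with $t(a_0)=\lx$, $k''\geqslant 2$ and $k''\notin \Trl{a_0}$. The commutation gives $\Asc{a,1^i\ct k}=(a_0,c''')$ with $c'''\ct c=k''\ct c''$. Since the $\asc$-chain from $(a,1^i\ct k)$ only adjusts the leading $1$'s and never touches $k$, the inner position $c'''$ is of the form $1^j\ct k$ for some $j\leqslant i$; but $c'''\ct c$ starts with $k''\geqslant 2$, so $j=0$, hence $c'''=k$, $k''=k$ and $c''=c$. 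Thus $\Asc{a,1^i\ct k}=(a_0,k)$ with $t(a_0)=\lx$ and $k\notin \Trl{a_0}$, and Lemma~\ref{l:form-iden} concludes that $(a,1^i\ct k)\iden \bbot$. The main obstacle is the technical bookkeeping behind the commutation claim: verifying that the $\asc$-chains on the short and extended forms proceed in lockstep and stabilize at the same step with compatible inner positions. Once that is settled, both directions fall out routinely from Lemma~\ref{l:form-iden}.
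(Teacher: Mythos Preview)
Your proof is correct and follows essentially the same route as the paper: establish that iterating $\asc$ commutes with appending a suffix past the argument track $k$, then use the characterization of $\thbot$ from Lemma~\ref{l:form-iden} in both directions. The paper streamlines the converse slightly by invoking Lemma~\ref{l:form-idena} up front to write $\Asc{a,1^i\ct k}=(a_0,1^{i_0}\ct k)$ directly, which spares the detour through your $c'''$ and the identification $k''=k$; but your reconstruction of that shape is sound. One harmless slip: you write that $c'''=1^j\ct k$ with $j\leqslant i$, yet as you yourself noted, $\asc$ prepends a $1$ at $\arob$-nodes, so $j$ may exceed $i$; the bound is never used, so the argument is unaffected.
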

}

 \techrep{
   \begin{proof}
Let $(a_0,1^{i_0} \cdot k)=\Asc{a,1^i \cdot k}$ and for all $j$ such that it is defined,  $(a^i,1^{\tti(j)}\cdot k)=\asc^j(a,1^i\cdot k)$.

By induction on $j$ and Lemma~\ref{l:form-idena}, for all $c\in \bbN^*$, $\asc^j(a,1^i\cdot k\cdot c)$ is defined iff $\asc^j(a,1^i\cdot k \cdot c)$ is, and in that case, $\asc^j (a,1^i\cdot k\cdot c)=(a^j,1^{\tti(j)} \cdot k\cdot c)$. 

Thus, $\Asc{a,1^i\cdot k\cdot c}=(a_0,1^{i_0}\cdot k\cdot c)$.

Assume moreover that $(a,1^i\cdot k)\iden \bbot$. Then $\Asc{a,1^i\cdot k}\rewp \bbot$. This implies that $t(a_0)=\lx,~ i_0=0$ and $k\notin \Trl{a_0}$. Thus, $\Asc{a,1^i\cdot k\cdot c}=(a_0,k\cdot c)\rewp \bbot$. So $(a,1^i\cdot k\cdot c)\iden \bbot$.

Now, assume instead that $(a,1^i\cdot k \cdot c) \iden \bbot$. Then  $\Asc{a,1^i\cdot k\cdot c}\rewp \bbot$. This implies that $t(a_0)=\lx,~ i_0=0$ and $k\notin \Trl{a_0}$. By the above induction, $\Asc{a,1^i\cdot k}=(a_0,k)\rewp \bbot$. So $(a,1^i\cdot k)\iden \bbot$.
\end{proof}

\noindent Lemma~\ref{l:form-iden-neg} is useful to prove Lemma~\ref{l:rbot}:}

\begin{lemma}[The Thread of Emptiness in Action]
\label{l:rbot} \mbox{}
  \begin{itemize}
  \item If $\thr{\p}=\thbot$, then $\Pol{\p}=\ominus$.  
  \item If $\theta \rrewto \thbot$ or $\theta \rewtt \thbot$, then $\theta=\thbot$.
  \item We cannot have $\theta \rrewr \thbot$ or $\theta \rrewdown \thbot$.
  \end{itemize}  
\end{lemma}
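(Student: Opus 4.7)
\begin{proofsketch}
The plan is to exploit a single structural characterization of the bipositions in the class of $\bbot$ and then use it to dispatch each of the three items separately. Since $\bbot$ can only be entered through the ``second case'' of $\rewp$, where the source has form $(a_0,k \cdot c_0)$ with $t(a_0)=\lx$ and $k \notin \Trl{a_0}$ (so in particular $k\geqslant 2$), and since $\rewa$ may only prepend a $1$ to the inner position (or remove a leading $1$ from it), every biposition $\p=(a,c)$ with $\p \iden \bbot$ must have inner position of the shape $c = 1^i \cdot k \cdot c'$ with $i\geqslant 0$, $k\geqslant 2$, $c'\in \bbN^*$, and moreover its top ascendant sits on a $\lx$-node. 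This ``canonical form'' is precisely the content of Lemma~\ref{l:form-iden-neg} and does essentially all the work.

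For item (1), the preceding remark directly gives $\Asc{\p}=(a_0,k\cdot c_0')$ with $t(a_0)=\lx$, and by Definition~\ref{d:syntactic-polarity} this means $\Pol{\p}=\ominus$. For the case $\theta \rrewto \thbot$ in item (2), choose a witness $\p_1 \rewto \p_2$ with $\thr{\p_2}=\thbot$; writing $\p_1=(a,c\cdot \ell)$, $\p_2=(a,c)$, the canonical form gives $c=1^i\cdot k\cdot c'$ and hence $c\cdot \ell$ still fits the shape $1^i \cdot k \cdot (c'\cdot \ell)$, so Lemma~\ref{l:form-iden-neg} yields $\p_1\iden \bbot$, i.e.\ $\theta=\thbot$. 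For $\theta \rrewtt \thbot$, with $\p_1=(a,c\cdot k)\rewtt (a,c\cdot 1)=\p_2$ and $k\geqslant 2$, a short parity check on the last letter of $c\cdot 1$ is required: the canonical form $c\cdot 1 = 1^i\cdot k_0\cdot c_0'$ with $k_0\geqslant 2$ forces $c_0'$ to end in a $1$ (else the last letter of $c\cdot 1$ would be $k_0\geqslant 2$), so $c = 1^i\cdot k_0\cdot c_0''$ and again $c\cdot k$ fits the canonical form, giving $\p_1 \iden \bbot$ and $\theta=\thbot$.

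For item (3), the targets of $\rewr$ and $\rewdown$ are forced to be bipositions with inner position equal to $1$ and $\epsi$ respectively (in the $\rewdown$ case, $\bbot$ itself is excluded because it cannot target itself through this relation in a strict step). Neither inner position fits the canonical form $1^i\cdot k\cdot c'$ with $k\geqslant 2$: $\epsi$ has no letters at all, and the singleton word $1$ either forces $i=0$ with $k=1$, or $i=1$ with $k\cdot c'=\epsi$, both impossible. Hence no such $\theta$ exists. The only delicate point in the whole argument, and the place that needs the parity analysis explicitly, is the $\rrewtt$ subcase: $\rewtt$ swaps an argument track for the track $1$, which is exactly the boundary between the two coordinates that appear in the canonical form, so one has to be careful that stability is preserved after undoing the swap.
\end{proofsketch}
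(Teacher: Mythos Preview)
Your proposal is correct and follows essentially the same route as the paper: establish that every biposition in $\thbot$ has inner position of the shape $1^i\cdot k\cdot c'$ with $k\geqslant 2$ and top ascendant on a $\lx$-node (the paper obtains this from Lemmas~\ref{l:form-idena} and~\ref{l:form-iden} rather than from Lemma~\ref{l:form-iden-neg} alone, though the latter is what lets you move within the canonical form), then dispatch each item from this shape. The only cosmetic difference is that for the $\rrewto$ case the paper appeals to Lemma~\ref{l:rew-rewto} (compatibility of $\iden$ with postfixing) instead of directly invoking Lemma~\ref{l:form-iden-neg}; your route is just as valid.
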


  \techrep{
\begin{proof}\mbox{}
\begin{itemize}
\item  The implication $\thr{\p}=\thbot \Rightarrow \Pol{\p}=\omin$ comes from Lemma\;\ref{l:form-iden}.
\item  Assume $\theta\rrewto \thbot$ \eg say $\theta:(a,c\cdot \ell)\rewto (a,c):\theta$. By Lemma\;\ref{l:form-iden}, we have $(a,c)\iden (a_0,k\cdot c_0)$ with $t(a_0)=\lx,\, k\notin \Trl{a_0},\, k\geqslant 2$. By Lemma\;\ref{l:rew-rewto}, $(a,c\cdot \ell)\iden (a_0,k\cdot c_0\cdot \ell)$. But $(a_0,k\cdot c_0\cdot \ell)\rewp \bbot$. So $\thr{a,c\cdot \ell}=\bbot$ \ie $\theta=\thbot$.
\item %% demo pas jolie
  Assume $\theta \rrewtt \thbot$ \eg say $\theta:(a,c\cdot \ell)\rewtt (a,c\cdot 1):\thbot$.
  
  By Lemma~\ref{l:form-iden}, we have $(a,c\cdot 1)\idena (a_0,k\cdot c_0)$ with $t(a_0)=\lx,\, k\notin \Trl{a_0},\, k\geqslant 2$.

  By Lemma\;\ref{l:form-idena}, $c\cdot 1=1^i\cdot k\cdot c_0'\cdot 1$ for some $i,c_0'$.

  By Lemma\;\ref{l:form-iden-neg}, $(a,1^i\cdot k)\iden \bbot$ and then $(a,c\cdot \ell)=(a,1^i\cdot k\cdot c_0' \cdot \ell)\iden \bbot$.
  Thus, $\theta=\thbot$.
\item If $\p:\thbot$, by Lemmas\;\ref{l:form-idena} and \ref{l:form-iden}, $\p=(a,1^i\cdot k\cdot c)$. Thus, $\p \rewr \p'=(a',1)$ or $\p \rewdown \p'=(a',\epsi)$ is impossible.
\end{itemize}
\end{proof}
}

  When the considered nihilating chain is not normal, the arguments involving the presence of argument tracks fail and it is not difficult to find counter-examples to the commutation of $\rrew$ and $\rrewtt$ or to Lemma~\ref{l:rrew-rrewr-rrewdown}\techrep{.}\paper{: the behaviour of normal chains is far better understandable.\\}

  \ignore{
for instance, if $(\epsi,\epsi):\thepsi:(2,\c$
\\}

\noindent \textbf{Goal 4 (almost) at Hand}
Using Lemmas~\ref{l:rrew-rrewto},\,\ref{l:rrew-rrewr-rrewdown},\,\ref{l:rrew-rrewtt},\,\ref{l:rbot}, it is not difficult to define an algorithm taking a normal nihilating chain as input and outputting a nihilating chain of the form $\thepsi=\theta_0 \loplus \rrew \theta_1\loplus \rrew ... \loplus\rrew \theta_{\ell}=\thbot$.\\
Then, one proves that there is not $\theta$ such that $\thepsi \loplus \rrew \theta$. This implies that there is no chain of the form $\thepsi=\theta_0 \loplus \rrew \theta_1\loplus \rrew ... \loplus\rrew \theta_{\ell}=\thbot$, and then, that there is no normal nihilating chain:

\begin{proposition}
\label{prop:no-normal-chain}
There is no \textit{normal} nihilating chain.
\end{proposition}

Proposition\;\ref{prop:no-normal-chain} \textit{almost} proves that every term  is $\code{\cdot}$-typable (by Proposition\;\ref{prop:if-no-nihil-ch-then-comp-unsound}). Almost, because only the non-existence of normal nihilating chains is ensured for now (and not that of nihilating chains in general). The only point that will remain to be verified is that normal nihilating chains can be considered without loss of generality (which is the object of \S\;\ref{s:normalizing-chains}).

\techrep{
\subsection{Complete Unsoundness (almost) at Hand}
\label{s:comp-unsound-almost-at-hand}
%% pointer 

Now, using the Interaction Lemmas~\ref{l:rrew-rrewto},\,\ref{l:rrew-rrewr-rrewdown},\,\ref{l:rrew-rrewtt},\,\ref{l:rbot}, we may build  (by the Claim below), from any \textit{normal} chain, a nihilating chain of the form $\thepsi=\theta_0 \loplus \rrew \theta_1\loplus \rrew ... \loplus\rrew \theta_{\ell}=\thbot$. Then we prove that a chain of this latter form cannot exist. More concretely, this \textit{almost} proves that system $\ttS$ is \pierre{completely unsound (by Proposition\;\ref{prop:if-no-nihil-ch-then-comp-unsound}).} Almost, because only the non-existence of normal nihilating chains will be proved in \S\;\ref{s:comp-unsound-almost-at-hand}. The only point that will remain to be verified is that normal nihilating chains can be considered without loss of generality (which is the object of \S\;\ref{s:normalizing-chains}).

The Interaction Lemmas can be summarized as follows: \label{summary-interaction-lemmas}
\begin{itemize}
\item $\theta_1 \rrewto \theta_2 \rrew \theta_3$ can be replaced by $\theta_1 \rrewb \theta'_2 \rrewto \theta_3$  (Lemma\;\ref{l:rrew-rrewto}).
\item $\theta_1  \rrewto  \theta'_2 \loplus \rrew \theta_3$  can be replaced by
  .$\theta_1 \loplus \rrew  \theta_2 \rrew \theta_3$ (Lemma\;\ref{l:rrew-rrewtt}).
\item $\theta_1 \rrewdown \theta_2 \loplus \rrew \theta_3$ and $\theta_1  \rrewr \theta_2 \loplus \rrew \theta_3$  is impossible (Lemma\;\ref{l:rrew-rrewr-rrewdown}).
\item $\theta \rrew_{\mathtt{t}1/2} \thbot$ imply $\theta=\thbot$ and $\theta {\rrew}_{\mathtt{abs}/\mathtt{down}} \thbot$ is impossible (Lemma\;\ref{l:rbot}).
\end{itemize}

Moreover, the Consumption Lemma (Lemma\;\ref{l:uniqueness-consumption}) can also be seen as an interaction lemma: it entails that $\theta_1\werr \roplus \theta_2 \loplus \rrew \theta_3$ implies $\theta_1=\theta_3$ \ie nothing happens when $\werr \roplus$ is followed by $\rrew$. These observations are enough to prove that normal chains do not exist.

In particular, $\thepsi:(a\ct 1,k\ct c)\rew (a\ct k,c):\theta_1$ for some $(a,c)\in \bbB,k\in \bbN$. We prove now that $\thepsi \rew \ldots$ is impossible: by Lemma~\ref{l:form-idena}, $\asc^i(\epsi,\epsi)=(a_0,1^{i_0})$ for some $a_0,i_0$.
\begin{itemize}
\item Assume $\Pol{\epsi,\epsi}=\oplus$: since $\Pol{a\ct 1,k\ct c}=\oplus$, by Lemma~\ref{l:form-idena}, $(\epsi,\epsi)\iden (a\ct 1,k\ct c)$ is impossible. 
\item Assume $\Pol{\epsi,\epsi}=\omin$: $\exists \p,\, (a_0,1^{i_0})\rewp \p$ is impossible by definition of $\rewp$. Thus, $(\epsi,\epsi)\iden \p$ and $\Pol{\p}=\oplus$ is impossible.
\end{itemize}
So, there are no nihilating chains containing only $\loplus \rrew$. And thus, there are no normal nihilating chains:

\begin{proposition}
\label{prop:no-normal-chain}
There is no \textit{normal} nihilating chain.
\end{proposition}

In the next section, we describe a method to build, from any given nihilating chain (if such a chain existed), a normal chain. By the above proposition, it will \textit{ad absurdum} entail that there are no nihilating chains. But before that, in order to be valid, the reasoning and the Proposition above need the following Claim to be proved.\\ 

\noindent \textbf{Claim:} From any \textit{normal} chain, we can build a nihilating chain of the form $\thepsi=\theta_0 \loplus \rrew \theta_1\loplus \rrew ... \rrew \theta_{\ell}=\thbot$.\\

\begin{proof}
Let $\calC$ be a \textit{normal} nihilating chain. Thus, $\calC$ is of the form $\theta_0 \rrewb \theta_1 \rrewb \ldots \rrewb \theta_{m}$ with $\theta_0=\thepsi$ and $\theta_m=\thbot$. We are going to apply the algorithm below to $\calC$. \\

\noindent \textit{Description of the algorithm:}  
  We first apply commutations to $\calC$. During this process, we implicitly use Lemma~\ref{l:uniqueness-consumption} every time that some relation $\theta_{i-1}  \werr \roplus \theta_i \loplus \rrew \theta_{i+1}$ pops up somewhere in the chain: since this implies $\theta_{i-1}=\theta_{i+1}$, we just remove $\theta_i$ from the chain.\\
  
The first stage of the algorithm consists in using Lemmas\;\ref{l:rrew-rrewto} and \ref{l:rrew-rrewtt} as many times as possible, so that $\theta_{i-1}\rewto \theta_i \loplus \rrew \theta_{i+1}$ or  $\theta_{i-1}\rewtt \theta_i \loplus \rrew \theta_{i+1}$ does not occur anymore in the chain.\\

Then, we are interested in the last relation $\theta_{m-1}\rrewb \theta_m=\thbot$ of the chain.

By Lemma~\ref{l:rbot}, if it is $\rrewto$ or $\rrewtt$, then $\theta_{m-1}=\thbot$ and we discard $\theta_m$ (second stage of the algorithm).

Also by Lemma~\ref{l:rbot}, the last relation can neither be $\rrewr$ nor $\rrewdown$.

Since $\thbot$ only occurs negatively (Lemma~\ref{l:rbot}) and the chain is normal, then the last relation cannot be $\werr$.

So the last relation of the chain after the two stages of the algorithm is $\theta_{\ell-1}\loplus \rrew \theta_\ell=\thbot$ (with $\ell \geqslant m$).
\\

Assume now \textit{ad absurdum} that the chain contains a relation that is not $\rrew$.

We investigate the maximal $k$ such that relation $\theta_{k-1}\rrewb \theta_k$ is \textit{not} $\rrew$.

We have just proved that $k<\ell$, so that we have $\theta_{k-1}\rrewb \theta_k \loplus \rrew \theta_{k+1}$. By Lemma\;\ref{l:rrew-rrewr-rrewdown}, this relation cannot be $\rrewr$ or $\rrewdown$.

By the first stage of the algorithm, it can neither be $\rrewto$ nor $\rrewtt$. So we must have $\theta_{k-1} \werr \roplus \theta_k \loplus \rrew \theta_{k+1}$.

So we must have $\theta_{k-1} \werr \theta_k$. Since the chain is normal, we have $\theta_{k-1}\werr \roplus \theta_k \loplus \rrew \theta_{k+1}$.

By the procedure described above, it is impossible (we could remove $\theta_k$ from $\calC$), so when the algorithm is completed, we have a chain of the form $\theta_0\rrewb \theta_1\rrewb \ldots \rrewb \theta_\ell=\thbot$.
\end{proof}
}

\section{Normalizing Nihilating Chains}
\label{s:normalizing-chains}

\paper{
  \subsection{Residuation and Subjugation}
  \label{ss:residuation-subjugation}
In this section, we prove that negative left-consumption in a nihilating chain can be avoided (without loss of generality).
  By Proposition\;\ref{prop:if-no-nihil-ch-then-comp-unsound}, this will allow us to prove that every term is typable. The fact that system $\ttS$ is relevant, non-idempotent, rigid and syntax-directed entails that if $P \tri \juCtt$, then, there is a unique derivation $P'\tri \juCtpt$ obtained from $P$ by subject reduction (thus, subject reduction is \textit{deterministic} in system $\ttS$). Moreover, intuitively, every part of $P'$ comes from a part of $P$ and so, every position and right biposition of $P'$ can be thought as the \textbf{residual} of a unique position or (right) biposition of $P'$. We do not give details (that be found in \S\;IV and Fig.\;1  in \cite{VialLICS17}), but this induces a partial injective function $\Res_b$ from the right bisupport of $P$ to that of $P'$. The function $\Res_b$ turns out to be compatible with subjugation:
}
\techrep{
In this section, we prove that negative left-consumption in a nihilating chain can be avoided (without loss of generality). By Proposition\;\ref{prop:if-no-nihil-ch-then-comp-unsound}, this will allow us to prove the theorem of complete unsoundness. For that, the notions of quasi-residual of a biposition and of residuals of threads are fundamental and we present them. The main technique to be implemented here was presented from a perspective in \S\;\ref{s:collapsing-technique-informal}.

%%%% labels de Klop
% {s:dynamics} section
% {s:one-step-sr-se} subsec sr
% {s:sr-proof} subsec residuals

\begin{figure*}
  \input{subjRedTreeFig.tex}
  \caption{Residuals and Quasi-Residuals (copy of Fig.\;\ref{fig:SR-residuals}, p.\;\pageref{fig:SR-residuals})}
  \label{fig:SR-residuals-c-unsound}
  $\phd$\\[-7ex]
\end{figure*}

Remember from \S\;\ref{s:one-step-sr-se}  that
system $\ttS$ enjoys both subject reduction and expansion, meaning typing is  invariant under (anti)reduction in system $\ttS$, and that moreover, in system $\ttS$, subject reduction is processed in a deterministic way thanks to the tracks (contrary to system\;$\scrRo$, \S\;\ref{s:non-conf-Ro}).

The determinism of reduction allowed us to define in \S\;\ref{s:sr-proof} residuation for some positions and bipositions. Residuation (\S\;\ref{s:res-lam}) is a way to describe how positions move or are destroyed during reduction, and the residuation relation is  functional for system $\ttS$ (in contrast with the $\lam$-calculus) because system $\ttS$ is linear and forbids duplication (Remark\;\ref{rmk:res-in-S-vs-res-in-lam}).

We extend now residuation to quasi-residuation for system $\ttS$, as it was done for the $\lam$-calculus in \S\;\ref{s:res-lam}. Quasi-residuation for positions does not preserve the labelling.
%  We remember from \S\;\ref{s:res-lam} that quasi-residuation is a natural extension of residuation that does not preserve labelling.
For that, we reuse Fig.\;\ref{fig:SR-residuals} which is copied into Fig.\;\ref{fig:SR-residuals-c-unsound}.
We also refer to \S\;\ref{s:sr-proof} for a description of the moves of the symbols.

\subsection{Quasi-Residuals}
\label{s:residuals-c-unsound}

% Assume $\theta_1:(a\cdot 1,k\cdot c)\lominus \rrew \theta_2$. Then $\Asc{a\cdot 1,k\cdot c)=(a_0,k\cdot c)$ with $t(a_0)=\lx$ for some $x$. We set $h=|a_0|-|a|$ and we say that $\theta_1$ is \textbf{consumed at depth $h$}. Notice that if $h=1$, then $t\rstr{a}$ is a redex and $a_0=a\cdot 1$. If $h>1$, then ...

%\pierre{ Tracker rewup. chger sens fleches etc}

In this section, we define residuation for threads and state some of its most important properties.

From \S\;\ref{s:comp-unsound-almost-at-hand}, we know that we only have to escape the case of negative left consumption, to grant that nihilating chains do not exist and thus ensuring that every term is typable (discussion beginning Sec.\;\ref{s:nihilating-chains}). To achieve this purpose, the main tool is a normalizing reduction strategy, called the \textbf{collapsing strategy}, which destroys threads that are left-consumed negatively and allows us to build a normal chain from any nihilating chain. This reduction strategy is finite, despite the fact that no form of normalization is not ensured in system $\ttS$. But to be formulated, it needs the notion of residual for threads, which can be ensured only if it is well-behaved  for bipositions.

% Indeed, this notion of residual may be first defined for positions and bipositions thanks to deterministic subject reduction, thus extending the concept for $\lam$-terms.

%deterministic subject reduction allows us to define this notion of residuals of positions and bipositions after reduction, extending the notion for position in $\lambda$-terms.

%In Fig.~\ref{fig:SR-residuals-c-unsound}, $\hearts$ represents a judgment nested in $P_r$. Thus, its position must be of the form $a\cdot 10\cdot \al_{\sheart}$. After reduction, the $\app$-rule and $\abs$-rule at positions $a$ and $a\cdot 0$ have been destroyed and the position of this judgment $\hearts$ will be $a\cdot \al_{\sheart}$. We set then $\Res_b(a\cdot 10\cdot \al_{\sheart})=a\cdot \al_{\sheart}$.

%Likewise, $\clubs$ represents a judgment nested in the argument derivation $P_7$ on track 7 w.r.t. $a$. Thus, its position must be of the form $a\cdot 7\cdot \al_{\sclub}$ where $a\cdot 7$ is the root of $P_7$. After reduction, $P_7$ will replace the $\ax$-rule typing $x$ on track 7, so its root will be at $a\cdot a_7$ (by definition of $a_7)$. Thus, after reduction, the position of judgment $\clubs$ will be $a\cdot a_7\cdot \al_{\sclub}$. We set then $\Res_b(a\cdot 7\cdot \al_{\sclub})=a\cdot a_7\cdot \al_{\sclub}$.

Residuals were defined for most positions, but not for all of them \eg $a\cdot 1$, that corresponds to the abstraction of the redex, is destroyed during reduction and does not have a residual. When $(a,c)\in \bisupp{P}$ and $a'=\Res_b(a)$ is defined, we set $\Res_b(a,c)=(a',c)$, thus defining the residuals of bipositions.

The partial function $\Res_b$ on bipositions may be quite naturally extended to a total function $\QRes_b$ (standing for \textbf{quasi-residual}) from $\bbB^t$ to $\bbB':=\bbB^{t'}$.
For instance, the abstraction of the redex, typed with a type of the form $\sSksh\rew T$, is destroyed, but the types $S_k$ are still present in the typing of $\rsx$ (they occur as types of subterm $s$).\\

}
\techrep{
  \paragraph*{Formal Definitions of Quasi-Residuals}
  As in \S\;\ref{s:sr-proof}, or what concerns (quasi-)residuals, metavariable $a$ will now denote only positions in $\bbA$ s.t. $\ovl{a}=b$. Metavariables $\al$ and $c$ range over $\bbN^*$. For instance, $\al\neq a$ means that $\ovl{a}\neq b$.
If $k\in \Trl{a}$, $a_k$ is the unique position s.t. $\pos{k}=a\cdot 10 \cdot a_k$ (see $a_2,\, a_3,\, a_3$ in Fig.\;\ref{fig:SR-residuals-c-unsound}).
%For instance, if $a\cdot 7$ is in $P$, then it is the position of a derivation $\rstr{a\cdot k}$ typing $s$. By ty
%
%A few examples: position $a$ and $a \cdot 1$ (root and abstraction of the redex)
%are destroyed and $a\cdot 10$ will become $a$ after reduction. More generally, $a\cdot 10 \cdot \al$ should become $a\cdot \al$ %(except if $t(a\cdot 10\cdot \al)=x$, in which case, it is destroyed).
%Subderivation at position $a\cdot k$ (with $k \in Trl{a})$ types $s$ and it will replace the axiom rule at position $a\cdot 10 \cdot a_k$, whose axiom track (\S~\ref{s:parsing}) is $k$. Since $\ax$-rule typing $x$ at position $a\cdot a_k$ (with $k\in \Trl{a}$) is replaced by the argument derivation, we set $\QRes_b(a\cdot 10 \cdot a_k)=\Res_b(a\cdot k)=a\cdot a_k$. We also have $(a\cdot 1,k\cdot c)\rew (a\cdot k,c)$, so we set $\QRes_b(a\cdot 1,k\cdot c)=\Res_b(a\cdot k,c)=a\cdot a_k$. Since $(a,c)\rew (a\cdot 1, 1c) \rew (a\cdot 10)\rew (a\cdot 10,c)$ we set $\QRes_b(a,c)=\QRes_b(a\cdot 1,1\cdot c)=\Res_b(a\cdot 10,c)=(a,c)$.
Let us have a look at the positions in $t$: positions $a$ and $a\cdot 1$ point to the root and the abstraction of the redex, position $a\cdot 10$ points to the root of $r$, position $a\cdot k$ (with $k\geqslant 2$) points to the root of $s$, position $a\cdot 10 \cdot a_k$ (with $k\in \Trl{a}$) points to an occurrence of $x$. 
Then, thoses positions $a,\, a\cdot 1$ and $a\cdot 10\cdot a_k$, that respectively point to the root, the abstraction or the variable of the redex are considered to be \textbf{destroyed} when the redex is fired. The associated bipositions will not have a proper residual, but we define below their \textit{quasi-residuals}. This is an adaptation of the notion of quasi-reduation for the positions of $\lam$-terms (\S\;\ref{s:res-lam}).

%%% blah haut niveau residual vs quasi residual
%%% blah on considère une $\code{\cdot}$-derivation $P$ typant $t$
%%% QRes_b(a\cdot 10) pas forcement propre

First, let us remember from \S\;\ref{s:sr-proof} how the proper residual for positions and bipositions are defined:
\begin{itemize}  
\item \textit{Out of the redex: } If $\al \ngeqslant a$, then $\al$ is not in the redex. We set $\Res_b(\al)=\al$.  
\item \textit{Inside $r$:}
  Position $a\cdot 10 \cdot \al\in \bbB$ (paradigm $\heartsuit$) has a residual (except when $\al=a_k$ for some $k$) and should become $a\cdot \al $ after reduction: we set $\Res_b(a\cdot 10\cdot \al)=a\cdot \al$ for $\al \neq a_k$.
\item \textit{Inside some argument derivations:} Assume $k\in \Trl{a}$.
  Argument derivation at $a\cdot k$ will replace $\ax$-rule typing at position $a\cdot 10\cdot a_k$ (which is destroyed).  So its position after reduction will be $a\cdot a_k$. More generally, the $a\cdot k\cdot \al\in \bbB$ (paradigm $\clubsuit$) will be found at $a \cdot a_k\cdot \al$ after reduction. We set then $\Res_b(a\cdot k\cdot \al)=a\cdot a_k\cdot \al$ when $k\in \Trl{a}$.
\item  \textit{Some bipositions:} We set $\Res_b(\al,c)=(\Res_b(\al),c)$ when $\Res_b(\al)$ is defined. 
\end{itemize}

\noindent Now, we define the quasi-residuals for all bipositions of $\bbB$ and some positions of $\bbA$.
\begin{itemize}
\item \textit{Extension of $\Res_b$:} $\QRes_b(a)=\Res_b(a)$ and $\QRes_b(a,c)=(\Res_b(a),c)$ whenever $\Res_b(a)$ is defined.
\item \textit{Root of the Redex:} We have $\ttT(a)=\ttT(a\cdot 10)$ and, for all $c\in \bbN^*$, $(a,c)\idena (a\cdot 10,c)$, so we set $\QRes_b(a)=\QRes_b(a\cdot 10)=a$ and  $\QRes_b(a,c)=\QRes_b(a\cdot 10,c)=(a,c)$.
\item \textit{Variable of the redex:} Let $k\in \Trl{a}$. We have $\ttT(a\cdot 10 \cdot a_k)=\ttT(a\cdot k)$ (\eg $\ttT(a\cdot 10\cdot a_2)=\ttT(a\cdot 2)=S_2$ in Fig.\;\ref{fig:SR-residuals-c-unsound}). So we set $\QRes_b(a\cdot 10 \cdot a_k)=\Res_b(a\cdot k)=a\cdot a_k$
\item \textit{Abstraction of the redex:}
  \begin{itemize}  
  \item %\phdadd{*Target/Source Arrow* Type}
    \textit{Target of the Arrow Type:} We have $\ttT(a\cdot 1)\rstr{1}=\ttT(a\cdot 10)$ and actually, $(a\cdot 1,c)\rewa (a\cdot 10,c)$ for all $c\in \bbN^*$, so we set $\QRes_b(a\cdot 1,c)=\QRes_b(a\cdot 10,c)$.
  \item \textit{Source of the Arrow type (1):} If $k\in \Trl{a}$, then $\ttT(a\cdot 1)\rstr{k}=\ttT(a\cdot k)$ (\eg $\ttT(a\cdot 1)\rstr{7}=\ttT(a\cdot 7)=S_7$ in Fig.\;\ref{fig:SR-residuals-c-unsound}) and actually, $(a\cdot 1,k\cdot c)\rewb{a} (a\cdot k,c)$ for all $c\in \bbN^*$. So we set $\QRes_b(a\cdot 1,k\cdot c)=\Res_b(a\cdot k,c)=(a\cdot a_k,c)$.
  \item \textit{Source of the Arrow type (2):} If $k\geqslant 2$ and $k\notin \Trl{a}$, then  $(a\cdot 1,k\cdot c)\rewp \bbot$ and there is no $\ax$-rule $\geqslant a$ typing $x$ using track $k$.
    So we say that $(a\cdot 1,k\cdot c)$ is \textbf{nihilated} after reduction. We set $\QRes_b(a\cdot 1,k\cdot c)=\bbot$.
    \item \textit{Root of the type:} To ensure Lemma~\ref{l:ref-red}, it is convenient to set $\QRes_b(a\cdot 1,\epsi)=\QRes_b(a\cdot 1,1)=(a,\epsi)$.
  \end{itemize}
\item \textit{Nihilated argument derivations:}
  Assume $k\notin \Trl{a},\,k\geqslant 2$, then there is no $\ax$-rule typing $x$ using axiom track $k$. So argument $P_k$ is not moved inside $r$ but nihilated after reduction. We then set $\QRes_b(a\cdot k\cdot \al,c)=\bbot$ and also say that $(a\cdot k\cdot \al,c)$ has been \textbf{nihilated} after reduction.
\end{itemize}

\begin{remark}
  \mbox{} \label{rmk:qres-not-preserve-labels}
\begin{itemize}
\item  We refer the quasi-residual of $a\cdot 10$ as $\QRes_b(a\cdot 10)$ and not as $\Res_b(a\cdot 10)$ because, if $t(a\cdot 10)=x$, then $a\cdot 10$ has no proper residual.
\item Thus, $\QRes_b$ is a total function on bipositions and $\Res_b$ is a \textit{partial injective} function. Moreover, $t'(\QRes_b(\al))=t(\al)$ is not true in general \ie quasi-residuation does not preserve labelling (contrary to $\Res_b$, Lemma\;\ref{l:res-S-preserves-label}), as in the case of the pure $\lam$-calculus (\S\;\ref{s:res-lam}).
  \end{itemize}
\end{remark}
} \ignore{%%%%%%%%%%%%% version "synthétique"
\begin{itemize}
\item We set $\QRes_b(\al)=\Res_b(\al)$ when $\Res_b(\al)$ is defined and $\QRes_b(\al,c)=(\QRes_b(\al),c)$ when $\QRes_b(\al)$ is defined and we set $\QRes_b(\bbot)=\bbot$.
\item \textit{Out of the Redex: } If $\al \ngeqslant a$, then $\al$ is not in the redex. We set $\Res_b(\al)=\al$.  
\item \textit{ Root, Abstraction of the Redex and insde $r$:  }
  $a$ and $a\cdot 1$ are destroyed by the reduction. So $\Res_b(a)$ and $\Res_b(a\cdot 1)$ are not defined and $a\cdot 10 \cdot \al$ (paradigm $\heartsuit$) should become $a\cdot \al $ after reduction (except when $t(a\cdot 10\cdot \al)=x$): we set $\Res_b(a\cdot 10\cdot \al)=a\cdot \al$ for $\al \neq a_k$.
\item Assume $k\in \Trl{a}$.
  By the typing constraints and the definition of $a_k$, notice that $\ttT(a\cdot k)=\ttT(a\cdot 10 \cdot a_k)$. So argument derivation at $a\cdot k$ will replace $\ax$-rule typing at position $a\cdot 10\cdot a_k$ (which is destroyed). So its position after reduction will be $a\cdot a_k$. We set then $\Res_b(a\cdot k\cdot \al)=a\cdot a_k\cdot \al$ and $\QRes_b(a\cdot 10\cdot a_k)=\Res_b(a\cdot k)=a\cdot a_k$.
\item $(a,c)$ and $(a\cdot 1,1\cdot c)$ are destroyed but $(a,c)\rewa (a\cdot 1,1\cdotc)\rewa (a\cdot 10,c)$. We set $\QRes_b(a,c)=\QRes_b(a\cdot 1,1\cdot c)=\QRes_b(a\cdot 10,c)=(a,c)$.
\item \textit{Root of the Type of the Abstraction of the Redex:}
  To grant Lemma~\ref{l:ref-red}, we set $\QRes_b(a\cdot 1,\epsi)=\QRes_b(a\cdot 1,1)=(a,\epsi)$.
\item Assume $k\notin \Trl{a},\,k\geqslant 2$. Then $(a\cdot 1,k\cdot c)\rewp \bbot,\, (a\cdot 1,k\cdot c) \rew (a\cdot k,c)$ and there is no $\ax$-rule $\geqslant a$ typing $x$ using track $k$. So we say that $(a\cdot 1,k\cdot c)$ and $(a\cdot k\cdot \al,c)$ are \textbf{nihilated} after reduction. We set $\QRes_b(a\cdot 1,k\cdot c)=\QRes_b(a\cdot k\cdot \al,c)=\bbot$.
\end{itemize}
Thus, $\QRes_b$ is a total function on bipositions and $\Res_b$ is a \textit{partial injective} function. Moreover, $t'(\QRes_b(\al))=t(\al)$ is not true in general \ie quasi-residuation does not preserve labelling (contrary to $\Res_b$, Lemma\;\ref{l:res-S-preserves-label}).} 
\techrep{
We also consider the relations $\rewa,\, \rewp,\, \rewbullet,\ldots$ defined w.r.t. $\bbB'$ (we do not distinguish them graphically from those of $\bbB$) and $\code{\cdot}'$, an injective function from the set of leaves of $\bbA^{t'}$ to $\bbN\setminus \set{0,1}$ naturally defined from $\code{\cdot}$: we set $\code{\al'}'=k$ if there is $\al\in \bbA$ s.t. $t(\al)=y\neq x$ and $\Res_b(\al)=\al'$. We check that $\al'\mapsto \code{\al'}'$ is still an injective function. We set $\posp{k}=\al'$ if there is $\al'$ s.t. $\codep{\al'}=k$. Thus, $\posp{k}=\Res_b(\pos{k})$.  Notice that $\Res_b(\al)\in \bbA'$ and $\QRes_b(\p)\in \bbB'$. We also define $\asc',\ \ttThr'$ and $\thrp{\cdot}$.

Relations $\rewa$ and $\rewp$ and thus $\iden$ are compatible with reduction.
}\techrep{
 By case analysis (still guided by Fig.\;\ref{fig:SR-residuals-c-unsound}):
\begin{itemize}
\item Assume $\p_1\rewa \p_2$ or $\p_1\rewp \p_2$ and $\p_1$ is nihilated. Then $\QRes_b(\p_1)=\QRes_b(\p_2)=\bbot$. We assume below that $\p_1$ is not nihilated.
\item Assume $\p_1=(\al,c) \rewa \p_2$.
  If $\al \neq a,a\cdot 1$, then $\QRes_b(\p_1)\rewa \QRes_b(\p_2)$. If $\al=a,a\cdot 1$, then $\QRes_b(\p_1)=\QRes_b(\p_2)$.
\item If $\p_1=(\al,k \cdot c) \rewp \p_2$ with $\p_2\neq \bbot$. If $\al \neq a\cdot 1$, then $\Res_b(\p_1)\rewp \Res_b(\p_2)$. If $\al =a\cdot 1$, then $\QRes_b(\p_1)=\QRes_b(\p_2)$.
\item Assume $\p=(\al,k\cdot c)\rewp \bbot$. If $t(\al)=\lx$, then $\QRes_b(\p)=\bbot$. If $t(\al)=\ly\neq \lx$, then $\QRes_b(\p)\rewp\bbot$.
\end{itemize}
This entails, by induction on $\iden$:}

\begin{lemma}
\label{l:iden-res}
  If $\p_1\iden \p_2$, then $\QRes_b(\p_1)\iden \QRes_b(\p_2)$.
\end{lemma}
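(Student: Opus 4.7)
\begin{proofsketch}
The plan is to reduce to the generating relations of $\iden$ and then do a case analysis on the position of the redex. Since $\iden$ is the reflexive, transitive, symmetric closure of $\rewa \cup \rewp$, it suffices to establish the implication when $\p_1 \rewa \p_2$ or $\p_1 \rewp \p_2$: reflexivity is trivial, symmetry of the conclusion follows from symmetry of $\iden$, and transitivity from transitivity of $\iden$. So the real work is to show that each single step of ascendance or polar inversion at $\p_1$ produces two quasi-residuals that are $\iden$-comparable (possibly via a trivial equality, or via $\bbot = \bbot$).

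First, if $\p_1$ is nihilated (i.e.\ $\QRes_b(\p_1) = \bbot$), I claim $\p_2$ is also nihilated. Indeed, nihilation of $\p_1$ means that $\p_1$ lives in the source $\sSk$ of the redex's arrow type at $a\cdot 1$ or in a discarded argument derivation $a\cdot k \cdot \alpha$ with $k \notin \Trl{a}$. A single step of $\rewa$ or $\rewp$ keeps us inside the same ``slot'' of that absent argument, so $\p_2$ is again nihilated and $\QRes_b(\p_1) = \QRes_b(\p_2) = \bbot$.

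Next, assume $\p_1 = (\alpha,c) \rewa \p_2$ and $\p_1$ is not nihilated. If $\alpha \notin \{a, a\cdot 1\}$, then neither $\alpha$ nor the associated partner position is among the three ``collapsed'' positions $\{a, a\cdot 1, a\cdot 10\cdot a_k\}$, so $\Res_b$ is defined on both sides and commutes with the definition of $\rewa$ at the level of $t'$: we obtain $\QRes_b(\p_1) \rewa \QRes_b(\p_2)$. If $\alpha \in \{a, a\cdot 1\}$, then $\p_1$ and $\p_2$ both sit in the cluster $\{(a,c), (a\cdot 1, 1\cdot c), (a\cdot 10, c)\}$, which is mapped by $\QRes_b$ to the single biposition $(a,c)$; hence $\QRes_b(\p_1) = \QRes_b(\p_2)$, and the conclusion holds trivially. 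The case $\p_1 \rewp \p_2$ (with $\p_2 \neq \bbot$) is analogous: either $\alpha \neq a \cdot 1$ and $\Res_b$ transports $\rewp$ to a $\rewp$ step in $t'$ (using that $\codep{\Res_b(\pos{k})} = k$ for variables distinct from $x$), or $\alpha = a\cdot 1$ and both sides collapse to $(a\cdot a_k, c)$ via the definition of $\QRes_b$ on the source of the redex's arrow type. Finally, if $\p_1 \rewp \bbot$ at a genuine $\lambda y$-node with $y \neq x$, then the polar inversion survives reduction so $\QRes_b(\p_1) \rewp \bbot = \QRes_b(\bbot)$.

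The only subtlety, and the point I expect to require the most care, is bookkeeping at the special assignment $\QRes_b(a \cdot 1, \epsi) = \QRes_b(a \cdot 1, 1) = (a, \epsi)$: one must check that the $\rewa$-step $(a,\epsi) \rewa (a\cdot 1, 1\cdot \epsi) = (a\cdot 1, 1)$ is consistent with this collapsing (it is, since both sides map to $(a, \epsi)$) and similarly that no $\rewp$-step starting at $(a\cdot 1, \epsi)$ arises. Once these boundary cases are dispatched, an induction on the length of an $\iden$-chain lifts the single-step statement to the full lemma.
\end{proofsketch}
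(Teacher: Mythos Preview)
Your proof is correct and follows essentially the same route as the paper: reduce to the generators $\rewa$ and $\rewp$, then perform a case analysis on whether $\p_1$ is nihilated and on the location of its outer position relative to the redex positions $a$, $a\cdot 1$, $a\cdot 10\cdot a_k$. The paper's case split is slightly coarser (it does not single out the boundary bipositions $(a\cdot 1,\epsi)$ and $(a\cdot 1,1)$, which as you correctly note never arise as endpoints of $\rewa$ or $\rewp$), but the structure is the same.
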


\noindent 
This Lemma allows us to define (quasi-)residuals for \textit{threads}. We set $\QRes_b(\theta)=\thrp{\QRes_b(\p)}$ for any $\p:\theta$\paper{ (where $\thrp{\cdot}$ denotes threads in $\bbB'$)}. By case analysis, we have:
  \techrep{
\begin{itemize} 
\item If $\p_1\rewbullet \p_2$ and $\p_{2}$ is nihilated, then $\QRes_b(\p_1)=\QRes_b(\p_2)=\bbot$. We assume below that $\p_1$ is not nihilated.
\item Assume $\p_1=(\al\cdot 1,k\cdot c) \rew (\al\cdot k,c )=\p_2$. If $\al \neq a$, then $\QRes_b(\p_1)\rew \QRes_b(p_2)$ and if $\al =a$, then $\QRes_b(\p_1)=\QRes_b(\p_2)$.
\item Assume $\p_1=(\al\cdot k, c) \wer (\al \cdot 1,k\cdot c )=\p_2$. If $\al \neq a$, then $\QRes_b(\p_1)\wer \QRes_b(p_2)$ and if $\al =a$, then $\QRes_b(\p_1)=\QRes_b(\p_2)$.
\item Assume $p_1\rewto \p_2=(\al,c)$. If $\p_2\neq (a\cdot 1,\epsi)$, $\QRes_b(\p_1)\rewto \QRes_b(\p_2)$. If $\p_2=(a\cdot 1,\epsi)$, $\QRes_b(\p_1)=\QRes_b(\p_2)=(a,\epsi)$.
\item Assume $\p_1\rewtt \p_2=(\al,c\cdot 1)$. If $\p_2\neq (a\cdot 1,1)$, then $\QRes_b(\p_1)\rewtt \QRes_b(\p_2)$. If $\p_2=(a\cdot 1,1)$, then $\QRes_b(\p_1)\rewdown \QRes_b(\p_2)=(a,\epsi)$ % ($\QRes_b(\p_2)$ may be $(a\cdot a_k,c)$ or $\bbot$)}.
\item Assume $\p_1=(\al,\epsi) \rewr (\al,1)=\p_2$. If $\al \neq a\cdot 1$, then $\QRes_b(\p_1)\rewr \QRes_b(\p_2)$. If $\al=a\cdot 1$, then $\QRes_b(\p_1)=\QRes_b(\p_2)=(a,\epsi)$.
  \item Assume $\p_1\rewdown \p_2$. Then $\QRes_b(\p_1)\rewdown \QRes_b(\p_2)$.
\end{itemize}
This yields:}
\begin{lemma}
  \label{l:ref-red} \mbox{}
Let $\theta_1,\theta_2\in \ttThr$. We set $\theta'_i=\QRes_b(\theta_i)$.
  \begin{itemize}
  \item If $\theta_1\rrew \theta_2$, then $\theta_1'\rrew \theta_2$ or $\theta'_1=\theta'_2$.
  \item If $\theta_1\rrewto \theta_2$, then $\theta_1'\rrewto \theta'_2$ or $\theta'_1=\theta'_2$.
  \item If $\theta_1\rrewtt \theta_2$, then $\theta_1'\rrewtt \theta'_2,~ \theta'_1\rrewdown\theta'_2$ or $\theta'_1=\theta'_2$.
  \item If $\theta_1\rrewr \theta_2$, then $\theta_1'\rrewr \theta'_2$ or $\theta'_1=\theta'_2$.
  \item If $\theta_1\rrewdown \theta_2$, then $\theta_1'\rrewdown \theta'_2$ or $\theta'_1=\theta'_2$.
\end{itemize}
  \end{lemma}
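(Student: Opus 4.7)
\begin{proofsketch}
The plan is to prove all five clauses simultaneously by exhaustive case analysis on the underlying biposition-level relation and on the location of the involved bipositions with respect to the fired redex. Fix a redex at position $a$ with $\tra = (\lx.r)s$. Given $\theta_1 \rrewb \theta_2$, pick representatives $\p_1, \p_2$ with $\p_1 \rewbullet \p_2$ and $\thr{\p_i} = \theta_i$; Lemma~\ref{l:iden-res} guarantees that whatever we conclude at the biposition level transfers to threads via $\QRes_b(\theta_i) = \thrp{\QRes_b(\p_i)}$, independently of the chosen representatives.

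First I would dispatch the nihilated case: if $\QRes_b(\p_2)=\bbot$, one checks by inspecting the definition of each relation that $\p_1$ is itself forced to be nihilated, so $\theta'_1 = \theta'_2 = \thbot$ and the conclusion holds (interpreting $\thbot \rrewb \thbot$ as $\theta'_1 = \theta'_2$). Next, when both $\p_1$ and $\p_2$ live strictly outside the redex ($\ovla_i \neq a$ and no prefix lies in the relevant part of the redex), quasi-residuation acts as the identity on outer positions and trivially on inner positions, so the relation $\rewbullet$ is preserved verbatim. Similarly, when the bipositions sit entirely inside the body $r$ (paradigm $\heartsuit$, where $\QRes_b$ strips the prefix $a\cdot 10$ down to $a$) or inside a surviving argument derivation $P_k$ with $k \in \Trl{a}$ (paradigm $\clubsuit$, where $\QRes_b$ sends $a\cdot k$ to $a\cdot a_k$), the shift commutes with the inner-position operations defining $\rewto$, $\rewtt$, $\rewr$, $\rewdown$ and with consumption $\rew$, so each relation is preserved.

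The delicate cases concern the mutated positions $a$, $a\cdot 1$ and $a\cdot 10$, where $\QRes_b$ identifies several bipositions. Four identifications drive everything: $\QRes_b(a,c) = \QRes_b(a\cdot 1,1\cdot c) = \QRes_b(a\cdot 10,c) = (a,c)$; for $k \in \Trl{a}$, $\QRes_b(a\cdot 1,k\cdot c) = \QRes_b(a\cdot k,c) = (a\cdot a_k,c)$; and the conventional $\QRes_b(a\cdot 1,\epsi) = \QRes_b(a\cdot 1,1) = (a,\epsi)$. For $\rrew$, consumption at the root of the redex becomes the identity (the two identified bipositions collapse), explaining the ``or $\theta'_1 = \theta'_2$'' clause; elsewhere consumption is preserved. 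For $\rrewto$ and $\rrewr$, the only collapses happen when the target is $(a\cdot 1,\epsi)$ or $(a\cdot 1,1)$, and the conventions above guarantee that we fall into the identity case rather than producing garbage. The genuinely non-trivial case, and the main obstacle, is $\rrewtt$ landing at $\p_2 = (a\cdot 1,1)$: here $\QRes_b(\p_2) = (a,\epsi)$ while any $\p_1$ with $\p_1 \rewtt (a\cdot 1,1)$ quasi-residuates to a biposition located strictly above $(a,\epsi)$ in the surviving derivation; the correct relation between the residuals is then $\rrewdown$ rather than $\rrewtt$, which is exactly why the statement of the third clause includes the extra disjunct $\theta'_1\rrewdown \theta'_2$. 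All remaining sub-cases of $\rrewtt$, $\rrewdown$ either fall in the generic shift-commutation argument or are handled by the nihilation step. Collecting the sub-cases yields the five clauses.
\end{proofsketch}
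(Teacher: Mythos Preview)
Your proof is correct and follows essentially the same approach as the paper: a case analysis on the underlying biposition-level relation together with the position of the involved bipositions relative to the fired redex, dispatching the nihilated case first and then isolating the special collapses at the mutated positions $a$, $a\cdot 1$, $a\cdot 10$. In particular, you correctly identify the one genuinely exceptional case, $\rrewtt$ landing at $(a\cdot 1,1)$, as the source of the extra $\rrewdown$ disjunct, which is exactly how the paper handles it.
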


Besides, $\Res_b(\thepsi)=\thepsi$ and $\Res_b(\thbot)=\thbot$ as expected, % so the above Lemma implies that the length of a chain does not increase under reduction.
so \paper{L.}\techrep{Lemma}\;\ref{l:ref-red} implies that, if there is a nihilating chain for $t$ of length $m$, then there is one for $t'$ of length $\leqslant m$ (with $t\rew^* t'$).

\subsection{The Collapsing Strategy}
\label{s:collapsing-strategy}

\begin{figure}
\begin{center}
  \begin{tikzpicture}

  % utiliser \lambda * au lieu des \lx_i
  
  \draw (0,4.5) --++ (-0.3,0.5) --++ (0.6,0) --++ (-0.3,-0.5);
  \draw (0,4.8) node {\small $u$};

  \draw (0,4.34) -- (0,4.5);
  
  \draw (0,4.1) node {\small $\lx$};
  \draw (0,4.1) circle (0.22);
  \draw (-1.2,4.1) node {$\blue{\sSksh}\rew T$};

  \draw (0,3.74)-- (0,3.86);
  
  \draw (0,3.5) node {\small $\lambda 3$};
  \draw (0,3.5) circle (0.22);
  \draw (-1.5,3.5) node {$(*)\!\rewsh \blue{\sSksh}\! \rewsh T$};
  
  \draw (0.17,3.33) -- (0.33,3.17);
  \draw (0.83,3.33)--  (0.67,3.17);
  
  \draw (0.5,3) node {$\arob$};
  \draw (0.5,3) circle (0.22);
  \draw (-0.7,3) node {$\blue{\sSksh}\rew T$};
  
  \draw (0.5,2.62) -- (0.5,2.78);

  \draw (0.5,2.4) node {\small $\lambda 2$};
  \draw (0.5,2.4) circle (0.22);
  \draw (-1.05,2.4) node {$(*)\rewsh \blue{\sSksh} \rewsh T$};

  \draw (0.5,2.02) -- (0.5,2.18);

  \draw (0.5,1.8) node {\small $\lambda 1$};
  \draw (0.5,1.8) circle (0.22); 
  \draw (-1.4,1.8) node {$(*)\rewsh(*)\rewsh \blue{\sSksh} \rewsh T$};
  
  \draw (0.67,1.63) -- (0.87,1.43);
  \draw (1.33,1.63) -- (1.17,1.43);
  
  \draw (1,1.3) node {$\arob$};
  \draw (1,1.3) circle (0.22);
  \draw (-0.8,1.3) node {$(*)\!\rew \blue{\sSksh}\! \rew T$};
  
  \draw (1.17,1.13) -- (1.33,0.93);
  \draw (1.83,1.13) -- (1.67,0.93);
  
  \draw (1.5,0.8) node {$\arob$};
  \draw (1.5,0.8) circle (0.22);
  \draw (0.2,0.8) node {$\blue{\sSksh}\rew T$};
  
  \draw (1.67,0.63) -- (1.83,0.43);

  \draw (2,0.3) node {$\arob$};
  \draw (2,0.3) circle (0.22);
  \draw (1.4,0.3) node {$T$};
  % \draw (2.2,0.6) node [below right]{\parbox{2.2cm}{$\red{\sSksh}$has been consumed}};
  \draw (-1,0.2) node [below] {\small The $\blue{S_k}$ are consumed there}; 
  \draw [>=stealth,->] (0.9,-0.1) -- (1.6,0.2);
  
  \draw (2.17,0.47) --++ (0.21,0.37);
  \draw (2.38,0.84) --++ (-0.3,0.5) --++ (0.6,0) --++ (-0.3,-0.5);
  \draw (2.38,1.14) node {$v$};
  \draw (3,0.8) node {$\sSksh$};

%%%%%%%%%%%%%%%

  \draw (3.2,3.4) --++ (-0.4,0.6) --++ (0.8,0) --++ (-0.4,-0.6);
  \draw (3.2,3.8) node {\small $u_1$};

  \draw (3.2,3.24) -- (3.2,3.4);

  \draw (3.2,3) node {$\lx$};
  \draw (3.2,3) circle (0.22);

  \draw (3.2,2.62) -- (3.2,2.78);
  
  \draw (3.2,2.4) node {\small$\lambda 3$};
  \draw (3.2,2.4) circle (0.22);

  \draw (3.53,2.07) -- (3.37,2.23);
  \draw (3.87,2.07) -- (4.03,2.23);
  
  \draw (3.7,1.9) node {$\arob$};
  \draw (3.7,1.9) circle (0.22);

  \draw (3.7,1.52)-- (3.7,1.68);
  
  \draw (3.7,1.3) node {\small $\lambda 2$};
  \draw (3.7,1.3) circle (0.22); 

  \draw (4.03,0.97) -- (3.87,1.13);
  \draw (4.37,0.97) -- (4.53,1.13);
   
  \draw (4.2,0.8) node {$\arob$};
  \draw (4.2,0.8) circle (0.22);

  \draw (4.37,0.63) -- (4.53,0.43);

  \draw (4.7,0.3) node {$\arob$};
  \draw (4.7,0.3) circle (0.22);
   
  \draw (4.87,0.47) --++ (0.21,0.37);
  \draw (5.08,0.84) --++ (-0.3,0.5) --++ (0.6,0) --++ (-0.3,-0.5);
  \draw (5.08,1.14) node {$v$};
%%%%%%%%%%%%%%%%%%%%%%%%%%%%%%%%%%%

\trans{-8.3}{-3.5}{  
  \draw (6.2,2.3) --++ (-0.4,0.6) --++ (0.8,0) --++ (-0.4,-0.6);
  \draw (6.2,2.7) node {\small $u_2$};

  \draw (6.2,2.14) -- (6.2,2.3);

  \draw (6.2,1.9) node {\small $\lx$};
  \draw (6.2,1.9) circle (0.22);

  \draw (6.2,1.52) -- (6.2,1.68);
  
  \draw (6.2,1.3) node {\small $\lambda 3$};
  \draw (6.2,1.3) circle (0.22);
  
  \draw (6.53,0.97) -- (6.37,1.13); 
  \draw (6.83,0.97) -- (7.03,1.13);
  
  \draw (6.7,0.8) node {$\arob$};
  \draw (6.7,0.8) circle (0.22); 
  
  \draw (6.87,0.63) -- (7.03,0.43);
  
  \draw (7.2,0.3) node {$\arob$};
  \draw (7.2,0.3) circle (0.22);

  \draw (7.37,0.47) --++ (0.21,0.37);
  \draw (7.58,0.84) --++ (-0.3,0.5) --++ (0.6,0) --++ (-0.3,-0.5);
  \draw (7.58,1.14) node {$v$};
}
  %  \draw (9.7,0.3) node {$\bullet$};

\trans{-8.5}{-3.5}{
  \draw (10.1,1.2) --++ (-0.4,0.6) --++ (0.8,0) --++ (-0.4,-0.6);
  \draw (10.1,1.6) node {\small $u_3$};

  \draw (10.1,1.04) -- (10.1,1.2);

  \draw (9,0.8) node {$\blue{\sSksh}\rew T$};
 % \draw (8.2,0.5) node {$\blue{\sSksh}\!\rew \!T$};
  \draw (10.1,0.8) node {\small $\lx$};
  \draw (10.1,0.8) circle (0.22);

  \draw (10.43,0.47) -- (10.27,0.63);
  
  \draw (10.6,0.3) node {$\arob$};
  \draw (10.6,0.3) circle (0.22);
  \draw (10.2,0.3) node {$T$};

 \draw (10.77,0.47) --++ (0.21,0.37);
  \draw (10.98,0.84) --++ (-0.3,0.5) --++ (0.6,0) --++ (-0.3,-0.5);
  \draw (10.98,1.14) node {$v$};
  \draw (11.6,0.8) node {$\sSksh$};

%  \draw (12.2,0.3) node {$\bullet$};

  \draw (13,0.3) --++ (-0.8,1.2) --++ (1.6,0) --++ (-0.8,-1.2);
  \draw (13,1.2) node {\small $u_3\![v/x]$};
  \draw (12.7,0.3) node {$T$};}
  \end{tikzpicture}
  \caption{Collapsing a Redex Tower}
  \label{fig:towered-redex}
\end{center}
\vspace{-0.7cm}
\end{figure}
We explain now how to \textit{normalize} a chain \ie discard negative left-consumption. %(which was more informally explained in \S\;\ref{s:towered-redexes-intro}).
 This will allow us to use Proposition\;\ref{prop:no-normal-chain} to finally conclude that nihilating chains do not exist.

 The idea is that if $\theta_1:\p_1 \lomin \rrewc{a} \theta_2$, then either $\tra$ is a redex and, by definition of $\Res_b$, we have $\Res_b(\theta_1)=\Res_b(\theta_2)$ (\ie $\theta_1$ and $\theta_2$ are \textbf{collapsed} by the reduction step) or there is a redex between $\p_1$ and $a$. When we reduce it, the relative height of $\p_1$ will decrease. More precisely, the 2nd case is associated to the notion of \textbf{redex tower}\paper{, which is more or less a \textit{finite} nesting of redexes, that can -- more importantly -- be collapsed in a \textit{finite} number of steps. A case of negative left-consumption of a sequence type $\sSk$ (which is the domain of the abstraction $\lx.u$), coming along with a redex tower, is represented  in Fig.\;\ref {fig:towered-redex} (by lack of space, we write $\lam 1,\lam 2,\ldots$ instead of $\lx_1,\lx_2,\ldots$ and $(*)$ stands for matterless sequence types).
   The sequence type $\sSk$ of negative polarity is ``called'' by the node $\lx$ at the top of the figure and consumed at the bottom $\app$-rule. 
  The initial redex tower is reduced in 4 steps, so that its height decreases and finally, the types $S_k$, that were left-consumed negatively, are destroyed in the final term $u_3[v/x]$.  }. 
\techrep{In Fig.\;\ref{fig:towered-redex}, we have represented a redex tower of \textit{height} 7, meaning that the sequence type $\sSksh$ called by the abstraction $\lx$ is left-consumed on its prefix of rank 7. It is very similar to that of Fig.\;\ref{fig:neg-left-cons-and-tow-redex-informal} p.\;\pageref{fig:neg-left-cons-and-tow-redex-informal}. 
The types of some subterms are indicated on their left (except for $v$) \eg subterm $\lx.u$ has type $\sSksh\rew T$.
We write $\lam 1,\, \lam 2,\, \lam 3$ for $\lx_1,\,\lx_2,\,\lx_3$ and $(*)$ for sequence types of no matter.

At each step of reduction, we reduce the height of consumption by 2.
Thus, in a finite number of steps, we get a proper redex and we collapse $\theta_1$ on $\theta_2$. This is illustrated by Fig\;\ref{fig:towered-redex} or Fig.\;\ref{fig:coll-tow-redex-informal}. We  now formalize this argument.\\
\paragraph*{Formalizing the Collapsing Strategy}  
Assume then $\theta_1:\p_1=(\al\cdot 1,k\cdot c) \lomin \rrew \theta_2$. Then $\Asc{\theta_1}=(\al_*,k\cdot c)$ and $t(a_*)=\lx$ for some $\al_*,x$. We set $h=|a_*|-|a|$ and we call $h$ the \textbf{height} of the consumption. By Lemma~\ref{l:form-idena}, for $1\leqslant i \leqslant h$, we may write $\p_i=(\al_i,1^{j(i-1)}\cdot k\cdot c)$ for $\asc^{i-1}(\theta_1)$ where $\al_{i+1}=\al_i\cdot k_i$ for a $k_i\in \set{0,1}$.

If $h=1$, then $\al_*=\al \cdot 1$ and we set $b=\ovl{\al}$ so that $t\rstr{b}$ is a redex and $\QRes_b(\p_1)=\QRes_b(\p_2)$. Thus, $\Res_b(\theta_1)=\Res_b(\theta_2)$.

Assume now $h>1$. Then let $1\leqslant i_0\leqslant h-1$ be maximal s.t. $t(a_{i_0})=\arob$. Actually, $i_0\leqslant h-2$ (if $i_0=h-1$, $t\rstr{\al_{h-1}}$ is a redex, so $\al_{h-1}=\al$ \ie $h=1$).

We set $b=\ovl{\al^{i_0}}$ so that $t\rstr{b}$ is a redex. We set $\p'_i=(\al'_i,c'_i)=\QRes_b(\p_i)$ (so that $\p'_{i_0}=\p'_{i_0+1}=\p'_{i_0+2}$).
By induction on $i$, if $1\leqslant i\leqslant i_0$, then $(\asc')^{i-1}(\p_1)=\p'_i$ and $|\al'_i|-|\al|=i$ and if $i_0<i\leqslant h-2$, then $(\asc')^{i-1}(\p_1)=\p'_{i+2}$ and $|\al'_{i+2}|-|\al|=i$. Thus, $(\asc')^{h-2}(\p_1)=(\al'_h,k\cdot c)$ and $|\al'_h|-|\al|=h-2$. Since $t'(\al'_{h-2})=t(\al_h)$ (proper residual), $\Ascp{\p'_1}=\p'_h$ and  $\Polp{\p'_1}=\omin$. Thus, $\theta'_1=\Res_b(\theta_1)\lomin \rew \theta'_2=\Res_b(\theta_2)$, but the height has decreased by 2. In a finite number of steps, we equate then $\theta_1$ and $\theta_2$.

We extend the notation $\Res_b$ for finite reduction sequences, so that we get this lemma (whose informal version is Observation\;\ref{obs:neg-left-cons-tow-red-and-coll-strat}, p.\;\pageref{obs:neg-left-cons-tow-red-and-coll-strat}): }

\begin{lemma}
\label{l:collapsing-strategy}
If $\theta_1\lomin \rrew \theta_2$, then there is a reduction strategy $\rs$ such that $\Res_{\rs}(\theta_1)=\Res_{\rs}(\theta_2)$.
\end{lemma}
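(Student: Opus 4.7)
\begin{proofsketch}
The plan is to proceed by induction on the \emph{height} $h$ of the negative left-consumption, which will decrease by at least two at each reduction step. First, unfold the hypothesis $\theta_1 \lomin \rrewc{\al} \theta_2$: by definition of $\rew$, there is a biposition $\p_1 = (\al\cdot 1, k\cdot c) : \theta_1$ with $k\geqs 2$, $t(\al) = \arob$, and a biposition $\p_2 = (\al\cdot k, c) : \theta_2$, and the polarity of $\p_1$ is $\ominus$. By Lemma~\ref{l:form-idena} and the characterization of negative polarity, $\Asc{\p_1} = (a_*, 1^j\cdot k\cdot c)$ for some $j$, with $t(a_*) = \lx$ and $a_* = \al\cdot 1\cdot k_1\cdots k_{h-1}$ for some $k_i \in \set{0,1}$. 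Set $h := |a_*| - |\al|$ to be the \emph{height} of the consumption.

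For the base case $h = 1$, we have $a_* = \al \cdot 1$, so $t(\al) = \arob$ and $t(\al\cdot 1) = \lx$; thus $\tra$ is a $\beta$-redex $(\lx.u)\,v$. Set $b = \ovl{\al}$ and fire this redex. By the definition of $\QRes_b$ on the abstraction side (source of the arrow), $\QRes_b(\al\cdot 1, k\cdot c) = (\al\cdot a_k, c) = \QRes_b(\al\cdot k, c)$ when $k \in \Trl{\al\cdot 1}$; and when $k \notin \Trl{\al\cdot 1}$, both bipositions are nihilated to $\bbot$. In both cases $\QRes_b(\p_1) = \QRes_b(\p_2)$, hence $\Res_b(\theta_1) = \Res_b(\theta_2)$; the strategy $\rs$ is the single reduction of the redex at $b$.

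For $h > 1$, we look for a reducible redex \emph{inside} the tower. Let $1 \leqs i_0 \leqs h-1$ be maximal such that $t(\al_{i_0}) = \arob$, where $\al_i := \al\cdot 1\cdot k_1\cdots k_{i-1}$; such an $i_0$ must exist (since otherwise the whole tower would be made of abstractions and the top ascendant could not yield the left-consumption at $\al\cdot 1$), and in fact $i_0 \leqs h-2$, because if $i_0 = h-1$ then $t(\al_{h-1}) = \arob$ and $t(a_*) = t(\al_{h-1}\cdot 1) = \lx$, making $t\rstr{\ovl{\al_{h-1}}}$ a redex whose firing would have $h=1$, not $h>1$. Set $b = \ovl{\al_{i_0}}$. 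By the analysis of $\QRes_b$ along the ascendant thread of $\p_1$, the bipositions strictly below $\al_{i_0}$ keep moving via $\Res_b$, while $\al_{i_0}\cdot 1$ and $\al_{i_0}$ both collapse onto $\al_{i_0}$. Consequently, for $\theta'_i := \Res_b(\theta_i)$ we get $\theta'_1 \lomin \rew \theta'_2$ in the reduct, with height $h - 2$. By the induction hypothesis there is a strategy $\rs_0$ in the reduct that collapses $\theta'_1$ onto $\theta'_2$. Prepending the single step at $b$ yields the desired strategy $\rs$ for $t$, with $\Res_{\rs}(\theta_1) = \Res_{\rs}(\theta_2)$.

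The main obstacle is the inductive step: one must check carefully that firing the inner redex at $b$ does not destroy the ascendant chain of $\p_1$ except at exactly two nodes (the root and the abstraction of the fired redex), and that polarity is preserved for $\Res_b(\p_1)$ (its top ascendant is still a $\lx$-node, namely $\al_h$, whose residual exists because $a_* = \al_h$ lies strictly above the reduced redex). This is a bookkeeping exercise using the case analysis of $\QRes_b$ already performed in \S\ref{ss:residuation-subjugation}, combined with Lemma~\ref{l:iden-res} to transport the equivalence $\iden$ to the reduct.
\end{proofsketch}
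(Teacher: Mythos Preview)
Your proof sketch follows essentially the same route as the paper: define the height $h$ of the negative left-consumption, treat $h=1$ by firing the redex at $\ovl{\al}$ directly, and for $h>1$ pick the highest $\arob$-node $\al_{i_0}$ in the ascendant chain, fire the redex there, and observe that the height drops by~2.

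One remark: your justification that $i_0\leqs h-2$ is not quite right (and, in fairness, the paper's own parenthetical is equally cryptic). The fact that $t\rstr{\ovl{\al_{h-1}}}$ would be a redex does not by itself force $h=1$. The clean argument is via the inner-position exponent: writing $\p_i=(\al_i,1^{j(i-1)}\cdot k\cdot c)$, ascendance at an $\arob$-node increments $j$ by~1 and at a $\lx$-node decrements it by~1, with $j\geqs 0$ throughout; since $j(0)=0$ and $j(h-1)=0$ (the top ascendant has inner position $k\cdot c$), the last step must be a $\lx$-step, i.e.\ $t(\al_{h-1})=\lx$, whence $i_0\leqs h-2$. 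The same exponent bookkeeping also gives existence of $i_0$ for $h>1$ (since $j(0)=0$ and $k\geqs 2$ force $t(\al_1)=\arob$). With this fix your argument is complete and matches the paper.
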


\noindent This Lemma, along with\techrep{ the observation concluding Sec.\;\ref{s:residuals-c-unsound}}\paper{ the conclusion of \S\;\ref{ss:residuation-subjugation}}, yields:

\begin{proposition}
\label{prop:collapsing-strategy}
There is a reduction strategy, that we call the \textbf{collapsing strategy}, producing a normal chain from any nihilating chain.
\end{proposition}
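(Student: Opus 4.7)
\begin{proofsketch}
The plan is to prove Proposition~\ref{prop:collapsing-strategy} by iterating Lemma~\ref{l:collapsing-strategy} over every negative left-consumption link of the given nihilating chain, using Lemma~\ref{l:ref-red} to track how the chain transforms under reduction. Given a nihilating chain $\calC = \theta_0 \rrewb \theta_1 \rrewb \ldots \rrewb \theta_m$ (with $\theta_0=\thepsi$, $\theta_m=\thbot$), let $N(\calC)$ denote the number of links of the form $\theta_i \lomin \rrew \theta_{i+1}$ or $\theta_i \werr\romin \theta_{i+1}$. I will argue by induction on $N(\calC)$ that a finite reduction sequence produces a residual chain that is normal.

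If $N(\calC)=0$, there is nothing to do: the empty reduction sequence works. Otherwise pick any negative left-consumption link, say $\theta_i \lomin \rrew \theta_{i+1}$ (the case $\werr\romin$ is symmetric). By Lemma~\ref{l:collapsing-strategy} there is a reduction strategy $\rs$ (the finite tower-collapsing sequence built in that lemma) such that $\Res_{\rs}(\theta_i)=\Res_{\rs}(\theta_{i+1})$. Let $t'$ be the term obtained after $\rs$. Applying $\QRes_b$ componentwise along the chain and invoking Lemma~\ref{l:ref-red}, every link $\theta_j \rrewb \theta_{j+1}$ either yields a residual link of the same (or weaker, e.g. $\rrewdown$ instead of $\rrewtt$) kind, or collapses into equality. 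Consolidating consecutive equal threads and using $\Res_{\rs}(\thepsi)=\thepsi$, $\Res_{\rs}(\thbot)=\thbot$, we obtain a new nihilating chain $\calC'$ for $t'$, of length at most $m$.

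The key point is that $N(\calC')<N(\calC)$. The targeted link $\theta_i\lomin\rrew \theta_{i+1}$ has been identified by construction of $\rs$, hence disappears in $\calC'$. The other negative left-consumption links either also collapse (still good) or survive as residuals: for the surviving ones, the underlying consumption biposition admits a proper residual, and since proper residuation preserves the $\app$-node at which consumption happens and the top-ascendant $\abs$-node responsible for negative polarity, the link remains of the form $\lomin\rrew$ or $\werr\romin$. Crucially, no link of a different form can turn into a negative left-consumption under residuation, because $\Res_b$ can only transform relations within the bounds allowed by Lemma~\ref{l:ref-red} (no new consumption arises, and polarity of non-destroyed bipositions is preserved). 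Hence the counter $N$ strictly decreases. By the inductive hypothesis, a further reduction strategy $\rs'$ starting from $t'$ normalizes $\calC'$; concatenating $\rs$ and $\rs'$ yields the desired collapsing strategy.

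The main obstacle is the last claim, namely the invariance of polarity and the impossibility of creating new negative left-consumptions. This is where the case analysis of $\QRes_b$ (destroyed vs.\ proper residuals, nihilated bipositions going to $\bbot$) from \S\;\ref{ss:residuation-subjugation} has to be exploited precisely: one has to argue, biposition by biposition, that if $\p$ is not destroyed by the redex fired in $\rs$ then $\Pol{\p}=\Pol{\QRes_b(\p)}$ because the top ascendant of $\p$ is either outside the redex (unchanged) or inside the body $r$ (translated by a prefix shift but still labelled with the same constructor). Combined with the already established stability of $\rew/\wer$ under $\QRes_b$, this guarantees that no fresh negative left-consumption pops up and thus closes the induction.
\end{proofsketch}
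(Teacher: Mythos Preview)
Your overall strategy (iterate Lemma~\ref{l:collapsing-strategy}, pushing the chain through $\QRes$) is the right one, but your choice of induction measure is the problem. You do induction on $N(\calC)$, the number of negative left-consumption links, and the whole argument hinges on $N(\calC')<N(\calC)$. The justification you give for this---that ``polarity of non-destroyed bipositions is preserved'' because the top ascendant is ``translated by a prefix shift but still labelled with the same constructor''---is false.

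Here is a concrete failure. Take $t=(\lx.x\,u)(\ly.v)$ and reduce the outer redex at $\epsi$, obtaining $t'=(\ly.v)\,u$. In $t$, the inner application $x\,u$ sits at position $1\cdot 0$, and the left-consumption biposition $(1\cdot 0\cdot 1,\,k\cdot c)$ has top ascendant at the $x$-axiom $1\cdot 0\cdot 1$, so its polarity is $\oplus$. After reduction, the quasi-residual of this biposition is $(1,\,k\cdot c)$ in $t'$, where $t'(1)=\ly$; since the inner position starts with $k\geqslant 2$, there is no further ascendant, and the polarity is now $\ominus$. The consumption link survives (it becomes $(1,k\cdot c)\rewc{\epsi}(k,c)$), but it has flipped from $\loplus\rrew$ to $\lomin\rrew$. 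So $N$ can increase by one even as the targeted link disappears, and your induction stalls. The flaw in your last paragraph is precisely the case where the top ascendant of $\p$ is an axiom for the redex variable: that node is \emph{not} ``labelled with the same constructor'' after substitution.

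The paper's (implicit) argument avoids this entirely by using the \emph{length} $m$ of the chain as the decreasing measure. Lemma~\ref{l:collapsing-strategy} guarantees that the targeted link collapses to an equality, so the residual chain has length at most $m-1$; Lemma~\ref{l:ref-red} guarantees that the result is still a nihilating chain (and $\Res_\rs(\thepsi)=\thepsi$, $\Res_\rs(\thbot)=\thbot$). Whether or not new negative left-consumptions appear is irrelevant: as long as the chain is not yet normal, one more application of Lemma~\ref{l:collapsing-strategy} strictly shortens it. Switching your induction to $m$ fixes the proof with no further work.
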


\techrep{
  \subsection{Redex Towers}
\label{s:towered-redex-very-formal}
  
  The notion of redex tower, illustrated by Fig.\;\ref{fig:towered-redex} and \ref{fig:neg-left-cons-and-tow-redex-informal}, can be formalized. We will use it again in the course of proving Theorem\;\ref{th:order}.\\

  First, we can easily extend the notion of ascendance to positions of $\bbA$.
  So we set, for all $a\in \bbA$, $\asc(a)=a\cdot 1$ if $t(a)=\arob$, $\asc(a)=a\cdot 0$ if $t(a)=\lx$ ($\asc(a)$ is undefined when $t(a)=x$). We also define $\Asc{a}$ as $\asc^i(a)$ where $i$ is maximal such that this expression is defined.

  Now, let $a\in \bbA$. We define first the \textbf{consumption degree} $\cdeg_a(a')$ of some extensions $a'$ of $a$. Intuitively, to compute $\cdeg_a(a')$, we visit the ascendants of $a$, one after another. The postfix 1 means that we visit an application left-hand side and we increment $\cdeg_a(a')$, and the postfix 0 means that we visit an abstraction and we decrement $\cdeg_a(a')$. 

  For instance, in Fig.\;\ref{fig:towered-redex}, if $a$ is the adress of the root of the redex tower, we have $\cdeg_a(a\cdot 1)=1,\; \cdeg_a(a\cdot 1^2)=2,\; \cdeg_a(a\cdot 1^3)= 3,\; \cdeg_a(a\cdot 1^3\cdot 0)=2,\; \cdeg_a(a\cdot 1^3\cdot 0^2)=1,\; \cdeg_a(a\cdot 1^3\cdot 0^2 \cdot 1)=2,\; \cdeg_a(a\cdot 1^3\cdot 0^2 \cdot 1\cdot 0)=1,\;
  \cdeg_a(a\cdot 1^3\cdot 0^2 \cdot 1\cdot 0^2)=0$ where $t(a\cdot 1^3\cdot 0^2\cdot 1 \cdot 0^2)=\lx$. Thus, the consumption degree reaches 0 for the first time when the abstraction of the redex tower is reached.

So, we define by induction:
  \begin{itemize}
  \item $\cdeg_a(a)=0$.
    \item If $\cdeg_a(a')\geqslant 0$, then $\cdeg_a(a'\cdot 1)=\cdeg_a(a')+1$ (resp. $\cdeg_a(a'\cdot 0)=\cdeg_a(a')-1$) when $t(a)=\arob$ (resp. $t(a)=\lx$).
    \item If $\cdeg_a(a')=-1$, then $\cdeg_a(a'\cdot 1)$ (resp. $\cdeg_a(a\cdot 0)$) is undefined when $t(a')=\arob$ (resp. $t(a')=\lx$).
  \end{itemize}

 As it can also be seen with Fig.\;\ref{fig:towered-redex}, if $d=\cdeg_a(a')\geqslant a$, then $\ttT(a')$ is an arrow type ending with $\ttT(a)$ and, more precisely, $\ttT(a)$ is preceded in $\ttT(a')$ by exactly $d$ arrows.   

  \begin{definition}
    Let $t$ a term and $b\in \supp{t}$ such that $t(b)=\arob$. We say there is a \textbf{redex tower} of height $h\geqslant 1$ if
    \begin{itemize}
    \item $\cdeg_b(\asc^h(b))=0$
    \item For all $1\leqslant i \leqslant h-1$, $\cdeg_b(\asc^i(b))>0$.
    \end{itemize}
  \end{definition}

  \noindent The extension below will also be useful to prove Theorem\;\ref{th:order}. Inductively, a redex tower sequence is a redex tower whose left-hand side (the part over the abstraction) is itself a redex tower sequence. We can then collapse one redex tower after the other. See \S\;\ref{s:formalizing-asc-threads-etc} for a high-level presentation and in particular, Fig.\;\ref{fig:tow-red-seq}.

  \begin{definition}
    \label{def:towered-redex-sequence}
    Let $t$ a term and $b\in \supp{t}$ such that $t(b)=\arob$. We say there is a \textbf{redex tower sequence} of height $h\geqslant 1$ at position if
$\cdeg_b(\asc^h(b))=0$.
  \end{definition}

For the proof of Theorem\;\ref{th:order} (order discrimination),  we need to describe the thread $\thepsi$. So, for $n\geqslant 0$, we write $\epsi(n)=\asc^n(\epsi),\ \pepsi(n)=\asc^n(\epsi,\epsi)$ and $\phdadd{\tti(n)}$ the naturel number such that $\pepsi(n)=(a(n),1^{\tti(n)})$ when they are defined (if $\pepsi(n)$ is defined, then $a(n)$ is defined; the form $1^{\tti(n)}$ is justified by Lemma\;\ref{l:form-idena}).\\  
The following lemma is just the formalization of Observation\;\ref{obs:tow-red-seq-and-order}, p.\;\pageref{obs:tow-red-seq-and-order}:

\begin{lemma}
\label{l:epsi-epsi-neg}
Let $t$ be a term and $\code{\cdot}:\bbN^*\rew \argN$ an injection and all the associated notations ($\asc$, $\ttPol$, $\rrewb$ etc).\\ 
If $\Pol{\epsi,\epsi}=\omin$, then there exist $a\in \bbA$ and a finite segment $\rs$ of the head reduction strategy such that $\asc(\epsi,\epsi)=(a,\epsi)$, $t\stackrel{\rs}{\bred} \lx.t'$ and $\Res_\rs(a,\epsi)=(\epsi,\epsi)$. 
%\item If $(a,c)=\Asc{\epsi,\epsi}$ with $t(a)=x$, $h=|a|$ and for all $i<h$, $t(\epsi(i)\neq \lx$, then the head reduction strategy $\rs$ terminates yields a HNF whose head variable is $x$.
\end{lemma}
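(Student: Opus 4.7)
\begin{proofsketch}
The plan is to combine the structural consequences of $\Pol{\epsi,\epsi} = \omin$ with the redex-tower-sequence machinery of \S\,\ref{s:towered-redex-very-formal} and the finite collapsing strategy of \S\,\ref{s:collapsing-strategy}, tracking the proper residual of the top ascendant throughout a finite segment of head reduction.

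First I would unpack the polarity hypothesis. By definition of $\ttPol$, the top ascendant of $(\epsi,\epsi)$ is of the form $(a,c_0)$ with $t(a) = \lx$ for some $x$. By Lemma~\ref{l:form-idena}, $c_0 = 1^j$ for some $j \geqslant 0$, and since $\asc$ at an $\lx$-node is defined only when the inner position has shape $1 \cdot c'$, the maximality of the top ascendant forces $c_0 = \epsi$. This yields the first conclusion $\asc(\epsi,\epsi) = (a,\epsi)$ (reading $\asc$ here as the top-ascendant operator). The degenerate subcase $t(\epsi) = \lx$ is then immediate: take $\rs$ empty, so $a = \epsi$, $t$ is already $\lx.t'$, and $\Res_\rs(a,\epsi) = (\epsi,\epsi)$.

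In the main subcase $t(\epsi) = \arob$, the outer positions $\epsi = a(0), a(1), \dots, a(h) = a$ of the ascendance path coincide with the head-spine nodes of $t$, and the equality $\tti(i) = \cdeg_\epsi(a(i))$ together with $\tti(0) = \tti(h) = 0$ and $t(a(h)) = \lx$ says exactly that there is a redex tower sequence of height $h$ rooted at $\epsi$ (Def.~\ref{def:towered-redex-sequence}) whose apex is an abstraction, not a variable. I would then collapse this sequence by head reduction: it decomposes into a finite concatenation of individual redex towers, and each is dismantled in finitely many head firings exactly as in the proof of Lemma~\ref{l:collapsing-strategy}. At each such firing the proper-residual rule ``$a' \cdot 10 \cdot \al \mapsto a' \cdot \al$'' applies to $a$, since $a$, being the apex of the sequence, lies strictly above every head redex collapsed along the way; hence $\Res$ is defined at $a$ throughout $\rs$ and its successive residuals form a strictly decreasing sequence of positions. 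After the last firing, $\Res_\rs(a) = \epsi$; the reduct is then an abstraction $\lx.t'$ and $\Res_\rs(a,\epsi) = (\epsi,\epsi)$, as required.

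The main obstacle is technical rather than conceptual: I need to verify carefully that $a$ is never the position of the abstraction destroyed in any intermediate head firing, so that the \emph{proper} residual $\Res$ (and not merely the quasi-residual $\QRes$ of \S\,\ref{s:residuals-c-unsound}) is defined at $a$ at every step. This reduces to the observation that the apex of a redex tower sequence lies strictly above every head redex it contains, which itself follows from $\cdeg_\epsi(a(i)) > 0$ strictly in the interior of each constituent tower, together with the fact that head reduction always fires the outermost spine redex.
\end{proofsketch}
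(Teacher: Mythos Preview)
Your proof sketch is correct and follows essentially the same route as the paper's own argument: establish $\Asc{\epsi,\epsi}=(a,\epsi)$ with $t(a)=\lx$ via Lemma~\ref{l:form-idena} and maximality, then collapse the redex tower sequence rooted at $\epsi$ using the technique of \S\,\ref{s:collapsing-strategy}. Your sketch is in fact more detailed than the paper's proof, which merely says ``we destroy the highest redex tower sequence rooted at $\epsi$'' and refers back to \S\,\ref{s:collapsing-strategy}; you additionally spell out the identification $\tti(i)=\cdeg_\epsi(a(i))$, the degenerate subcase $t(\epsi)=\lx$, and the verification that $a$ never coincides with the abstraction of a fired head redex (so that the proper residual $\Res$, not just $\QRes$, stays defined throughout).
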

\begin{proof}
%%% \phdadd{pas mal de typos ds cette preuve (corrigées ds thèse)}  
   Assume $\Pol{\epsi,\epsi}=\omin$.
  By Lemma\;\ref{l:form-idena}, $\Pol{\epsi,\epsi}=\omin$ means that $\thepsi$ only occurs  negatively (an ascendant of $(\epsi,\epsi)$ is of the form $(a',1^i)$ so cannot be on the left-hand side of $\rewp$). In that case, if $(a,c)=\Asc{\epsi,\epsi}$, we then have $t(a)=\lx$ (negativity) and $c=\epsi$ (if we had $c=1^i$ with $i\geqslant 1$, then $(a,c)$ would have an ascendant).\\

%In both cases, the
 The claim about the head reduction strategy is proved using exactly the same technique as in Sec.\;\ref{s:collapsing-strategy}: more precisely, we destroy the highest redex tower sequence (Definition\;\ref{def:towered-redex-sequence}) rooted at $\epsi$, which yields a reduct of $t$ that is an abstraction $\lx.t'$.\fublainv{preuve peu claire...}
  \end{proof}

}

\section{Applications}
\label{s:apps-comp-unsound}

We can now prove that every term is $\code{\cdot}$-typable (and thus, also $\scrR$-typable), using the residuation of threads, the collapsing strategy and the non-existence of normal threads, which is ensured by the Interaction Lemmas. % But the collapsing strategy can also be applied to prove that systems $\ttS$ and $\scrR$ are order discriminating.
The same methods can be used to prove order-discrimination in systems $\scrR$ and $\ttS$.

\begin{theorem}
  \label{th:comp-unsoundness-R-S}  
    Every $\lam$-term is typable in the relevant and non-idempotent intersection type system $\scrR$.
\end{theorem}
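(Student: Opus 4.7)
The plan is to combine all the machinery built up in Sections \ref{s:candidate-bisupp}--\ref{s:normalizing-chains} through a single contradiction argument, and then transfer the result from $\ttS$ to $\scrR$ by collapsing sequences to multisets. Fix a term $t$ and an injection $\code{\cdot}\colon\bbN^*\to \Nmzo$. By Proposition \ref{prop:if-no-nihil-ch-then-comp-unsound}, it is enough to show that no nihilating chain exists for the pair $(t,\code{\cdot})$. So I would argue \emph{ad absurdum} and assume that there is some chain $\thepsi = \theta_0 \rrewb \theta_1 \rrewb \dots \rrewb \theta_m = \thbot$.

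Next, I would feed this chain into the collapsing strategy of Proposition \ref{prop:collapsing-strategy}: this produces a finite head-style reduction $t \bred^{*} t'$ together with a \emph{normal} nihilating chain for $(t',\code{\cdot}')$, where $\code{\cdot}'$ is the injection for $t'$ induced by $\code{\cdot}$ via residuation (as defined in \S\;\ref{ss:residuation-subjugation}). The key facts making this step work are Lemma \ref{l:iden-res} (residuation is compatible with $\iden$, hence descends to threads) and Lemma \ref{l:ref-red} (each elementary subjugation step is preserved by residuation, possibly trivialised), together with $\Res(\thepsi)=\thepsi$ and $\Res(\thbot)=\thbot$, which guarantee that residuation of a nihilating chain stays nihilating; Lemma \ref{l:collapsing-strategy} then ensures that negative left-consumptions can be destroyed in finitely many reduction steps, yielding a chain with no such link, i.e.\ a normal one.

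Now Proposition \ref{prop:no-normal-chain} applies to $(t',\code{\cdot}')$ and states precisely that no normal nihilating chain exists, contradicting what we just produced. Hence no nihilating chain for $(t,\code{\cdot})$ exists either; by Corollary \ref{corol:typability-and-thepsi}, the minimal bisupport candidate $\bbBm$ is a genuine $\code{\cdot}$-bisupport and, by Proposition \ref{prop:charac-bisupp}, there is a $\code{\cdot}$-derivation $P$ typing $t$ in system $\ttS$. Forgetting the track annotations, sequence types collapse onto multiset types and $P$ collapses onto a $\scrR$-derivation typing $t$, as indicated at the end of \S\;\ref{ss:system-S-LICS18}. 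Thus $t$ is $\scrR$-typable, which concludes the argument.

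The main obstacle — and the only point where the finitary reasoning is delicate — is invoking the collapsing strategy correctly: one must check that the finite reduction sequence that flattens a single $\lomin$-link does not reintroduce another $\lomin$-link somewhere else in the chain, so that iterating on all negative left-consumptions terminates. This is exactly what Proposition \ref{prop:collapsing-strategy} packages (relying on the redex-tower analysis and the fact that height of negative consumption strictly decreases under the reductions fired). Everything else in the argument is a straightforward assembly of Corollary \ref{corol:typability-and-thepsi}, Propositions \ref{prop:if-no-nihil-ch-then-comp-unsound}, \ref{prop:no-normal-chain}, \ref{prop:collapsing-strategy}, and the $\ttS \to \scrR$ collapse.
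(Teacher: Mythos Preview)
Your proposal is correct and follows essentially the same route as the paper: combine Proposition~\ref{prop:if-no-nihil-ch-then-comp-unsound} (no nihilating chain $\Rightarrow$ typability), Proposition~\ref{prop:collapsing-strategy} (any nihilating chain can be normalized via the collapsing strategy), and Proposition~\ref{prop:no-normal-chain} (no normal nihilating chain exists), then collapse from $\ttS$ to $\scrR$. The only minor imprecision is calling the collapsing strategy a ``head-style reduction''---it fires the highest redex inside a redex tower, which need not be the head redex---but this does not affect the argument.
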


\begin{proof}
We just need to achieve Goal\;\ref{goal:code-S-typability}.
 \techrep{ By Propositions \ref{prop:if-no-nihil-ch-then-comp-unsound},~\ref{prop:no-normal-chain} and \ref{prop:collapsing-strategy}:}
  \begin{itemize}
  \item By Proposition\;\ref{prop:no-normal-chain}, there is no normal nihilating chain.
  \item By using the collapsing strategy (Proposition\;\ref{prop:collapsing-strategy}), if  nihilating chains existed, so would the normal nihilating chains. Thus, nihilating chains do not exist.
  \item By Proposition\;\ref{prop:collapsing-strategy}, the non-existence of nihilating chains entails that every term is $\code{\cdot}$-typable\paper{ and $\scrR$-typable}.
\techrep{  \item By Goal\;\;\ref{goal:code-S-typability} and the afferent discussion, this is enough to prove that every term is $\scrR$-typable.}
  \end{itemize}  
\end{proof}

\paper{
  \noindent System $\scrR$  discriminates terms \wrt their orders\techrep{ (Definition\;\ref{def:order})}, as claimed:}\techrep{Now that we know that every term may be typed, we can study some semantical aspects of system $\scrR$, namely, how it may be order-sensitive.

  \subsection*{Order-Discrimination}
%\label{s:order-discr-S-R}
}
  
\paper{
\begin{lemma}
  \label{l:zero-term-o}
Let $t$ be a zero term and $\tv$ a type variable, then there is context $C$ such that $\ju{C}{t:\tv}$ is $\ttS$-derivable.
\end{lemma}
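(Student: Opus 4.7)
The goal is to produce an $\ttS$-derivation $P\rhd \ju{C}{t:\tv}$; collapsing sequential intersection to multiset intersection as in \S\;\ref{ss:system-S-LICS18} then yields the required $\scrR$-derivation. By the construction underlying Corollary \ref{corol:typability-and-thepsi} and the proof of Proposition \ref{prop:charac-bisupp}, it suffices to fix an injection $\code{\cdot}$ and to show that the minimal bisupport $\bbBm$ (the $\iden$-and-$\rrewb$-closure of $\set{\thepsi}$) satisfies $(\epsi, 1) \notin \bbBm$: the labelling rule from the proof of Proposition \ref{prop:charac-bisupp} then makes the root type equal to $\tv$, since $(\epsi,\epsi)$ has no $\cdot 1$-child in $\bbBm$.

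Fix $\code{\cdot}$. Since $t$ is a zero term it never reduces to an abstraction, so in particular $t(\epsi) \neq \lx$, and the contrapositive of Lemma \ref{l:epsi-epsi-neg} gives $\Pol{\epsi, \epsi} = \oplus$. The thread $\thepsi$ is therefore entirely positive: its top ascendant $\Asc{\epsi, \epsi} = (a_0, c_0)$ lies at an axiom rule typing some variable $x$, and by Lemma \ref{l:form-idena} all its bipositions have inner position of the form $1^i$. In particular, no biposition of $\thepsi$ has inner position beginning with an argument track $k \geq 2$, so $\thepsi$ cannot be the source of any positive left-consumption step $\loplus \rrew$.

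Call a chain $\thepsi \rrewb^* \thr{\epsi, 1}$ \emph{spurious}; the plan is to rule out spurious chains by adapting the normalization argument of \S\;\ref{s:nihilating-chains}--\ref{s:normalizing-chains}. The collapsing strategy of Proposition \ref{prop:collapsing-strategy} adapts because $\QRes_b(\epsi, 1) = (\epsi, 1)$ for every redex position $b$, so spurious chains residuate to spurious chains and collapsing still destroys negative left-consumptions. The interaction lemmas \ref{l:rrew-rrewto}, \ref{l:rrew-rrewr-rrewdown}, \ref{l:rrew-rrewtt} together with an analogue of Lemma \ref{l:rbot} adapted to $\thr{\epsi, 1}$ then let us rewrite a normal spurious chain into a chain of positive left-consumptions $\thepsi \loplus \rrew \theta_1 \loplus \rrew \ldots \loplus \rrew \thr{\epsi, 1}$; but this is impossible by the previous paragraph, so spurious chains do not exist.

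The main obstacle is obtaining the analogue of Lemma \ref{l:rbot} for $\thr{\epsi, 1}$. The key geometric observation is that for a zero term the outer trajectory of $\thr{\epsi, 1}$ mirrors that of $\thepsi$ (they follow the same path of outer positions in $t$, with inner positions always differing by the prefix $1$). From this one derives: $\thr{\epsi, 1}$ is not a target of $\rrewr$ (its inner-$1$ members sit at non-$\lx$ outer positions, since those positions coincide with the inner-$\epsi$ members of $\thepsi$, and if such a position were a $\lx$-node the thread $\thepsi$ would be negative, contradicting $\Pol{\epsi, \epsi} = \oplus$); $\thr{\epsi, 1}$ is not a target of $\rrewdown$ (no biposition of $\thr{\epsi, 1}$ has inner position $\epsi$); and any approach to $\thr{\epsi, 1}$ via $\rrewto$ or $\rrewtt$ forces a predecessor of the form $(\epsi, 1 \cdot c')$, which by induction on $|c'|$ cannot lie in $\bbBm$. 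Once this analogue is in place, the minimal bisupport $\bbBm$ supplies the desired derivation with root type $\tv$.
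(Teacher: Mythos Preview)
Your overall plan mirrors the paper's: reduce to the question of whether $(\epsi,1)\in\bbBm$, normalise a hypothetical $\lam$-chain via the collapsing strategy, and derive a contradiction from positivity of $\thepsi$. Two points, however, do not go through as written.

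First, ``$\thepsi$ is therefore entirely positive'' is false in general. From $\Pol{\epsi,\epsi}=\oplus$ you only know that the top ascendant of $(\epsi,\epsi)$ lies at an axiom position $\epsi(h)$ typing some variable $x$; if $x$ is bound by a $\lx$ at some $\epsi(\hlam)$ with $\hlam<h$, polar inversion gives $\thepsi$ a whole negative component, whose bipositions have inner positions of the form $1^{\ttj(n)}\cdot k\cdot 1^{\tti(h)}$ (with $k=\code{\epsi(h)}$), not $1^i$. Your conclusion that $\thepsi$ cannot source $\loplus\rrew$ happens to survive (the positive occurrences all have inner position $1^i$, and the negative ones would only give $\lomin\rrew$), but the justification you give is wrong and the description of $\thepsi$ you rely on later is incomplete.

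Second, and more seriously, the ``analogue of Lemma~\ref{l:rbot}'' that you need does not hold, so the rewriting of a normal spurious chain into one consisting solely of $\loplus\rrew$ steps cannot be carried out as for nihilating chains. In the nihilating case the crucial fact is that $\theta\rrewto\thbot$ or $\theta\rrewtt\thbot$ forces $\theta=\thbot$, allowing the last non-$\rrew$ step to be \emph{discarded}. Here, $\theta\rrewto\thr{\epsi,1}$ gives $\theta=\thr{\epsi,1\cdot\ell}$, and $\theta\rrewtt\thr{\epsi,1}$ gives (via Lemma~\ref{l:rew-rewto}) $\theta=\thr{\epsi,\ell}$ with $\ell\geqs 2$, \emph{not} something of the form $\thr{\epsi,1\cdot c'}$ as you claim. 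Neither equals $\thr{\epsi,1}$, so nothing can be discarded and the algorithm of Proposition~\ref{prop:no-normal-chain} stalls. Your proposed ``induction on $|c'|$'' does not repair this: it would require already knowing that these predecessor threads lie outside $\bbBm$, which is precisely what is at stake.

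The paper does not attempt this reduction. Instead, after normalising the chain and splitting on $\Pol{\epsi,\epsi}$, in the positive case it \emph{explicitly} computes the set $B_0$ of bipositions whose thread is reachable from $\thepsi$ by $\rrewto^*$ (this set is finite and has the concrete description $\{(\epsi(n),1^i)\mid i\leqs\tti(n)\}$ together with the negative component when it exists), and then checks directly that $B_0$ is already closed under $\rrewtt,\rrewr,\rrewdown$ and meets neither $\dom{\loplus\!\rrew}$ nor $\codom{\rrew}$. Hence a normal chain starting at $\thepsi$ can never leave $B_0$, yet $(\epsi,1)\notin B_0$. This bypasses entirely the need for a $\thr{\epsi,1}$-analogue of Lemma~\ref{l:rbot}.
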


\begin{proofsketch} \fublainv{à changer avec \scrR ou Dw au lieu de \scrD}
  Let $t$ be a term s.t. $\thepsi \rrewb^* \thr{\epsi,1}$ \ie s.t.
  $(\epsi,1)\in \bbBm$ (see Corollary\;\ref{corol:typability-and-thepsi}), which implies that the type of $t$ cannot be a type variable by the proof of this same corollary. We prove that $t$ is of order $\geqslant 1$, which is enough to conclude.

  For that, we consider a \textit{$\lam$-chain} \ie a chain of the from $\thepsi=\theta_0\rrewb \ldots \rrewb \theta_m=\thr{\epsi,1}$, of minimal length. The notion of normal chains extends to $\lam$-chains and by the collapsing strategy, we can replace $t$ by a reduct $t'$ s.t. the considered chain is normal.

Using \textit{ad hoc} interaction lemmas, we prove that the normality of the chain entails $\thepsi\rrewr \thr{\epsi,\epsi}$. Collapsing then redex towers, we may reduce $t'$ to an abstraction $\lx.t^{\prime\prime}$, thus proving that $t$ is of order $\geqslant 1$.  
  %  We show that the minimality of $m$ entails that the first link of the chain must be $\theta_\epsi \loplus \rrewr \theta_1$. Using a finite reduction (and thus replacing $t'$ by a reduct $t^{\prime\prime}$),
%  Using the normality of the chain and \textit{ad hoc} interaction lemmas, we show that we  have $(\epsi,\epsi)\rewr (\epsi,1):\theta_1$. By definition of $\rewr$, this means that $t^{\prime\prime}$ is an abstraction, which concludes the proof.
\end{proofsketch}

\begin{theorem}
  \label{th:order}
  Let $t$ be a term of order $n$. Then there is a context $\Gam$ and a type $\tau$ of order $n$  (see Sec.\,\ref{ss:typing-examples}) such that $\juGtt$ is derivable in system $\scrR$.
\end{theorem}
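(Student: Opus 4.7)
\begin{proofsketch}
The proof proceeds by cases on $n\in\bbN\cup\{\infty\}$, building on Lemma~\ref{l:zero-term-o} and subject expansion (Proposition~\ref{s:S-sr-se}).

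\textbf{Case $n=0$.} This is precisely Lemma~\ref{l:zero-term-o}: there is a context $C$ such that $\ju{C}{t:\tv}$ is $\ttS$-derivable. Collapsing sequences to multisets yields an $\scrR$-derivation, and the type variable $\tv$ has order $0=n$.

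\textbf{Case $n\geq 1$ finite.} By the definition of order, $t\bred^* \lx_1\ldots x_n.u$ for some variables $x_1,\ldots,x_n$ and term $u$. Since $n$ is the \emph{exact} order of $t$, the term $u$ must be a zero term (otherwise $u\bred^* \lam y.v$ would give $t$ order $\geqs n+1$). Apply the base case to $u$: there exists $C$ with $\ju{C}{u:\tv}$ in $\ttS$. Iterating the $\abs$-rule $n$ times yields $\ju{C'}{\lx_1\ldots x_n.u : F_1\rew \ldots\rew F_n\rew \tv}$, where $F_i = C(x_i)$ is a sequence type and $C'$ is $C$ with the $x_i$'s removed. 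The resulting type has exactly $n$ top-level arrows, hence order $n$. Subject expansion along $t\bred^* \lx_1\ldots x_n.u$ then transfers this typing to $t$, and the collapse from $\ttS$ to $\scrR$ preserves the order of the type.

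\textbf{Case $n=\infty$.} This is the delicate case and calls for a coinductive construction, in the spirit of the typing of $\cul$ with $\emul\rew\emul\rew\ldots$ given in \S\,\ref{ss:typing-examples}. Since $t$ has order $\infty$, in particular $t\bred^* \lx_0.t_1$ for some $x_0,t_1$. A confluence argument shows that $t_1$ must again have infinite order: for every $k$, $t\bred^* \lx_0\ldots x_k.u_k$ by hypothesis, and a common reduct of this with $\lx_0.t_1$ exhibits $t_1$ reducing to a term with $k$ leading abstractions. Set up a coinductive scheme: given $s$ of infinite order, find in finitely many $\beta$-steps a reduct $\lx.s^\star$ with $s^\star$ of infinite order; coinductively obtain $\Pi^\star \tri \ju{C^\star}{s^\star : \tau^\star}$ with $\tau^\star$ of infinite order; apply the $\abs$-rule to type $\lx.s^\star$ with $C^\star(x)\rew\tau^\star$, still of infinite order; and apply subject expansion along the finite reduction $s\bred^* \lx.s^\star$. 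This produces a coinductive $\scrR$-derivation typing $t$ with a type $\tau$ satisfying a coinductive unfolding $\tau = F\rew\tau'$ where $\tau'$ is again of infinite order. Guardedness is ensured by the fact that each coinductive step contributes a fresh top-level arrow to $\tau$.

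The hardest point is the infinite case: we must ensure that (i) the extraction of the next leading abstraction $\lx.s^\star$ takes only finitely many $\beta$-steps, which follows from the order being $\geqs 1$ at each stage; (ii) the confluence argument correctly propagates the infinite-order property from $s$ to $s^\star$; and (iii) the coinductive derivation so assembled is a legitimate $\scrR$-object, which is so because $\scrR$ admits both coinductive types and coinductive derivations, and the recursion is guarded by the head arrow introduced at each step.
\end{proofsketch}
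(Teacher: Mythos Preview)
Your treatment of the finite case ($n\in\bbN$) matches the paper's: reduce to $\lx_1\ldots x_n.u$ with $u$ a zero term, invoke Lemma~\ref{l:zero-term-o} to type $u$ with a type variable, stack $n$ $\abs$-rules, and pull back along $t\bred^*\lx_1\ldots x_n.u$ by subject expansion.

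The infinite case, however, has a genuine gap. Your coinductive scheme purports to build a derivation $\Pi$ typing $t$ from a derivation $\Pi^\star$ typing $s^\star$ (via one $\abs$ and finitely many expansion steps), where $\Pi^\star$ is itself obtained by the same scheme from some $\Pi^{\star\star}$, and so on \emph{ad infinitum}. But in the paper, $\scrR$-derivations are defined \emph{inductively} (the type grammar is coinductive, the derivation rules are not): a derivation is a well-founded tree whose depth is bounded by the size of the finite term it types. Your recursion never bottoms out, so no actual derivation is ever produced; subject expansion needs an existing derivation of the reduct as input, and you never supply one. The guardedness you invoke concerns the \emph{type} $\tau$, not the derivation, and does not rescue the argument.

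The paper's argument for $n=\infty$ is both simpler and avoids this trap entirely: by Theorem~\ref{th:comp-unsoundness-R-S} there already exists some $\scrR$-derivation $\ju{\Gam}{t:\tau}$; for every $k$, subject \emph{reduction} transports this typing to $\lx_1\ldots x_k.u_k$, and the $\abs$-rule forces $\tau$ to have order $\geqs k$. Hence $\tau$ has infinite order. No coinduction on derivations is needed; the main theorem does all the work.
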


\begin{proofsketch}
When $n=\infty$, this comes from Theorem~\ref{th:comp-unsoundness-R-S}, subject reduction and the $\abs$-rule. When $n\in \bbN$, we use Lemma~\ref{l:zero-term-o} and (finite) subject expansion (Proposition\;\ref{s:S-sr-se}).
\end{proofsketch}}
\techrep{
\noindent The goal of this section is to prove:
  
\begin{proposition}
  \label{prop:zero-term-o}
Let $t$ be a zero term and $o$ a type variable, then there is context $C$ such that $\ju{C}{t:o}$ is derivable in system $\ttS$.
\end{proposition}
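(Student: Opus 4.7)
\begin{proofsketch}
The plan is to prove the contrapositive: if no $\ttS$-derivation assigns a single type variable $o$ to $t$, then $t$ is not a zero term, i.e.\ $t \bred^* \lx.t''$ for some $x$ and $t''$. Fix an injection $\code{\cdot}$ and the associated minimal bisupport candidate $\bbBm$ from Corollary~\ref{corol:typability-and-thepsi}. A $\code{\cdot}$-derivation $P \tri \juCtt$ with $T=\tv$ exists exactly when one can assemble a non-empty $B \subseteq \bbB^t \setminus \{\bbot\}$ closed under $\iden$ and $\rrewb$ with $(\epsi,1)\notin B$ (since $T=\tv$ forces $\supp{\ttT^P(\epsi)} = \{\epsi\}$). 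Reading the proof of Proposition~\ref{prop:charac-bisupp} on $\bbBm$, such a $B$ exists iff $(\epsi,1)\notin \bbBm$, i.e.\ iff $\thepsi \not \rrewb^* \thr{\epsi,1}$. So under the contrapositive hypothesis we can fix a \textbf{$\lam$-chain} $\calC:\thepsi=\theta_0 \rrewb \theta_1 \rrewb \ldots \rrewb \theta_m = \thr{\epsi,1}$ of minimal length.

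Next I would normalize $\calC$ in the same spirit as in \S\;\ref{s:normalizing-chains}. The target $\thr{\epsi,1}$ behaves well under reduction: since the position $\epsi$ is never destroyed, Lemma~\ref{l:ref-red} shows that for any reduction step $t\bred t'$ its quasi-residual is again $\thr{\epsi,1}$, and $\QRes$ maps the chain $\calC$ to a $\lam$-chain for $t'$ (of length at most $m$). The collapsing strategy (Proposition~\ref{prop:collapsing-strategy}) applied to the eventual negative left-consumptions along $\calC$ thus produces a reduct $t'$ of $t$ equipped with a \emph{normal} $\lam$-chain of minimal length from $\thepsi$ to $\thr{\epsi,1}$. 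Since zero-ness is preserved by reduction, $t'$ is still a zero term.

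Now I would apply the interaction lemmas, but with the target $\thr{\epsi,1}$ in place of $\thbot$. The key observation analogous to Lemma~\ref{l:form-iden} is that every ascendant of $(\epsi,1)$ has inner position of the form $1^i$ or $\epsi$, and in the latter case that ascendant sits at the root of an $\abs$-rule. From this one deduces an analogue of Lemma~\ref{l:rbot}: in a normal $\lam$-chain, no $\rrew$, $\werr$, $\rrewtt$ or $\rrewdown$ step can feed into $\thr{\epsi,1}$ (because any such step would require an argument track in the inner position of the target, which $(\epsi,1)$ and its ascendants lack), and $\rrewto \thr{\epsi,1}$ forces the source to already equal $\thr{\epsi,1}$. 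Combined with Lemmas~\ref{l:rrew-rrewto}, \ref{l:rrew-rrewr-rrewdown}, \ref{l:rrew-rrewtt} and minimality of $\calC$, the only possibility that survives is that $\calC$ reduces to a single step $\thepsi \rrewr \thr{\epsi,1}$. By definition of $\rewr$ this step has the shape $(a,\epsi) \rewr (a,1)$ with $(a,\epsi) \iden (\epsi,\epsi)$ and $t'(a) = \lx$ for some variable $x$. Tracing the $\iden$-path from $(\epsi,\epsi)$ to $(a,\epsi)$ exhibits, along the spine of $t'$, a redex-tower sequence in the sense of Definition~\ref{def:towered-redex-sequence} separating the root from the abstraction at $a$; collapsing those towers by finitely many head-reduction steps (exactly as in Lemma~\ref{l:collapsing-strategy}) produces a reduct of $t'$, hence of $t$, whose root constructor is $\lx$. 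This contradicts the hypothesis that $t$ is a zero term.

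The main obstacle is the passage from Lemma~\ref{l:rbot} to its analogue for $\thr{\epsi,1}$: unlike $\thbot$, the thread $\thr{\epsi,1}$ may contain positive occurrences, so the argument based on the persistence of an argument track in the inner position cannot be reused verbatim and must be refined to exploit instead the prefix-$1^i$ shape of the inner positions of its ascendants and the fact that an $\abs$-rule at the root is exactly the desired conclusion.
\end{proofsketch}
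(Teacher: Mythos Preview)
Your overall shape—take a minimal $\lam$-chain, normalize it via the collapsing strategy, then analyse the resulting normal chain—matches the paper. The gap is precisely where you flag it, in the proposed analogue of Lemma~\ref{l:rbot}, and it is a real gap rather than a mere refinement. You are right that $\rrew$ cannot target $\thr{\epsi,1}$ (every outer position of this thread lies in $\set{0,1}^*$) and that, in a normal chain, $\werr$ cannot either (the only occurrences of $\thr{\epsi,1}$ whose inner position begins with an argument track are negative, so that step would be $\werr\romin$). But the remaining three claims fail: if $(a,1^i)$ is an occurrence of $\thr{\epsi,1}$ with $i\geqslant 1$, then $(a,1^{i-1}\cdot\ell)\rewtt(a,1^i)$ for every $\ell\geqslant 2$ and the source thread is generically different; likewise $(a,1^i\cdot k)\rewto(a,1^i)$ for every $k$, and $(a',c)\rewdown(a,\epsi)$ whenever the thread has an occurrence $(a,\epsi)$. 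Hence the normal chain does not collapse to a single $\rrewr$ step, and your argument stalls at exactly the point you identify as the main obstacle.

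The paper argues from the \emph{source} $\thepsi$ rather than the target, via a case split on $\Pol{\epsi,\epsi}$. If $\Pol{\epsi,\epsi}=\ominus$, then $\Asc{\epsi,\epsi}=(a,\epsi)$ with $t'(a)=\lx$, and a finite segment of head reduction collapsing the redex-tower sequence between $\epsi$ and $a$ already turns $t'$ into an abstraction; the chain is never inspected. The case $\Pol{\epsi,\epsi}=\oplus$ is shown to be impossible: one forms $B_0$, the closure of $\thepsi$ under $\rrewto$, gives an explicit description of it (the bipositions $(\epsi(n),1^i)$ with $i\leqslant\tti(n)$ along the spine of $t'$, together with a companion negative set when $\thepsi$ also has negative occurrences), checks that $B_0$ is closed under $\rrewr$, $\rrewdown$, $\rrewtt$ and disjoint from $\dom{\loplus\!\rew}\cup\codom{\rew}$, and observes that $(\epsi,1)\notin B_0$ since $\tti(0)=0$. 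No normal $\lam$-chain can then leave $B_0$, contradiction. The decisive move is replacing ``what can reach the target'' (too loose to give a trap) by ``what the source can reach'' (which does).
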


\noindent Exactly as in \S\;\ref{s:order-discr-Ro}, this proposition easily entails:

\begin{theorem}
  \label{th:order}
  Let $t$ be a term of order $n$.
 Then there is a context $\Gam$ and a type $\tau$ of order $n$  (see Sec.\,\ref{s:typing-examples}) such that $\juGtt$ is derivable in system $\scrR$.
\end{theorem}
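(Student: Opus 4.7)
The plan is to split the argument into two cases, $n\in \bbN$ and $n=\infty$, reducing each to machinery already established: Lemma~\ref{l:zero-term-o}, Theorem~\ref{th:comp-unsoundness-R-S}, and the subject reduction / expansion of $\scrR$ (Proposition~\ref{s:S-sr-se}), together with the syntax-directedness of the $\abs$-rule.

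For the finite case, I would first observe that, by the very definition of order, there is a reduction $t \bred^* \lx_1 \ldots \lx_n.u$ in which $u$ is itself a zero term: otherwise a further abstraction could be exposed beneath, contradicting the supremal definition of $n$. Applying Lemma~\ref{l:zero-term-o} to $u$ yields a $\ttS$-derivation $\ju{\Gam_u}{u:\tv}$ for some type variable $\tv$, which collapses (sequences to multisets) to a $\scrR$-derivation. Then $n$ successive applications of the $\abs$-rule produce a derivation $\ju{\Gam}{\lx_1 \ldots \lx_n.u : \IM_1 \rew \ldots \rew \IM_n \rew \tv}$, a type of order exactly $n$ --- this is where the insistence on typing the base with a type variable rather than with an arrow matters, as otherwise spurious arrows would sneak in and inflate the order. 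Finite subject expansion then lifts this typing along the reduction back to $t$, giving a $\scrR$-derivation $\ju{\Gam}{t:\tau}$ with $\tau$ of order $n$.

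For the infinite case, I would invoke Theorem~\ref{th:comp-unsoundness-R-S} to obtain some $\scrR$-derivation $\ju{\Gam}{t:\tau}$, and argue by type-shape propagation. For each $k\in \bbN$, the infinity of the order of $t$ yields a reduction $t \bred^* \lx_1 \ldots \lx_k.u_k$; subject reduction transports the typing to $\ju{\Gam}{\lx_1 \ldots \lx_k.u_k:\tau}$. Since the $\abs$-rule is the only $\scrR$-rule whose conclusion types an abstraction and always produces an arrow on top of the premise's type, $\tau$ must be of the form $\IM_1 \rew \ldots \rew \IM_k \rew \tau_k$ for some $\IM_i$ and $\tau_k$. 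Hence $\tau$ has at least $k$ top-level arrows for every $k$, so its order is infinite.

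The only delicate point I expect is the justification, in the finite case, that the reduct $u$ underneath the $n$ leading abstractions really is a zero term: this follows from the supremal characterisation of order (any further abstraction exposable from $u$ would contradict maximality of $n$), but deserves to be spelled out. Everything else is a routine combination of the already-established (anti)reduction properties of $\scrR$ and the syntactic shape of the $\abs$-rule.
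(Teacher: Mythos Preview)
Your proposal is correct and follows essentially the same approach as the paper: the finite case is handled by reducing to $\lx_1\ldots\lx_n.u$ with $u$ a zero term, invoking Lemma~\ref{l:zero-term-o} on $u$, stacking $n$ $\abs$-rules, and pulling back via subject expansion; the infinite case is handled by taking any $\scrR$-typing from Theorem~\ref{th:comp-unsoundness-R-S} and using subject reduction plus the shape of $\abs$ to force infinitely many top-level arrows. Your explicit remark that $u$ must genuinely be a zero term (lest the order exceed $n$) is a point the paper leaves implicit, so your write-up is, if anything, slightly more careful there.
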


\begin{proof}
  Let $t$ be a term of order $\geqslant n$ and (by Theorem\;\ref{th:comp-unsoundness-R-S}) $\Pi$ a $\scrR$-derivation concluding with $\ju{\Gam}{t:\tau}$.

  There is a term $t'=\lx_1\ldots \lx_n.t'_0$ such that $t\bred^* t'$. By subject reduction, there is a derivation $\Pi'$ concluding with $\ju{\Gam}{\lx_1\ldots \lx_n.t'_0:\tau}$. By the $\abs$-rule, $\tau$ must be of order $\geqslant n$. This implies that the statement is true for terms of infinite order.\\
  
  Now, assume that $t$ is of order $n<\infty$.  There is a term $t'=\lx_1\ldots \lx_n.t'_0$ such that $t\bred^* t'$ and $t'_0$ is of order 0.

  Let $\tv \in \TypeV$. By Propositions\;\ref{prop:zero-term-o} and \ref{prop:collapse-S-D}, there is a $\scrR$-derivation $\Pi'_0$ concluding with $\ju{\Gam_0}{t'_0:\tv}$. Using $n$ $\abs$-rules, we get a derivation $\Pi'$ concluding with a judgment of the form $\ju{\Gam}{\lx_1\ldots \lx_n.t'_0:B}$, with $B$ of order $n$.

  By subject expansion, there is a derivation $\Pi$ concluding with 
$\ju{\Gam}{t:B}$. The statement is thus proved.
\end{proof}

Now, let us prove Proposition\;\ref{prop:zero-term-o}. For that, we consider $t$ be a term such that $\thepsi \rrewb^* \thr{\epsi,1}$ \ie s.t.
  $(\epsi,1)\in \bbBm$ (see Corollary\;\ref{corol:typability-and-thepsi}), which implies that the type of $t$ cannot be a type variable by the proof of this same corollary. We prove that $t$ is of order $\geqslant 1$, which is enough to conclude.

  We then consider a \textit{$\lam$-chain} \ie a chain of the form $\thepsi=\theta_0\rrewb \ldots \rrewb \theta_m=\thr{\epsi,1}$, of minimal length. The notion of normal chains extends to $\lam$-chains and by the collapsing strategy, we can replace $t$ by a reduct $t'$ such that the considered chain is normal.

  Assume $\Pol{\epsi,\epsi}=\ominus$. Then, by Lemma\;\ref{l:epsi-epsi-neg}, there is a reduct of $t'$ of the form $\lx.t\secu$. So $t\bred^* \lx.t\secu$ and we may conclude.\\

 We prove now \textit{ad absurdum} that $\Pol{\epsi,\epsi}=\oplus$ is impossible. So we assume $\Pol{\epsi,\epsi}=\oplus$ for the remainder of the proof.

 We need to consider $B_0=\set{(a,c)\in \bbA\;|\; \thepsi \rrewto^* \thr{a,c} }$. To obtain a contradiction, we prove that $B_0$ is closed under normal chains but that $(\epsi,1)$ is not in $B_0$.
% the idea is to prove that the only way to get out $B_0$ (in order to reach $(\epsi,1)$) is go through a relation $\rewr$, indicating an abstraction, but that $B_0$ is closed under $\rewr$.

  We have to consider two different subcases: when $\thepsi$ also has some negative occurrences and when it does not. In both cases, we should prove that no relation allows exiting $B_0$ (except $\lomin\!\rew$), reaching thus a contradiction. %We split the cases because that the form of $B_0$ varies. However, the first case ($\thepsi$ has only positive occurrences) may also be seen as a particular case of the second and it may be skipped.\\

  However, the second case may be see as a particular case of the first one and may be skipped. So we assume now that there is $\p \iden (\epsi,\epsi)$ such that $\Pol{\p}=\omin$.
  
\ignore{  %%% Le deuxième cas !!!!! inutile, mais à garder
  \textbullet~ Assume first that $\thepsi$ occurs only positively (\ie $\p \iden \p$ implies $\Pol{\p}=\oplus$). This implies, with $h\geqslant 0$ such that $\Asc{\epsi,\epsi}=(\epsi(h),1^{\tti(h)})$, that $t(a_h)=x$ for some $x$ such that, for all $0\leqslant n<h$, $\epsi(n)\neq \lx$. 

  We prove that $B_0=\set{(\epsi(n),1^i)\,|\, n\leqslant h,\, i \leqslant \tti(n)}$.

\begin{lemma}
\label{l:thepsi-pos}
  Assume that $\forall \p\in \thepsi$, $\Pol{p}=\oplus$. Then $B_0=\set{(\epsi(n),1^i)\,|\, n\leqslant h,\, i \leqslant \tti(n)}$.
  Moreover, $B_0$ is closed under $\rewdown,\, \rewr$ and $B_0$ has an empty intersection with:
  \[\dom{\rewp}\cup \codom{\rewp} \dom{\rew} \cup \codom{\rew}\cup  \dom{\rewtt}\]
\end{lemma}

\begin{proof}
  Let $B_0'=\set{(\epsi(n),1^i)\,|\, n\leqslant h,\, i \leqslant i}$. Clearly, $B'_0\subseteq B_0$. We study now $B'_0$ and prove that $B'_0=B_0$.
\begin{itemize}
\item $B'_0$ is clearly closed under $\idena$ and $\rewdown$.  
\item $\p_0 \rewtt \p,\ \p_0\rew \p$ or $\p_0\rewp \p$ with $\p_0=(a_0,c_0)\in B'_0$ is impossible, because $c_0=1^i$ for some $i$ (no argument track in $c_0$).
\item $(a,c) \rewp (\epsi(n),c_0)\in B'_0$ would imply that $a_0$ is an $\ax$-rule (so that $n=h$ and $a_0=\epsi(h)$ and $t(a_0)=x$) and $t(a)=\lx$, which contradicts our assumption.
\item $\p\rew (\epsi(n),c_0)\in B'_0$ is impossible, because $\epsi(n)\in \set{0,1}^*$ (no argument track in $\epsi(n)$).
\item Assume $\p_0=(\epsi(n),\epsi)\rewr (\epsi(n),1)=\p$ so that $\p_0$ in $B'_0$. In order to have $\p\in B'_0$, it is enough  to prove that $\tti(n)\geqslant 1$. If we had $\tti(n)=0$, then we would have $\asc^n(\epsi,\epsi)=(\epsi(n),\epsi)$. Moreover, $t(\epsi(n))=\ly$ (by $\p_0\rewr \p$), so that $\asc^{n+1}(\epsi,\epsi)$ is not defined \ie $\Asc{\epsi,\epsi}=(\epsi(n),\epsi)$ and $\Pol{\epsi,\epsi}=\ominus$, which contradicts our assumption.  
\end{itemize}
Points 1, 2, 3 are enough to prove that $B'_0$ is stable under $\iden$. Moreover, $B'_0$ contain $\thepsi$ and is clearly stable under $\rewto$. So $B'_0\supseteq B_0$. Thus, $B_0=B'_0$ and the Lemma is proven.
\end{proof}

By the Lemma, $(\epsi,1)\notin B_0$, so let $k$ be minimal such that $\theta_k\notin B_0$.
We are now interested in the relation $\theta_k\rrewb \theta_{k+1}$ in the chain. According to the Lemma, this cannot be $\rrewtt,\ \rrewdown,\ \rrew,\ \werr$ or $\rrewr$. Contradiction. So $\thepsi$ cannot have only positive occurrences.\\

\textbullet~ According to the last point, if $\Pol{\epsi,\epsi}=\oplus$ and $\thepsi \rrewb^* \thr{\epsi,1}$, then $\thepsi$ also occurs negatively. }

This implies, with $h$ such that $\Asc{\epsi,\epsi}=\asc^h(\epsi,\epsi)$, that $t(\epsi(h))=x$ and  there is $0\leqslant \hlam < h$ such that $t(\epsi(\hlam))=\lx$.

Thus,
$(\epsi(\hlam),k \cdot 1^{\tti ( h )})\rewp \p_h=\Asc{\epsi,\epsi}$ with $k=\code{\epsi(h)}$.
For all $0\leqslant \hlam$ such that it is defined, we write $\pepsi'(n)=(\epsi(n),1^{\ttj (n) \cdot k \cdot \tti (h) })$ for the unique $\pepsi'(n)\in \bbB$ such that $\asc^{\hlam-n}(\pepsi'(n))=(\epsi(\hlam),k\cdot 1^{\tti(h)})$. The form $(\epsi(n),1^{\ttj(n) \cdot k\cdot \tti(h)})$ is justified by Lemma\;\ref{l:form-idena}.
We write $h_0$ for the minimal integer such that $\pepsi'(h_0)$ is defined.

Notice that $\ttj(\hlam)=0<\tti(\hlam)$: if we had $\tti(\hlam)=0)$, we could prove that $\Pol{\epsi,\epsi}=\ominus$, since this implies
 $\asc^{\hlam}\epsi,\epsi)=(\epsi(\hlam),\epsi)$. Moreover, $t(\epsi(\hlam))=\lx$, so that $\asc^{n+1}(\epsi,\epsi)$ is not defined \ie $\Asc{\epsi,\epsi}=(\epsi(\hlam),\epsi)$ and $\Pol{\epsi,\epsi}=\ominus$, which contradicts our assumption.

%Notice that $\ttj(\hlam)=0<\tti(\hlam)$: if we had $\tti(\hlam)=0)$, we could prove that $\Pol{\epsi,\epsi}=\ominus$, as in the last point of the proof of Lemma\;\ref{l:thepsi-pos}).

From $\ttj(\hlam)<\tti(\hlam)$, by an easy induction, we deduce that, for all $h_0\leqslant n \leqslant \hlam$, $\ttj(n)<\tti(n)$.

\begin{lemma}
  Assume that $\Pol{\epsi,\phdadd{\epsi}}=\oplus$. Then $B_0=\set{(\epsi(n),1^i)\,|\, n\leqslant h,\, i \leqslant \tti(n)}\cup \set{(\epsi(n),1^{\ttj(n)\cdot k \cdot i}\,|\, \hlam \leqslant n\leqslant h,\, i\leqslant \tti(h)}$. 
  \\
  Moreover, $B_0$ is closed under $\rewdown,\, \rewr,\ \rewtt$ and $B_0$ has an empty intersection with $\dom{\loplus\!\rew} \cup \codom{\rew}$
\end{lemma}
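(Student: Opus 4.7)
My plan is to follow the template of the earlier Lemma~\ref{l:thepsi-pos} (purely positive case), adapted to the mixed polarity setting. Let $B'_0$ denote the set appearing on the right-hand side of the equality. I will first establish $B_0 = B'_0$ by mutual inclusion, and then verify the three closure and the intersection properties on this explicit description.

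First I would show $B'_0 \subseteq B_0$. By definition of ascendance, every $\pepsi(n)=(\epsi(n),1^{\tti(n)})$ for $n\leqs h$ lies in $\thepsi$, so repeatedly applying $\rewto$ to strip trailing $1$'s yields each $(\epsi(n),1^i)$ with $i\leqs \tti(n)$; this gives the first component. For the second family, the polar inversion $(\epsi(\hlam),k\cdot 1^{\tti(h)})\rewp (\epsi(h),1^{\tti(h)})=\pepsi(h)$ shows that $\pepsi'(\hlam)$ already lies in $\thepsi$, and iterating $\asc$ downwards gives $\pepsi'(n)\in\thepsi$ for every $n$ in the relevant range; stripping trailing $1$'s via $\rewto$ then recovers every biposition of the form $(\epsi(n),1^{\ttj(n)}\cdot k\cdot 1^i)$ with $i\leqs \tti(h)$.

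Next I would prove $B_0\subseteq B'_0$ by showing that $B'_0$ is closed under both $\iden$ and $\rewto$ and contains $\thepsi$; since $B_0$ is the smallest $\rrewto$-closed set of bipositions containing $\thepsi$, this gives the reverse inclusion. Closure under $\rewto$ is a case analysis: stripping a suffix of a first-form element stays in the first form, while stripping from the second form either shortens the $1^i$ tail or strips past the $k$ to reach $(\epsi(n),1^{\ttj(n)})$, which lies in the first form thanks to the inequality $\ttj(n)<\tti(n)$ established just before the lemma. Closure under $\iden$ uses Lemma~\ref{l:form-idena} together with the explicit description $\thepsi=\{\pepsi(n)\}\cup\{\pepsi'(n)\}$: every $\iden$-equivalent biposition of a $B'_0$ element is obtained by following $\asc$ and possibly one crossing of the unique polar-inversion link at $(\epsi(\hlam),k\cdot 1^{\tti(h)})\rewp\pepsi(h)$, and the resulting inner position always lands in one of the two specified forms.

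Closure under $\rewdown$, $\rewr$, $\rewtt$ is then immediate from the description. For $\rewdown$, the target $(a,\epsi)$ satisfies $a\leqs \epsi(n)$, so $a=\epsi(n')$ for some $n'\leqs n$ and thus $(a,\epsi)$ is a first-form element with $i=0$. For $\rewr$: if $(\epsi(n),\epsi)\in B'_0$ with $t(\epsi(n))=\lx$, then necessarily $n<h$ (as $t(\epsi(h))=x$) and $\tti(n)\geqs 1$ (since $\asc$ must be defined at $\pepsi(n)$), hence $(\epsi(n),1)\in B'_0$. For $\rewtt$, the only relevant source is the second-form biposition $(\epsi(n),1^{\ttj(n)}\cdot k)$, whose $\rewtt$-target $(\epsi(n),1^{\ttj(n)+1})$ lies in the first form because $\ttj(n)+1\leqs \tti(n)$. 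Finally, for the empty intersection claim: every outer position of a $B'_0$ element is some $\epsi(n)\in\set{0,1}^*$, so it cannot end with a track $\geqs 2$ and $B'_0\cap\codom{\rew}=\eset$; and any inner position in $B'_0$ beginning with an argument track must be $k\cdot 1^i$ (i.e.\ second family with $n=\hlam$), but then $\asc$ is undefined at that biposition (it starts with a non-$1$ track at a $\lx$-node), so its polarity is $\omin$, not $\oplus$, placing it outside $\dom{\loplus\!\rew}$. The main obstacle is the $\iden$-closure argument for the second family, where one must rule out any additional polar inversion that would carry an element of $B'_0$ outside of the two prescribed forms; the key point is that the axiom rule above $\epsi(h)$ is reached by exactly the axiom track $k=\code{\epsi(h)}$, so no other polar-inversion edge links the two families within $\thepsi$.
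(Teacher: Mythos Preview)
Your proposal is correct and follows essentially the same approach as the paper: define the explicit candidate set, show it contains $\thepsi$ and is closed under $\iden$ and $\rewto$ (hence equals $B_0$), then verify the remaining closure and disjointness properties by case analysis using the key inequality $\ttj(n)<\tti(n)$. Your treatment is in fact slightly more explicit than the paper's on two points: the $\iden$-closure (which the paper dismisses as ``clear'') and the $\loplus\rrew$-disjointness, where you pin down that only the $n=\hlam$ second-family elements can have an inner position beginning with an argument track, while the paper instead records the broader observation that any $B_0$-element whose inner position \emph{contains} an argument track is negative.
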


\begin{proof}
  Let $B_0'=\set{(\epsi(n),1^i)\,|\, n\leqslant h,\, i \leqslant i}$ and
  $\set{(\epsi(n),1^{\ttj(n)\cdot k \cdot i}\,|\, \hlam \leqslant n\leqslant h,\, i\leqslant \tti(h)}$

  Clearly, $B'_0\cup B\secu_0\subseteq B_0$. We study $B'_0\cup B\secu_0$ and prove that $B'_0\cup B\secu_0=B_0$.

  We notice first that if $\p=(a,c) \in B'_0\cup B\secu_0$ is such that $c$ contains an argument track, then $\p\in B\secu_0$ and $\Pol{\p}=\omin$.
\begin{itemize}
\item $B'_0\cup B\secu_0$ is clearly closed under $\iden$ and $\rewdown$.  
\item Closure under $\rewto$: the only problematic case is $\p_0=(\epsi(n),1^{\ttj(n)\cdot j}\cdot k)\rewto (\epsi(n),1^{\ttj(n)\cdot j}\cdot k)$. Since $\p_0\in B\secu_0$, we must show that $\p\in B'_0$. This is guaranteed by $\ttj(n)<\tti(n)$.
\item $\p_0=(a_0,c_0) \loplus\rew \p$ with $\p_0\in B'_0\cup B\secu_0$, because $\p_0\rew \p$ implies that $c_0$ holds an argument track, and thus, as notice above, that $\Pol{\p_0}=\ominus$.
\item $\p\rew (\epsi(n),c_0)\in B'_0\cup B\secu_0$ is impossible, because $\epsi(n)\in \set{0,1}^*$ (no argument track in $\epsi(n)$).
\item Closure under $\rewtt$: assume that $\p_0=(\epsi(n),1^{\ttj(n)}\cdot k)\rewtt (\epsi(n),1^{\ttj(n)+1})$. Since $\p_0\in B\secu_0$, we must show that $\p\in B'_0$. This is guaranteed by $\ttj(n)<\tti(n)$, so that $\ttj(n)+1\leqslant \tti(n)$.
\item Assume $\p_0=(\epsi(n),\epsi)\rewr (\epsi(n),1)=\p$ so that $\p_0$ in $B'_0$. In order to have $\p\in B'_0$, it is enough  to prove that $\tti(n)\geqslant 1$. If we had $\tti(n)=0$, then we could prove that $\Pol{\epsi,\epsi}=\ominus$ as is the proof of $\tti(\hlam)>0$ above.
\end{itemize}
By point 1, $B'_0\cup B\secu_0$ is stable under $\iden$. Moreover, $B'_0\cup B\secu_0$ contain $\thepsi$ and is stable under $\rewto$ by point 2. So $B'_0\cup B\secu_0\supseteq B_0$. Thus, $B_0=B'_0\cup B\secu_0$ and the Lemma is proven.
\end{proof}

By the Lemma, $(\epsi,1)\notin B_0$, so let $k$ be minimal such that $\theta_k\notin B_0$.
We are now interested in the relation $\theta_k\rrewb \theta_{k+1}$ in the chain. According to the Lemma, this cannot be $\rrewtt,\ \rrewdown,\ \loplus\rrew,\ \werr$ or $\rrewr$. Contradiction since the chain is normal.\\

Thus, $\Pol{\epsi,\epsi}=\ominus$, and we have proven above that this implied that $t$ was of order $\geqslant 1$. This concludes the proof of Proposition\;\ref{prop:zero-term-o}.}

  %Using a finite reduction (and thus replacing $t'$ by a reduct $t\secu$), we show that we can even have $(\epsi,\epsi)\rewr (\epsi,1):\theta_1$. By definition of $\rewr$, this means that $t\secu$ is an abstraction, which concludes the proof.

\noindent \textit{Conclusion:} 
\label{ccl-comp-unsound}
 We proved that every term is typable in a reasonable relevant intersection type system (Theorem\;\ref{th:comp-unsoundness-R-S}).
 If we take the typing rules of $\ttS$ \textit{coinductively} (and not only the type grammar), we can also type every infinitary $\lam$-term~\cite{KennawayKSV97}. %The non-inductive definition of residuals (Sec.\,\ref{s:residuals-c-unsound}) can be reused in the coinductive framework. % The main difference is that we
%There may be infinite sequences of ascendants (considered as positive), but the collapsing strategy (Sec.\,\ref{s:collapsing-strategy}) is still finite.
% There may be infinite sequences of ascendants and the definition of polarity must be slightly extended, although the collapsing strategy (Sec.\,\ref{s:collapsing-strategy}) is still finite.

%System $\ttS$ has other uses: endowed with a validity criterion, it characterizes the set of weakly normalizing terms in an infinitary calculus~\cite{1610.06409}. 

 The techniques that we have developped here build, to the best of our knownledge, the first bridge  between first-order model theory and the study of models of the pure $\lam$-calculus. They are actually modular: we also used them, in a companion paper, to prove that every multiset-based derivation is the collapse of a sequential derivation~\fucite{}. This suggests that these techniques could be used to study the coinductive version of finitary models of the $\lam$-calculs and extend some of their semantical properties to all $\lam$-terms.

\paper{By setting, for all term $t$,
 $\denot{t}_{\mathtt{rel}\infty}=\set{\tri_{\scrR} \juGtt\;|\; \Gam,\tau}$ (\cf \S\;\ref{ss:intro-search-inf-denot}), one defines the \textbf{infinitary version of the relation model}. Theorem\;\ref{th:comp-unsoundness-R-S}  in which, by Theorem \ref{th:comp-unsoundness-R-S}, no term has a trivial denotation, including the mute terms. This model is thus \textbf{non-sensible}\;\cite{Berarducci96} since it does not equate all the non-head normalizing terms (\eg $\Om$ and $\lx.\Om$ of respective order 0 and 1) by Theorem\;\ref{th:order}.
 }
%~\cite{%Jacopini75, BerarducciI93}. This means that this model is \textit{non sensible}~\cite{Berarducci96} and its \textit{equational theory} should be investigated and compared to that of other models~\cite{BucciarelliCFS16}.

We presented a first semantical result about this model (Theorem\;\ref{th:order}), but its equational theory has yet to be studied.
According to the same theorem, this model equates all the closed zero terms. It then differs both from the non-sensible model of Berarducci trees and that of L\'evy-Longo trees, respectively related to $\Lamuuu$ and $\Lamzzu$ in \cite{KennawayKSV97}. This work may suggest a new notion of tree, that could shed some light on Open Problem \#\;18 of TLCA (the problem of finding trees related to various contextual equivalences).

The study of infinitary models (beyond infinite tree models) is at its early stages, but it already provides descriptions of the infinitary behaviors of $\lam$-terms (\cf Grellois-Melli\`es' infinitary model of Linear Logic in \cite{GrelloisM15,GrelloisM15a}). 
The semantical implications of the main theorem (every term is $\scrR$-typable) remain to be understood and the proof techniques presented here can certainly be used to study infinitary models or coinductive/recursive type systems \textit{before}  they are endowed with some validity or guard condition, or maybe to build other models of pure $\lam$-calculus, for instance, to get some semantical proof of the \textit{easiness}~\cite{Jacopini75} of sets of mute terms, as in \cite{BucciarelliCFS16}.

%One may conjecture that this technique \cite{GrelloisM15}

%2% Moreover, derivations of system $\ttS$ collapse on system $\scrD$ (Sec.\;\ref{s:parsing}), but they also collapse on system $\scrR$, the coinductive version of Gardner and de Carvalho's system $\scrR$~\cite{DBLP:conf/tacs/Gardner94,decarvalho07phd}. Thanks to subject reduction and expansion, this yields a {\bf relational model of pure $\lam$-calculus}~\cite{DBLP:conf/csl/BucciarelliEM07} (finite or infinite) in which, by Theorem \ref{th:comp-unsoundness-R-S}, no term has a trivial denotation, including the mute terms~\cite{Jacopini75,  BerarducciI93}. This means that this model is \textit{non sensible}~\cite{Berarducci96} and its \textit{equational theory} should be investigated and compared to that of other models~\cite{BucciarelliCFS16}.

% blah, S pas modèle a cause du manque de compositionnalite
%%% blah blah et parler de modèle de Scott
% trouver ref pour domaine de Scott

\ignore{ %%% ref arxiv à virer  / parler de preprint / redondant
We introduced the method presented in this paper (in short, the use of a finite collapsing strategy in a non-normalizing framework to prove there are no problematic relations between two kinds of elements) to solve another problem: namely, proving that every derivation of $\scrR$ is the collapse of a derivation of $\ttS$~\cite{1610.06399}. 
}

%System $\ttS$ has other uses: in a restriction of the infinitary calculus called $\Lambda^{001}$\;\cite{KennawayKSV97}, the normal forms are the B\"ohm trees that do not contain $\bot$. An adaptation of system $\ttS$ allows us to characterize the set of weakly normalizing terms of this calculus\;\cite{1610.06409}, thus giving a positive answer to  Klop's Problem\;\cite{Tatsuta08}.
%Thus, the weakly normalizing terms of $\Lambda^{001}$ are the terms whose B\"ohm trees do not hold $\bot$. This property is also called \textit{hereditary head-normalization (HHN)}.  If we resort to a validity criterion to discard some \textit{unsound} derivations in $\ttS$, we can characterize the set of HHN terms~\cite{1610.06409}. This gives a positive answer in the coinductive case to a problem referred as Klop's Question~\cite{DBLP:conf/flops/Tatsuta08}.

%%% REINTEGRER POUR ARTICLE Grellois  Buccial LL indexed
%Moreover, sequential intersection may be connected to Grellois and Melli\`es infinitary exponential modality~\cite{GrelloisM15}, as well as Bucciarelli and Ehrhard indexed linear logic~\cite{BucciarelliE01}.

\bibliographystyle{abbrv}
\bibliography{../../../these-pvial/biblioThese}

\end{document}